\documentclass[aps,twocolumn,prx,superscriptaddress,nofootinbib]{revtex4-2}
\usepackage{bm}
\usepackage[T1]{fontenc}
\usepackage{amsmath,amssymb,mathtools,dsfont}
\usepackage{braket}
\usepackage{graphicx}
\usepackage[caption=false]{subfig}
\renewcommand{\thesubfigure}{(\alph{subfigure})}
\usepackage[hypertexnames=false]{hyperref}
\usepackage[capitalize,nameinlink]{cleveref}
\usepackage{autonum}
\makeatletter
\AtBeginDocument{\autonum@generatePatchedReferenceCSL{Cref}}
\makeatother
\usepackage{quantikz}
\usetikzlibrary{arrows.meta,calc}
\usepackage{physics}
\usepackage{algpseudocode}
\usepackage{amsthm}

\newtheorem{lemma}{Lemma}
\newtheorem{theorem}{Theorem}
\theoremstyle{definition}

\newtheorem{definition}{Definition}
\newtheorem{corollary}{Corollary}

\newcommand{\BQP}{\mathsf{BQP}}
\newcommand{\BPP}{\mathsf{BPP}}
\newcommand{\FBQP}{\mathsf{FBQP}}
\newcommand{\poly}{\mathrm{poly}}
\newcommand{\Hhist}{H_{\mathrm{hist}}}
\newcommand{\Dhist}{\Delta_{\mathrm{hist}}}
\newcommand{\pacc}{p_{\mathrm{acc}}}
\newcommand{\ueff}{u_{\mathrm{eff}}}
\newcommand{\Pf}{P_f}
\newcommand{\Oacc}{\widehat O_{\mathrm{acc}}}

\newcounter{algorithm}
\crefname{algorithm}{Algorithm}{Algorithms}

\begin{document}

\noindent
\hspace{\fill} YITP-26-130
\begingroup
\let\newpage\relax
% \maketitle
\endgroup

\title{Chern-number estimation:\\ nearly optimal quantum algorithm and provable quantum speedup}
\author{Soichiro Imamura}\email{imamura-soichiro524@g.ecc.u-tokyo.ac.jp}
\affiliation{Department of Physics, The University of Tokyo, Tokyo 113-0033, Japan}
\author{Shintaro Ae}
\affiliation{Physical and Theoretical Chemistry Laboratory, Department of Chemistry, University of Oxford, South Parks Road, Oxford OX1 3QZ, United Kingdom}
\author{Kazuki Sakamoto}
\affiliation{Graduate School of Engineering Science, The University of Osaka, 1-3 Machikaneyama, Toyonaka, Osaka 560-8531, Japan}
\author{Ryu Hayakawa}
\affiliation{Yukawa Institute for Theoretical Physics \& The Hakubi Center, Kyoto University, Japan}
\author{Chusei Kiumi}
\affiliation{Center for Quantum Information and Quantum Biology,
The University of Osaka, 1-2 Machikaneyama, Toyonaka, Osaka, 560-0043, Japan}

\date{\today}

\begin{abstract}
    Topological phases of matter are characterized by global properties of quantum states beyond conventional local order parameters.
    The Chern number is a central invariant in this characterization, linking the geometry of quantum states to robust physical observables and making its determination key to understanding quantum matter. We construct a nearly optimal quantum algorithm for estimating the Chern number of a smooth Hamiltonian family over a two-dimensional parameter torus with a unique, gapped ground state, including interacting many-body systems.
    The algorithm transports and reuses two ground-state copies prepared at a single reference point and uses generalized quantum signal processing to coherently accumulate geometric phases.
    It achieves a Hamiltonian-oracle evolution time of $\widetilde O(L_xL_y/\Delta_{\min}^3)$, where \(L_x,L_y\) bound the parameter derivatives of the Hamiltonian and \(\Delta_{\min}\) is a known spectral-gap lower bound.
    We prove a matching worst-case oracle lower bound, even when the Chern number is promised to be zero or one, establishing optimality up to polylogarithmic factors.
For local Hamiltonian families with an inverse-polynomial gap and a supplied guiding state, exact Chern-number computation is in \(\FBQP\) and is \(\BQP\)-hard, and merely deciding whether the Chern number is zero or one is already \(\BQP\)-complete.
These results establish how efficiently quantum computers can estimate a quantized topological invariant and provide evidence for quantum advantage in characterizing topological phases.

\end{abstract}

\maketitle

\section{Introduction}
\label{sec:introduction}

\begin{figure*}[t]
  \centering
  \captionsetup[subfloat]{farskip=0pt,nearskip=0pt,listofformat=subsimple}
  \subfloat{%
  \label{fig:intro_overview_a}%
  \begin{minipage}[t]{0.48\textwidth}
    \centering
    \raggedright\textbf{\thesubfigure}\par\vspace{-2pt}
    \centering
\begin{tikzpicture}[
        x=1.25cm,y=1.25cm,
        arrow/.style={-{Latex[length=4pt]},line width=0.6pt},
        lab/.style={align=center,font=\scriptsize},
        wire/.style={line width=0.5pt},
        gate/.style={draw,fill=white,line width=0.5pt,minimum width=0.55cm,minimum height=0.42cm,inner sep=1pt,font=\scriptsize}
    ]
      % Panel canvas: 8.5 cm x 6.2 cm.
      \path[use as bounding box] (-5.77cm,-1.96cm) rectangle (2.73cm,4.24cm);
      % ---------- torus with mesh ----------
      \begin{scope}[shift={(-2.9,1.45)},scale=0.85]
        \draw[gray!60,line width=0.3pt] (1.419,0.019) -- (1.418,0.039) -- (1.415,0.058) -- (1.411,0.077) -- (1.406,0.096) -- (1.399,0.115) -- (1.392,0.133) -- (1.384,0.151) -- (1.374,0.168) -- (1.364,0.185) -- (1.352,0.202) -- (1.340,0.218) -- (1.326,0.233) -- (1.312,0.248) -- (1.297,0.262) -- (1.281,0.276) -- (1.264,0.288) -- (1.247,0.300) -- (1.229,0.311) -- (1.210,0.321) -- (1.191,0.330) -- (1.171,0.339) -- (1.151,0.346) -- (1.130,0.353) -- (1.109,0.358) -- (1.087,0.363) -- (1.066,0.366) -- (1.044,0.369) -- (1.022,0.370) -- (1.000,0.371) -- (0.978,0.370) -- (0.956,0.369) -- (0.934,0.366) -- (0.913,0.363) -- (0.891,0.358) -- (0.870,0.353) -- (0.849,0.346) -- (0.829,0.339) -- (0.809,0.330) -- (0.790,0.321) -- (0.771,0.311) -- (0.753,0.300) -- (0.736,0.288) -- (0.719,0.276) -- (0.703,0.262) -- (0.688,0.248) -- (0.674,0.233) -- (0.660,0.218) -- (0.648,0.202) -- (0.636,0.185) -- (0.626,0.168) -- (0.616,0.151) -- (0.608,0.133) -- (0.601,0.115) -- (0.594,0.096) -- (0.589,0.077) -- (0.585,0.058) -- (0.582,0.039) -- (0.581,0.019);
        \draw[gray!60,line width=0.3pt] (1.225,0.472) -- (1.212,0.484) -- (1.198,0.495) -- (1.184,0.506) -- (1.168,0.515) -- (1.152,0.524) -- (1.135,0.532) -- (1.118,0.539) -- (1.100,0.544) -- (1.082,0.549) -- (1.063,0.553) -- (1.044,0.556) -- (1.024,0.557) -- (1.005,0.558) -- (0.985,0.558) -- (0.964,0.556) -- (0.944,0.554) -- (0.924,0.550) -- (0.904,0.546) -- (0.883,0.541) -- (0.863,0.534) -- (0.843,0.527) -- (0.823,0.518) -- (0.804,0.509) -- (0.785,0.499) -- (0.766,0.488) -- (0.748,0.476) -- (0.730,0.463) -- (0.713,0.450) -- (0.696,0.435) -- (0.680,0.420) -- (0.664,0.405) -- (0.650,0.389) -- (0.636,0.372) -- (0.622,0.354) -- (0.610,0.337) -- (0.598,0.318) -- (0.588,0.300) -- (0.578,0.281) -- (0.569,0.262) -- (0.562,0.242) -- (0.555,0.222) -- (0.549,0.203) -- (0.544,0.183) -- (0.541,0.163) -- (0.538,0.143) -- (0.536,0.124) -- (0.536,0.104) -- (0.536,0.085) -- (0.538,0.066) -- (0.541,0.047) -- (0.544,0.029) -- (0.549,0.011) -- (0.555,-0.007) -- (0.562,-0.024) -- (0.569,-0.040) -- (0.578,-0.056) -- (0.588,-0.071) -- (0.598,-0.086);
        \draw[gray!60,line width=0.3pt] (0.869,0.719) -- (0.856,0.723) -- (0.842,0.726) -- (0.828,0.727) -- (0.814,0.728) -- (0.799,0.728) -- (0.784,0.726) -- (0.769,0.724) -- (0.754,0.720) -- (0.738,0.715) -- (0.723,0.710) -- (0.707,0.703) -- (0.692,0.695) -- (0.676,0.686) -- (0.661,0.676) -- (0.645,0.666) -- (0.630,0.654) -- (0.615,0.642) -- (0.601,0.628) -- (0.586,0.614) -- (0.572,0.599) -- (0.559,0.583) -- (0.545,0.567) -- (0.533,0.550) -- (0.520,0.532) -- (0.508,0.514) -- (0.497,0.496) -- (0.486,0.476) -- (0.476,0.457) -- (0.467,0.437) -- (0.458,0.417) -- (0.450,0.397) -- (0.442,0.376) -- (0.436,0.355) -- (0.430,0.335) -- (0.425,0.314) -- (0.420,0.293) -- (0.417,0.273) -- (0.414,0.252) -- (0.412,0.232) -- (0.411,0.212) -- (0.410,0.193) -- (0.411,0.173) -- (0.412,0.155) -- (0.414,0.136) -- (0.417,0.118) -- (0.420,0.101) -- (0.425,0.085) -- (0.430,0.069) -- (0.436,0.054) -- (0.442,0.039) -- (0.450,0.026) -- (0.458,0.013) -- (0.467,0.001) -- (0.476,-0.010) -- (0.486,-0.020) -- (0.497,-0.029) -- (0.508,-0.037) -- (0.520,-0.044);
        \draw[gray!60,line width=0.3pt] (0.456,0.847) -- (0.448,0.847) -- (0.440,0.845) -- (0.432,0.843) -- (0.424,0.839) -- (0.416,0.834) -- (0.408,0.829) -- (0.399,0.822) -- (0.391,0.814) -- (0.383,0.805) -- (0.374,0.795) -- (0.366,0.783) -- (0.358,0.772) -- (0.349,0.759) -- (0.341,0.745) -- (0.333,0.730) -- (0.325,0.715) -- (0.317,0.698) -- (0.310,0.681) -- (0.302,0.664) -- (0.295,0.646) -- (0.288,0.627) -- (0.282,0.607) -- (0.275,0.587) -- (0.269,0.567) -- (0.263,0.546) -- (0.258,0.526) -- (0.253,0.504) -- (0.248,0.483) -- (0.243,0.461) -- (0.239,0.440) -- (0.236,0.418) -- (0.233,0.397) -- (0.230,0.375) -- (0.227,0.354) -- (0.225,0.333) -- (0.224,0.312) -- (0.223,0.291) -- (0.222,0.271) -- (0.222,0.252) -- (0.222,0.232) -- (0.223,0.214) -- (0.224,0.196) -- (0.225,0.178) -- (0.227,0.162) -- (0.230,0.146) -- (0.233,0.131) -- (0.236,0.116) -- (0.239,0.103) -- (0.243,0.091) -- (0.248,0.079) -- (0.253,0.068) -- (0.258,0.059) -- (0.263,0.050) -- (0.269,0.043) -- (0.275,0.036) -- (0.282,0.031) -- (0.288,0.027) -- (0.295,0.024);
        \draw[gray!60,line width=0.3pt] (0.000,0.888) -- (0.000,0.886) -- (0.000,0.883) -- (0.000,0.879) -- (0.000,0.873) -- (0.000,0.867) -- (0.000,0.859) -- (0.000,0.850) -- (0.000,0.840) -- (0.000,0.829) -- (0.000,0.818) -- (0.000,0.805) -- (0.000,0.791) -- (0.000,0.777) -- (0.000,0.761) -- (0.000,0.745) -- (0.000,0.728) -- (0.000,0.710) -- (0.000,0.692) -- (0.000,0.673) -- (0.000,0.654) -- (0.000,0.634) -- (0.000,0.613) -- (0.000,0.592) -- (0.000,0.571) -- (0.000,0.550) -- (0.000,0.528) -- (0.000,0.506) -- (0.000,0.484) -- (0.000,0.462) -- (0.000,0.440) -- (0.000,0.418) -- (0.000,0.397) -- (0.000,0.375) -- (0.000,0.354) -- (0.000,0.333) -- (0.000,0.312) -- (0.000,0.292) -- (0.000,0.272) -- (0.000,0.253) -- (0.000,0.235) -- (0.000,0.217) -- (0.000,0.200) -- (0.000,0.183) -- (0.000,0.167) -- (0.000,0.152) -- (0.000,0.139) -- (0.000,0.125) -- (0.000,0.113) -- (0.000,0.102) -- (0.000,0.092) -- (0.000,0.083) -- (0.000,0.075) -- (0.000,0.068) -- (0.000,0.062) -- (0.000,0.057) -- (0.000,0.054) -- (0.000,0.051) -- (0.000,0.050);
        \draw[gray!60,line width=0.3pt] (-0.456,0.847) -- (-0.448,0.847) -- (-0.440,0.845) -- (-0.432,0.843) -- (-0.424,0.839) -- (-0.416,0.834) -- (-0.408,0.829) -- (-0.399,0.822) -- (-0.391,0.814) -- (-0.383,0.805) -- (-0.374,0.795) -- (-0.366,0.783) -- (-0.358,0.772) -- (-0.349,0.759) -- (-0.341,0.745) -- (-0.333,0.730) -- (-0.325,0.715) -- (-0.317,0.698) -- (-0.310,0.681) -- (-0.302,0.664) -- (-0.295,0.646) -- (-0.288,0.627) -- (-0.282,0.607) -- (-0.275,0.587) -- (-0.269,0.567) -- (-0.263,0.546) -- (-0.258,0.526) -- (-0.253,0.504) -- (-0.248,0.483) -- (-0.243,0.461) -- (-0.239,0.440) -- (-0.236,0.418) -- (-0.233,0.397) -- (-0.230,0.375) -- (-0.227,0.354) -- (-0.225,0.333) -- (-0.224,0.312) -- (-0.223,0.291) -- (-0.222,0.271) -- (-0.222,0.252) -- (-0.222,0.232) -- (-0.223,0.214) -- (-0.224,0.196) -- (-0.225,0.178) -- (-0.227,0.162) -- (-0.230,0.146) -- (-0.233,0.131) -- (-0.236,0.116) -- (-0.239,0.103) -- (-0.243,0.091) -- (-0.248,0.079) -- (-0.253,0.068) -- (-0.258,0.059) -- (-0.263,0.050) -- (-0.269,0.043) -- (-0.275,0.036) -- (-0.282,0.031) -- (-0.288,0.027) -- (-0.295,0.024);
        \draw[gray!60,line width=0.3pt] (-0.869,0.719) -- (-0.856,0.723) -- (-0.842,0.726) -- (-0.828,0.727) -- (-0.814,0.728) -- (-0.799,0.728) -- (-0.784,0.726) -- (-0.769,0.724) -- (-0.754,0.720) -- (-0.738,0.715) -- (-0.723,0.710) -- (-0.707,0.703) -- (-0.692,0.695) -- (-0.676,0.686) -- (-0.661,0.676) -- (-0.645,0.666) -- (-0.630,0.654) -- (-0.615,0.642) -- (-0.601,0.628) -- (-0.586,0.614) -- (-0.572,0.599) -- (-0.559,0.583) -- (-0.545,0.567) -- (-0.533,0.550) -- (-0.520,0.532) -- (-0.508,0.514) -- (-0.497,0.496) -- (-0.486,0.476) -- (-0.476,0.457) -- (-0.467,0.437) -- (-0.458,0.417) -- (-0.450,0.397) -- (-0.442,0.376) -- (-0.436,0.355) -- (-0.430,0.335) -- (-0.425,0.314) -- (-0.420,0.293) -- (-0.417,0.273) -- (-0.414,0.252) -- (-0.412,0.232) -- (-0.411,0.212) -- (-0.410,0.193) -- (-0.411,0.173) -- (-0.412,0.155) -- (-0.414,0.136) -- (-0.417,0.118) -- (-0.420,0.101) -- (-0.425,0.085) -- (-0.430,0.069) -- (-0.436,0.054) -- (-0.442,0.039) -- (-0.450,0.026) -- (-0.458,0.013) -- (-0.467,0.001) -- (-0.476,-0.010) -- (-0.486,-0.020) -- (-0.497,-0.029) -- (-0.508,-0.037) -- (-0.520,-0.044);
        \draw[gray!60,line width=0.3pt] (-1.225,0.472) -- (-1.212,0.484) -- (-1.198,0.495) -- (-1.184,0.506) -- (-1.168,0.515) -- (-1.152,0.524) -- (-1.135,0.532) -- (-1.118,0.539) -- (-1.100,0.544) -- (-1.082,0.549) -- (-1.063,0.553) -- (-1.044,0.556) -- (-1.024,0.557) -- (-1.005,0.558) -- (-0.985,0.558) -- (-0.964,0.556) -- (-0.944,0.554) -- (-0.924,0.550) -- (-0.904,0.546) -- (-0.883,0.541) -- (-0.863,0.534) -- (-0.843,0.527) -- (-0.823,0.518) -- (-0.804,0.509) -- (-0.785,0.499) -- (-0.766,0.488) -- (-0.748,0.476) -- (-0.730,0.463) -- (-0.713,0.450) -- (-0.696,0.435) -- (-0.680,0.420) -- (-0.664,0.405) -- (-0.650,0.389) -- (-0.636,0.372) -- (-0.622,0.354) -- (-0.610,0.337) -- (-0.598,0.318) -- (-0.588,0.300) -- (-0.578,0.281) -- (-0.569,0.262) -- (-0.562,0.242) -- (-0.555,0.222) -- (-0.549,0.203) -- (-0.544,0.183) -- (-0.541,0.163) -- (-0.538,0.143) -- (-0.536,0.124) -- (-0.536,0.104) -- (-0.536,0.085) -- (-0.538,0.066) -- (-0.541,0.047) -- (-0.544,0.029) -- (-0.549,0.011) -- (-0.555,-0.007) -- (-0.562,-0.024) -- (-0.569,-0.040) -- (-0.578,-0.056) -- (-0.588,-0.071) -- (-0.598,-0.086);
        \draw[gray!60,line width=0.3pt] (-1.419,0.019) -- (-1.418,0.039) -- (-1.415,0.058) -- (-1.411,0.077) -- (-1.406,0.096) -- (-1.399,0.115) -- (-1.392,0.133) -- (-1.384,0.151) -- (-1.374,0.168) -- (-1.364,0.185) -- (-1.352,0.202) -- (-1.340,0.218) -- (-1.326,0.233) -- (-1.312,0.248) -- (-1.297,0.262) -- (-1.281,0.276) -- (-1.264,0.288) -- (-1.247,0.300) -- (-1.229,0.311) -- (-1.210,0.321) -- (-1.191,0.330) -- (-1.171,0.339) -- (-1.151,0.346) -- (-1.130,0.353) -- (-1.109,0.358) -- (-1.087,0.363) -- (-1.066,0.366) -- (-1.044,0.369) -- (-1.022,0.370) -- (-1.000,0.371) -- (-0.978,0.370) -- (-0.956,0.369) -- (-0.934,0.366) -- (-0.913,0.363) -- (-0.891,0.358) -- (-0.870,0.353) -- (-0.849,0.346) -- (-0.829,0.339) -- (-0.809,0.330) -- (-0.790,0.321) -- (-0.771,0.311) -- (-0.753,0.300) -- (-0.736,0.288) -- (-0.719,0.276) -- (-0.703,0.262) -- (-0.688,0.248) -- (-0.674,0.233) -- (-0.660,0.218) -- (-0.648,0.202) -- (-0.636,0.185) -- (-0.626,0.168) -- (-0.616,0.151) -- (-0.608,0.133) -- (-0.601,0.115) -- (-0.594,0.096) -- (-0.589,0.077) -- (-0.585,0.058) -- (-0.582,0.039) -- (-0.581,0.019);
        \draw[gray!60,line width=0.3pt] (-1.312,-0.255) -- (-1.311,-0.236) -- (-1.310,-0.216) -- (-1.307,-0.196) -- (-1.303,-0.176) -- (-1.299,-0.157) -- (-1.293,-0.137) -- (-1.286,-0.117) -- (-1.278,-0.098) -- (-1.270,-0.079) -- (-1.260,-0.060) -- (-1.249,-0.041) -- (-1.238,-0.023) -- (-1.225,-0.005) -- (-1.212,0.012) -- (-1.198,0.029) -- (-1.184,0.045) -- (-1.168,0.061) -- (-1.152,0.076) -- (-1.135,0.090) -- (-1.118,0.104) -- (-1.100,0.117) -- (-1.082,0.128) -- (-1.063,0.140) -- (-1.044,0.150) -- (-1.024,0.159) -- (-1.005,0.167) -- (-0.985,0.175) -- (-0.964,0.181) -- (-0.944,0.187) -- (-0.924,0.191) -- (-0.904,0.195) -- (-0.883,0.197) -- (-0.863,0.198) -- (-0.843,0.199) -- (-0.823,0.198) -- (-0.804,0.196) -- (-0.785,0.194) -- (-0.766,0.190) -- (-0.748,0.185) -- (-0.730,0.179) -- (-0.713,0.172) -- (-0.696,0.165) -- (-0.680,0.156) -- (-0.664,0.146) -- (-0.650,0.136) -- (-0.636,0.125) -- (-0.622,0.112);
        \draw[gray!60,line width=0.3pt] (-1.249,-0.445) -- (-1.260,-0.430) -- (-1.270,-0.415) -- (-1.278,-0.399) -- (-1.286,-0.383) -- (-1.293,-0.366) -- (-1.299,-0.349) -- (-1.303,-0.331) -- (-1.307,-0.312) -- (-1.310,-0.293) -- (-1.311,-0.274) -- (-1.312,-0.255);
        \draw[gray!60,line width=0.3pt] (-1.004,-0.471) -- (-1.004,-0.452) -- (-1.002,-0.432) -- (-1.000,-0.412) -- (-0.998,-0.391) -- (-0.994,-0.371) -- (-0.990,-0.350) -- (-0.984,-0.329) -- (-0.978,-0.309) -- (-0.972,-0.288) -- (-0.964,-0.267) -- (-0.956,-0.247) -- (-0.947,-0.227) -- (-0.938,-0.207) -- (-0.928,-0.187) -- (-0.917,-0.168) -- (-0.906,-0.150) -- (-0.894,-0.132) -- (-0.882,-0.114) -- (-0.869,-0.097) -- (-0.856,-0.081) -- (-0.842,-0.065) -- (-0.828,-0.050) -- (-0.814,-0.036) -- (-0.799,-0.022) -- (-0.784,-0.010) -- (-0.769,0.002) -- (-0.754,0.012) -- (-0.738,0.022) -- (-0.723,0.031) -- (-0.707,0.039) -- (-0.692,0.046) -- (-0.676,0.051) -- (-0.661,0.056) -- (-0.645,0.060) -- (-0.630,0.062) -- (-0.615,0.064) -- (-0.601,0.064) -- (-0.586,0.064) -- (-0.572,0.062) -- (-0.559,0.059) -- (-0.545,0.055);
        \draw[gray!60,line width=0.3pt] (-0.894,-0.708) -- (-0.906,-0.701) -- (-0.917,-0.693) -- (-0.928,-0.684) -- (-0.938,-0.674) -- (-0.947,-0.663) -- (-0.956,-0.651) -- (-0.964,-0.638) -- (-0.972,-0.625) -- (-0.978,-0.610) -- (-0.984,-0.595) -- (-0.990,-0.579) -- (-0.994,-0.563) -- (-0.998,-0.545) -- (-1.000,-0.528) -- (-1.002,-0.509) -- (-1.004,-0.491) -- (-1.004,-0.471);
        \draw[gray!60,line width=0.3pt] (-0.543,-0.616) -- (-0.543,-0.596) -- (-0.543,-0.576) -- (-0.541,-0.556) -- (-0.540,-0.535) -- (-0.538,-0.514) -- (-0.536,-0.492) -- (-0.533,-0.471) -- (-0.530,-0.449) -- (-0.526,-0.428) -- (-0.522,-0.406) -- (-0.517,-0.385) -- (-0.513,-0.363) -- (-0.508,-0.342) -- (-0.502,-0.321) -- (-0.496,-0.300) -- (-0.490,-0.280) -- (-0.484,-0.260) -- (-0.477,-0.241) -- (-0.470,-0.222) -- (-0.463,-0.204) -- (-0.456,-0.186) -- (-0.448,-0.169) -- (-0.440,-0.153) -- (-0.432,-0.137) -- (-0.424,-0.123) -- (-0.416,-0.109) -- (-0.408,-0.096) -- (-0.399,-0.084) -- (-0.391,-0.073) -- (-0.383,-0.063) -- (-0.374,-0.054) -- (-0.366,-0.046) -- (-0.358,-0.039) -- (-0.349,-0.033) -- (-0.341,-0.028) -- (-0.333,-0.025) -- (-0.325,-0.022) -- (-0.317,-0.021) -- (-0.310,-0.021);
        \draw[gray!60,line width=0.3pt] (-0.470,-0.844) -- (-0.477,-0.841) -- (-0.484,-0.837) -- (-0.490,-0.831) -- (-0.496,-0.825) -- (-0.502,-0.817) -- (-0.508,-0.809) -- (-0.513,-0.799) -- (-0.517,-0.788) -- (-0.522,-0.777) -- (-0.526,-0.764) -- (-0.530,-0.751) -- (-0.533,-0.737) -- (-0.536,-0.722) -- (-0.538,-0.706) -- (-0.540,-0.689) -- (-0.541,-0.672) -- (-0.543,-0.654) -- (-0.543,-0.635) -- (-0.543,-0.616);
        \draw[gray!60,line width=0.3pt] (-0.000,-0.667) -- (-0.000,-0.647) -- (-0.000,-0.627) -- (-0.000,-0.606) -- (-0.000,-0.585) -- (-0.000,-0.564) -- (-0.000,-0.542) -- (-0.000,-0.521) -- (-0.000,-0.499) -- (-0.000,-0.477) -- (-0.000,-0.455) -- (-0.000,-0.433) -- (-0.000,-0.411) -- (-0.000,-0.389) -- (-0.000,-0.368) -- (-0.000,-0.347) -- (-0.000,-0.326) -- (-0.000,-0.305) -- (-0.000,-0.285) -- (-0.000,-0.266) -- (-0.000,-0.247) -- (-0.000,-0.229) -- (-0.000,-0.211) -- (-0.000,-0.194) -- (-0.000,-0.178) -- (-0.000,-0.162) -- (-0.000,-0.148) -- (-0.000,-0.134) -- (-0.000,-0.121) -- (-0.000,-0.109) -- (-0.000,-0.099) -- (-0.000,-0.089) -- (-0.000,-0.080) -- (-0.000,-0.072) -- (-0.000,-0.066) -- (-0.000,-0.060) -- (-0.000,-0.056) -- (-0.000,-0.053) -- (-0.000,-0.050);
        \draw[gray!60,line width=0.3pt] (-0.000,-0.889) -- (-0.000,-0.888) -- (-0.000,-0.885) -- (-0.000,-0.882) -- (-0.000,-0.877) -- (-0.000,-0.871) -- (-0.000,-0.864) -- (-0.000,-0.856) -- (-0.000,-0.847) -- (-0.000,-0.837) -- (-0.000,-0.826) -- (-0.000,-0.814) -- (-0.000,-0.800) -- (-0.000,-0.786) -- (-0.000,-0.772) -- (-0.000,-0.756) -- (-0.000,-0.739) -- (-0.000,-0.722) -- (-0.000,-0.704) -- (-0.000,-0.686) -- (-0.000,-0.667);
        \draw[gray!60,line width=0.3pt] (0.543,-0.616) -- (0.543,-0.596) -- (0.543,-0.576) -- (0.541,-0.556) -- (0.540,-0.535) -- (0.538,-0.514) -- (0.536,-0.492) -- (0.533,-0.471) -- (0.530,-0.449) -- (0.526,-0.428) -- (0.522,-0.406) -- (0.517,-0.385) -- (0.513,-0.363) -- (0.508,-0.342) -- (0.502,-0.321) -- (0.496,-0.300) -- (0.490,-0.280) -- (0.484,-0.260) -- (0.477,-0.241) -- (0.470,-0.222) -- (0.463,-0.204) -- (0.456,-0.186) -- (0.448,-0.169) -- (0.440,-0.153) -- (0.432,-0.137) -- (0.424,-0.123) -- (0.416,-0.109) -- (0.408,-0.096) -- (0.399,-0.084) -- (0.391,-0.073) -- (0.383,-0.063) -- (0.374,-0.054) -- (0.366,-0.046) -- (0.358,-0.039) -- (0.349,-0.033) -- (0.341,-0.028) -- (0.333,-0.025) -- (0.325,-0.022) -- (0.317,-0.021) -- (0.310,-0.021);
        \draw[gray!60,line width=0.3pt] (0.470,-0.844) -- (0.477,-0.841) -- (0.484,-0.837) -- (0.490,-0.831) -- (0.496,-0.825) -- (0.502,-0.817) -- (0.508,-0.809) -- (0.513,-0.799) -- (0.517,-0.788) -- (0.522,-0.777) -- (0.526,-0.764) -- (0.530,-0.751) -- (0.533,-0.737) -- (0.536,-0.722) -- (0.538,-0.706) -- (0.540,-0.689) -- (0.541,-0.672) -- (0.543,-0.654) -- (0.543,-0.635) -- (0.543,-0.616);
        \draw[gray!60,line width=0.3pt] (1.004,-0.471) -- (1.004,-0.452) -- (1.002,-0.432) -- (1.000,-0.412) -- (0.998,-0.391) -- (0.994,-0.371) -- (0.990,-0.350) -- (0.984,-0.329) -- (0.978,-0.309) -- (0.972,-0.288) -- (0.964,-0.267) -- (0.956,-0.247) -- (0.947,-0.227) -- (0.938,-0.207) -- (0.928,-0.187) -- (0.917,-0.168) -- (0.906,-0.150) -- (0.894,-0.132) -- (0.882,-0.114) -- (0.869,-0.097) -- (0.856,-0.081) -- (0.842,-0.065) -- (0.828,-0.050) -- (0.814,-0.036) -- (0.799,-0.022) -- (0.784,-0.010) -- (0.769,0.002) -- (0.754,0.012) -- (0.738,0.022) -- (0.723,0.031) -- (0.707,0.039) -- (0.692,0.046) -- (0.676,0.051) -- (0.661,0.056) -- (0.645,0.060) -- (0.630,0.062) -- (0.615,0.064) -- (0.601,0.064) -- (0.586,0.064) -- (0.572,0.062) -- (0.559,0.059) -- (0.545,0.055);
        \draw[gray!60,line width=0.3pt] (0.894,-0.708) -- (0.906,-0.701) -- (0.917,-0.693) -- (0.928,-0.684) -- (0.938,-0.674) -- (0.947,-0.663) -- (0.956,-0.651) -- (0.964,-0.638) -- (0.972,-0.625) -- (0.978,-0.610) -- (0.984,-0.595) -- (0.990,-0.579) -- (0.994,-0.563) -- (0.998,-0.545) -- (1.000,-0.528) -- (1.002,-0.509) -- (1.004,-0.491) -- (1.004,-0.471);
        \draw[gray!60,line width=0.3pt] (1.312,-0.255) -- (1.311,-0.236) -- (1.310,-0.216) -- (1.307,-0.196) -- (1.303,-0.176) -- (1.299,-0.157) -- (1.293,-0.137) -- (1.286,-0.117) -- (1.278,-0.098) -- (1.270,-0.079) -- (1.260,-0.060) -- (1.249,-0.041) -- (1.238,-0.023) -- (1.225,-0.005) -- (1.212,0.012) -- (1.198,0.029) -- (1.184,0.045) -- (1.168,0.061) -- (1.152,0.076) -- (1.135,0.090) -- (1.118,0.104) -- (1.100,0.117) -- (1.082,0.128) -- (1.063,0.140) -- (1.044,0.150) -- (1.024,0.159) -- (1.005,0.167) -- (0.985,0.175) -- (0.964,0.181) -- (0.944,0.187) -- (0.924,0.191) -- (0.904,0.195) -- (0.883,0.197) -- (0.863,0.198) -- (0.843,0.199) -- (0.823,0.198) -- (0.804,0.196) -- (0.785,0.194) -- (0.766,0.190) -- (0.748,0.185) -- (0.730,0.179) -- (0.713,0.172) -- (0.696,0.165) -- (0.680,0.156) -- (0.664,0.146) -- (0.650,0.136) -- (0.636,0.125) -- (0.622,0.112);
        \draw[gray!60,line width=0.3pt] (1.249,-0.445) -- (1.260,-0.430) -- (1.270,-0.415) -- (1.278,-0.399) -- (1.286,-0.383) -- (1.293,-0.366) -- (1.299,-0.349) -- (1.303,-0.331) -- (1.307,-0.312) -- (1.310,-0.293) -- (1.311,-0.274) -- (1.312,-0.255);
        \draw[gray!60,line width=0.3pt] (-1.418,-0.035) -- (-1.412,-0.070) -- (-1.403,-0.104) -- (-1.389,-0.139) -- (-1.372,-0.173) -- (-1.351,-0.206) -- (-1.326,-0.239) -- (-1.297,-0.271) -- (-1.265,-0.303) -- (-1.230,-0.333) -- (-1.191,-0.363) -- (-1.149,-0.392) -- (-1.104,-0.420) -- (-1.055,-0.446) -- (-1.004,-0.471) -- (-0.950,-0.495) -- (-0.894,-0.518) -- (-0.835,-0.539) -- (-0.773,-0.559) -- (-0.710,-0.577) -- (-0.645,-0.594) -- (-0.578,-0.609) -- (-0.509,-0.622) -- (-0.439,-0.634) -- (-0.368,-0.644) -- (-0.295,-0.652) -- (-0.222,-0.658) -- (-0.148,-0.663) -- (-0.074,-0.666) -- (-0.000,-0.667) -- (0.074,-0.666) -- (0.148,-0.663) -- (0.222,-0.658) -- (0.295,-0.652) -- (0.368,-0.644) -- (0.439,-0.634) -- (0.509,-0.622) -- (0.578,-0.609) -- (0.645,-0.594) -- (0.710,-0.577) -- (0.773,-0.559) -- (0.835,-0.539) -- (0.894,-0.518) -- (0.950,-0.495) -- (1.004,-0.471) -- (1.055,-0.446) -- (1.104,-0.420) -- (1.149,-0.392) -- (1.191,-0.363) -- (1.230,-0.333) -- (1.265,-0.303) -- (1.297,-0.271) -- (1.326,-0.239) -- (1.351,-0.206) -- (1.372,-0.173) -- (1.389,-0.139) -- (1.403,-0.104) -- (1.412,-0.070) -- (1.418,-0.035);
        \draw[gray!60,line width=0.3pt] (1.297,0.262) -- (1.295,0.294) -- (1.290,0.326) -- (1.281,0.357) -- (1.269,0.389) -- (1.253,0.420) -- (1.234,0.450) -- (1.211,0.480) -- (1.185,0.510) -- (1.156,0.539);
        \draw[gray!60,line width=0.3pt] (-1.156,0.539) -- (-1.185,0.510) -- (-1.211,0.480) -- (-1.234,0.450) -- (-1.253,0.420) -- (-1.269,0.389) -- (-1.281,0.357) -- (-1.290,0.326) -- (-1.295,0.294) -- (-1.297,0.262) -- (-1.295,0.230) -- (-1.290,0.199) -- (-1.281,0.167) -- (-1.269,0.136) -- (-1.253,0.105) -- (-1.234,0.074) -- (-1.211,0.044) -- (-1.185,0.015) -- (-1.156,-0.014) -- (-1.123,-0.042) -- (-1.088,-0.069) -- (-1.049,-0.096) -- (-1.008,-0.121) -- (-0.964,-0.145) -- (-0.917,-0.168) -- (-0.868,-0.190) -- (-0.816,-0.211) -- (-0.762,-0.230) -- (-0.706,-0.248) -- (-0.648,-0.265) -- (-0.589,-0.280) -- (-0.528,-0.294) -- (-0.465,-0.306) -- (-0.401,-0.317) -- (-0.336,-0.326) -- (-0.270,-0.333) -- (-0.203,-0.339) -- (-0.136,-0.343) -- (-0.068,-0.346) -- (-0.000,-0.347) -- (0.068,-0.346) -- (0.136,-0.343) -- (0.203,-0.339) -- (0.270,-0.333) -- (0.336,-0.326) -- (0.401,-0.317) -- (0.465,-0.306) -- (0.528,-0.294) -- (0.589,-0.280) -- (0.648,-0.265) -- (0.706,-0.248) -- (0.762,-0.230) -- (0.816,-0.211) -- (0.868,-0.190) -- (0.917,-0.168) -- (0.964,-0.145) -- (1.008,-0.121) -- (1.049,-0.096) -- (1.088,-0.069) -- (1.123,-0.042) -- (1.156,-0.014) -- (1.185,0.015) -- (1.211,0.044) -- (1.234,0.074) -- (1.253,0.105) -- (1.269,0.136) -- (1.281,0.167) -- (1.290,0.199) -- (1.295,0.230) -- (1.297,0.262);
        \draw[gray!60,line width=0.3pt] (1.000,0.371) -- (0.999,0.395) -- (0.995,0.420) -- (0.988,0.444) -- (0.978,0.468) -- (0.966,0.492) -- (0.951,0.516) -- (0.934,0.539) -- (0.914,0.562) -- (0.891,0.584) -- (0.866,0.606) -- (0.839,0.627) -- (0.809,0.647) -- (0.777,0.666) -- (0.743,0.685) -- (0.707,0.703) -- (0.669,0.720) -- (0.629,0.736) -- (0.588,0.751) -- (0.545,0.765) -- (0.500,0.777) -- (0.454,0.789) -- (0.407,0.800) -- (0.358,0.809) -- (0.309,0.817) -- (0.259,0.824) -- (0.208,0.830) -- (0.156,0.835) -- (0.105,0.838) -- (0.052,0.840) -- (0.000,0.840) -- (-0.052,0.840) -- (-0.105,0.838) -- (-0.156,0.835) -- (-0.208,0.830) -- (-0.259,0.824) -- (-0.309,0.817) -- (-0.358,0.809) -- (-0.407,0.800) -- (-0.454,0.789) -- (-0.500,0.777) -- (-0.545,0.765) -- (-0.588,0.751) -- (-0.629,0.736) -- (-0.669,0.720) -- (-0.707,0.703) -- (-0.743,0.685) -- (-0.777,0.666) -- (-0.809,0.647) -- (-0.839,0.627) -- (-0.866,0.606) -- (-0.891,0.584) -- (-0.914,0.562) -- (-0.934,0.539) -- (-0.951,0.516) -- (-0.966,0.492) -- (-0.978,0.468) -- (-0.988,0.444) -- (-0.995,0.420) -- (-0.999,0.395) -- (-1.000,0.371) -- (-0.999,0.346) -- (-0.995,0.322) -- (-0.988,0.297) -- (-0.978,0.273) -- (-0.966,0.249) -- (-0.951,0.226) -- (-0.934,0.203) -- (-0.914,0.180) -- (-0.891,0.158) -- (-0.866,0.136) -- (-0.839,0.115) -- (-0.809,0.095) -- (-0.777,0.075) -- (-0.743,0.057) -- (-0.707,0.039) -- (-0.669,0.022) -- (-0.629,0.006) -- (-0.588,-0.009) -- (-0.545,-0.023) -- (-0.500,-0.036) -- (-0.454,-0.047) -- (-0.407,-0.058) -- (-0.358,-0.067) -- (-0.309,-0.076) -- (-0.259,-0.083) -- (-0.208,-0.088) -- (-0.156,-0.093) -- (-0.105,-0.096) -- (-0.052,-0.098) -- (-0.000,-0.099) -- (0.052,-0.098) -- (0.105,-0.096) -- (0.156,-0.093) -- (0.208,-0.088) -- (0.259,-0.083) -- (0.309,-0.076) -- (0.358,-0.067) -- (0.407,-0.058) -- (0.454,-0.047) -- (0.500,-0.036) -- (0.545,-0.023) -- (0.588,-0.009) -- (0.629,0.006) -- (0.669,0.022) -- (0.707,0.039) -- (0.743,0.057) -- (0.777,0.075) -- (0.809,0.095) -- (0.839,0.115) -- (0.866,0.136) -- (0.891,0.158) -- (0.914,0.180) -- (0.934,0.203) -- (0.951,0.226) -- (0.966,0.249) -- (0.978,0.273) -- (0.988,0.297) -- (0.995,0.322) -- (0.999,0.346) -- (1.000,0.371);
        \draw[gray!60,line width=0.3pt] (0.703,0.262) -- (0.702,0.279) -- (0.699,0.297) -- (0.694,0.314) -- (0.688,0.331) -- (0.679,0.348) -- (0.669,0.364) -- (0.656,0.380) -- (0.642,0.396) -- (0.626,0.412) -- (0.609,0.427) -- (0.590,0.442) -- (0.569,0.456) -- (0.546,0.470) -- (0.522,0.483) -- (0.497,0.496) -- (0.470,0.507) -- (0.442,0.519) -- (0.413,0.529) -- (0.383,0.539) -- (0.352,0.548) -- (0.319,0.556) -- (0.286,0.564) -- (0.252,0.570) -- (0.217,0.576) -- (0.182,0.581) -- (0.146,0.585) -- (0.110,0.588) -- (0.073,0.590) -- (0.037,0.592) -- (0.000,0.592) -- (-0.037,0.592) -- (-0.073,0.590) -- (-0.110,0.588) -- (-0.146,0.585) -- (-0.182,0.581) -- (-0.217,0.576) -- (-0.252,0.570) -- (-0.286,0.564) -- (-0.319,0.556) -- (-0.352,0.548) -- (-0.383,0.539) -- (-0.413,0.529) -- (-0.442,0.519) -- (-0.470,0.507) -- (-0.497,0.496) -- (-0.522,0.483) -- (-0.546,0.470) -- (-0.569,0.456) -- (-0.590,0.442) -- (-0.609,0.427) -- (-0.626,0.412) -- (-0.642,0.396) -- (-0.656,0.380) -- (-0.669,0.364) -- (-0.679,0.348) -- (-0.688,0.331) -- (-0.694,0.314) -- (-0.699,0.297) -- (-0.702,0.279) -- (-0.703,0.262) -- (-0.702,0.245) -- (-0.699,0.228) -- (-0.694,0.211) -- (-0.688,0.194) -- (-0.679,0.177) -- (-0.669,0.160) -- (-0.656,0.144) -- (-0.642,0.128) -- (-0.626,0.112);
        \draw[gray!60,line width=0.3pt] (0.626,0.112) -- (0.642,0.128) -- (0.656,0.144) -- (0.669,0.160) -- (0.679,0.177) -- (0.688,0.194) -- (0.694,0.211) -- (0.699,0.228) -- (0.702,0.245) -- (0.703,0.262);
        \draw[gray!60,line width=0.3pt] (0.579,0.014) -- (0.577,0.028) -- (0.573,0.043) -- (0.567,0.057) -- (0.560,0.070) -- (0.552,0.084) -- (0.541,0.098) -- (0.530,0.111) -- (0.517,0.124) -- (0.502,0.136) -- (0.486,0.148) -- (0.469,0.160) -- (0.451,0.171) -- (0.431,0.182) -- (0.410,0.193) -- (0.388,0.202) -- (0.365,0.212) -- (0.341,0.220) -- (0.316,0.228) -- (0.290,0.236) -- (0.263,0.243) -- (0.236,0.249) -- (0.208,0.254) -- (0.179,0.259) -- (0.150,0.263) -- (0.121,0.266) -- (0.091,0.269) -- (0.061,0.271) -- (0.030,0.272) -- (0.000,0.272) -- (-0.030,0.272) -- (-0.061,0.271) -- (-0.091,0.269) -- (-0.121,0.266) -- (-0.150,0.263) -- (-0.179,0.259) -- (-0.208,0.254) -- (-0.236,0.249) -- (-0.263,0.243) -- (-0.290,0.236) -- (-0.316,0.228) -- (-0.341,0.220) -- (-0.365,0.212) -- (-0.388,0.202) -- (-0.410,0.193) -- (-0.431,0.182) -- (-0.451,0.171) -- (-0.469,0.160) -- (-0.486,0.148) -- (-0.502,0.136) -- (-0.517,0.124) -- (-0.530,0.111) -- (-0.541,0.098) -- (-0.552,0.084) -- (-0.560,0.070) -- (-0.567,0.057) -- (-0.573,0.043) -- (-0.577,0.028) -- (-0.579,0.014);
        \draw[gray!60,line width=0.3pt] (0.569,-0.068) -- (0.546,-0.055) -- (0.522,-0.041) -- (0.497,-0.029) -- (0.470,-0.017) -- (0.442,-0.006) -- (0.413,0.005) -- (0.383,0.015) -- (0.352,0.024) -- (0.319,0.032) -- (0.286,0.039) -- (0.252,0.046) -- (0.217,0.052) -- (0.182,0.057) -- (0.146,0.061) -- (0.110,0.064) -- (0.073,0.066) -- (0.037,0.067) -- (0.000,0.068) -- (-0.037,0.067) -- (-0.073,0.066) -- (-0.110,0.064) -- (-0.146,0.061) -- (-0.182,0.057) -- (-0.217,0.052) -- (-0.252,0.046) -- (-0.286,0.039) -- (-0.319,0.032) -- (-0.352,0.024) -- (-0.383,0.015) -- (-0.413,0.005) -- (-0.442,-0.006) -- (-0.470,-0.017) -- (-0.497,-0.029) -- (-0.522,-0.041) -- (-0.546,-0.055) -- (-0.569,-0.068);
        \draw[gray!60,line width=0.3pt] (-1.049,-0.620) -- (-1.008,-0.645) -- (-0.964,-0.670) -- (-0.917,-0.693) -- (-0.868,-0.715) -- (-0.816,-0.735) -- (-0.762,-0.755) -- (-0.706,-0.773) -- (-0.648,-0.790) -- (-0.589,-0.805) -- (-0.528,-0.818) -- (-0.465,-0.831) -- (-0.401,-0.841) -- (-0.336,-0.850) -- (-0.270,-0.858) -- (-0.203,-0.864) -- (-0.136,-0.868) -- (-0.068,-0.870) -- (-0.000,-0.871) -- (0.068,-0.870) -- (0.136,-0.868) -- (0.203,-0.864) -- (0.270,-0.858) -- (0.336,-0.850) -- (0.401,-0.841) -- (0.465,-0.831) -- (0.528,-0.818) -- (0.589,-0.805) -- (0.648,-0.790) -- (0.706,-0.773) -- (0.762,-0.755) -- (0.816,-0.735) -- (0.868,-0.715) -- (0.917,-0.693) -- (0.964,-0.670) -- (1.008,-0.645) -- (1.049,-0.620);
        \draw[line width=0.5pt] (-1.418,-0.035) -- (-1.412,-0.070) -- (-1.403,-0.104) -- (-1.389,-0.139) -- (-1.372,-0.173) -- (-1.351,-0.206) -- (-1.326,-0.239) -- (-1.297,-0.271) -- (-1.265,-0.303) -- (-1.230,-0.333) -- (-1.191,-0.363) -- (-1.149,-0.392) -- (-1.104,-0.420) -- (-1.055,-0.446) -- (-1.004,-0.471) -- (-0.950,-0.495) -- (-0.894,-0.518) -- (-0.835,-0.539) -- (-0.773,-0.559) -- (-0.710,-0.577) -- (-0.645,-0.594) -- (-0.578,-0.609) -- (-0.509,-0.622) -- (-0.439,-0.634) -- (-0.368,-0.644) -- (-0.295,-0.652) -- (-0.222,-0.658) -- (-0.148,-0.663) -- (-0.074,-0.666) -- (-0.000,-0.667) -- (0.074,-0.666) -- (0.148,-0.663) -- (0.222,-0.658) -- (0.295,-0.652) -- (0.368,-0.644) -- (0.439,-0.634) -- (0.509,-0.622) -- (0.578,-0.609) -- (0.645,-0.594) -- (0.710,-0.577) -- (0.773,-0.559) -- (0.835,-0.539) -- (0.894,-0.518) -- (0.950,-0.495) -- (1.004,-0.471) -- (1.055,-0.446) -- (1.104,-0.420) -- (1.149,-0.392) -- (1.191,-0.363) -- (1.230,-0.333) -- (1.265,-0.303) -- (1.297,-0.271) -- (1.326,-0.239) -- (1.351,-0.206) -- (1.372,-0.173) -- (1.389,-0.139) -- (1.403,-0.104) -- (1.412,-0.070) -- (1.418,-0.035);
        \draw[line width=0.5pt] (0.579,0.014) -- (0.577,0.028) -- (0.573,0.043) -- (0.567,0.057) -- (0.560,0.070) -- (0.552,0.084) -- (0.541,0.098) -- (0.530,0.111) -- (0.517,0.124) -- (0.502,0.136) -- (0.486,0.148) -- (0.469,0.160) -- (0.451,0.171) -- (0.431,0.182) -- (0.410,0.193) -- (0.388,0.202) -- (0.365,0.212) -- (0.341,0.220) -- (0.316,0.228) -- (0.290,0.236) -- (0.263,0.243) -- (0.236,0.249) -- (0.208,0.254) -- (0.179,0.259) -- (0.150,0.263) -- (0.121,0.266) -- (0.091,0.269) -- (0.061,0.271) -- (0.030,0.272) -- (0.000,0.272) -- (-0.030,0.272) -- (-0.061,0.271) -- (-0.091,0.269) -- (-0.121,0.266) -- (-0.150,0.263) -- (-0.179,0.259) -- (-0.208,0.254) -- (-0.236,0.249) -- (-0.263,0.243) -- (-0.290,0.236) -- (-0.316,0.228) -- (-0.341,0.220) -- (-0.365,0.212) -- (-0.388,0.202) -- (-0.410,0.193) -- (-0.431,0.182) -- (-0.451,0.171) -- (-0.469,0.160) -- (-0.486,0.148) -- (-0.502,0.136) -- (-0.517,0.124) -- (-0.530,0.111) -- (-0.541,0.098) -- (-0.552,0.084) -- (-0.560,0.070) -- (-0.567,0.057) -- (-0.573,0.043) -- (-0.577,0.028) -- (-0.579,0.014);
        \fill[blue!25] (-0.000,-0.667) -- (0.056,-0.666) -- (0.111,-0.665) -- (0.167,-0.662) -- (0.222,-0.658) -- (0.277,-0.654) -- (0.331,-0.648) -- (0.385,-0.642) -- (0.439,-0.634) -- (0.491,-0.625) -- (0.543,-0.616) -- (0.543,-0.616) -- (0.543,-0.586) -- (0.541,-0.556) -- (0.539,-0.524) -- (0.536,-0.492) -- (0.531,-0.460) -- (0.526,-0.428) -- (0.520,-0.395) -- (0.513,-0.363) -- (0.505,-0.331) -- (0.496,-0.300) -- (0.496,-0.300) -- (0.449,-0.309) -- (0.401,-0.317) -- (0.352,-0.324) -- (0.303,-0.330) -- (0.253,-0.335) -- (0.203,-0.339) -- (0.152,-0.342) -- (0.102,-0.345) -- (0.051,-0.346) -- (-0.000,-0.347) -- (-0.000,-0.347) -- (-0.000,-0.379) -- (-0.000,-0.411) -- (-0.000,-0.444) -- (-0.000,-0.477) -- (-0.000,-0.510) -- (-0.000,-0.542) -- (-0.000,-0.575) -- (-0.000,-0.606) -- (-0.000,-0.637) -- (-0.000,-0.667) -- cycle;
        \draw[blue!70!black,line width=0.6pt] (-0.000,-0.667) -- (0.056,-0.666) -- (0.111,-0.665) -- (0.167,-0.662) -- (0.222,-0.658) -- (0.277,-0.654) -- (0.331,-0.648) -- (0.385,-0.642) -- (0.439,-0.634) -- (0.491,-0.625) -- (0.543,-0.616) -- (0.543,-0.616) -- (0.543,-0.586) -- (0.541,-0.556) -- (0.539,-0.524) -- (0.536,-0.492) -- (0.531,-0.460) -- (0.526,-0.428) -- (0.520,-0.395) -- (0.513,-0.363) -- (0.505,-0.331) -- (0.496,-0.300) -- (0.496,-0.300) -- (0.449,-0.309) -- (0.401,-0.317) -- (0.352,-0.324) -- (0.303,-0.330) -- (0.253,-0.335) -- (0.203,-0.339) -- (0.152,-0.342) -- (0.102,-0.345) -- (0.051,-0.346) -- (-0.000,-0.347) -- (-0.000,-0.347) -- (-0.000,-0.379) -- (-0.000,-0.411) -- (-0.000,-0.444) -- (-0.000,-0.477) -- (-0.000,-0.510) -- (-0.000,-0.542) -- (-0.000,-0.575) -- (-0.000,-0.606) -- (-0.000,-0.637) -- (-0.000,-0.667) -- cycle;
        \node[lab] at (0,-1.25) {parameter torus\\with finite mesh};
      \end{scope}

      % ---------- cell operations and repetition ----------
      \node[draw=orange!85!black,rounded corners=6pt,line width=0.5pt,inner sep=0pt,
            minimum width=3.15cm,minimum height=4.4cm] (repeatbox) at (0.6,1.6) {};
      \node[draw=black!65,rounded corners=6pt,line width=0.5pt,inner sep=0pt,
            minimum width=2.8cm,minimum height=3.6cm] (cellbox) at (0.6,1.45) {};
      \node[font=\scriptsize] at ($(cellbox.north)+(0,0.21)$) {repeat for all cells};
      \begin{scope}[shift={(0.6,1.80)}]
        \node[draw=blue!70!black,line width=0.6pt,inner sep=0pt,outer sep=0pt,
              minimum width=2.25cm,minimum height=2.25cm] (meshcell) at (0,0) {};
        \fill (-0.9,-0.9) circle (2pt);
        \fill ( 0.9,-0.9) circle (2pt);
        % (1) phase accumulation
        \draw[arrow] (-0.74,-0.74) -- (0.74,-0.74) -- (0.74,0.74) -- (-0.74,0.74) -- (-0.74,-0.50);
        \node[lab] at (0,0) {(1) phase\\accumulation};
        % (2) state transport
        \draw[arrow] (-0.9,-1.07) .. controls (-0.36,-1.33) and (0.36,-1.33) .. (0.9,-1.07);
        \node[lab] at (0,-1.50) {(2) state transport};
      \end{scope}
      \draw[blue!70!black,line width=0.4pt,dashed] (-2.9+0.85*0.496,1.45-0.85*0.300) -- (meshcell.north west);
      \draw[blue!70!black,line width=0.4pt,dashed] (-2.9+0.85*0.543,1.45-0.85*0.616) -- (meshcell.south west);

      % ---------- circuit ----------
      \begin{scope}[shift={(-3.3,-1.3)}]
        \node[font=\small,left] at (0,0.6) {\(\ket{+}\)};
        \node[font=\small,left] at (0,0) {\(\ket{u(\mathbf k_0)}^{\otimes2}\)};
        \draw[wire] (0,0.6) -- (4.75,0.6);
        \draw[wire] (0,0) -- (3.4,0);
        \draw[wire] (0.08,0.51) -- (0.20,0.69);
        \draw[wire] (0.08,-0.09) -- (0.20,0.09);
        \node[gate,draw=orange!85!black] (g1) at (0.5,0) {};
        \node[gate,draw=orange!85!black] (g2) at (1.3,0) {};
        \node[font=\small,fill=white,inner sep=2pt] at (2.0,0) {\(\cdots\)};
        \node[gate,draw=orange!85!black] (g3) at (2.7,0) {};
        \foreach \x/\gatename in {0.5/g1,1.3/g2,2.7/g3}{
          \fill (\x,0.6) circle (1.8pt);
          \draw[wire] (\x,0.6) -- (\gatename.north);
        }
        \node[gate,minimum width=0.8cm] (qft) at (3.55,0.6) {$\mathrm{QFT}^{\dagger}$};
        % meter
        \node[draw,fill=white,line width=0.5pt,minimum width=0.5cm,minimum height=0.42cm,inner sep=0pt] (meter) at (4.5,0.6) {};
        \draw[wire] (4.35,0.5) arc (180:0:0.15);
        \draw[wire,-{Latex[length=2pt]}] (4.5,0.5) -- (4.65,0.72);
        \node[font=\small,right=2pt] at (meter.east) {\(=C\)};
        % zoom lines
        \draw[orange!85!black,line width=0.4pt,dashed] (g1.north west) -- ($(repeatbox.south west)+(6pt,0)$);
        \draw[orange!85!black,line width=0.4pt,dashed] (g1.north east) -- ($(repeatbox.south east)+(-6pt,0)$);
      \end{scope}
  \end{tikzpicture}
  \end{minipage}%
  }\hfill
  \subfloat{%
  \label{fig:intro_overview_b}%
  \begin{minipage}[t]{0.48\textwidth}
    \centering
    \raggedright\textbf{\thesubfigure}\par\vspace{-2pt}
    \centering
\begin{tikzpicture}[
      x=1.29cm,y=1.164cm,
      arrow/.style={-{Latex[length=4pt]},line width=0.6pt},
      circuit/.style={align=center,font=\small},
      phase/.style={align=center,font=\scriptsize},
      full/.style={draw,rounded corners=2pt,align=center,
        inner sep=4pt,minimum width=8.4cm,text width=7.8cm,font=\small}
  ]
    % Panel canvas: 8.5 cm x 6.2 cm.
    \path[use as bounding box] (-4.25cm,-2.75cm) rectangle (4.25cm,3.45cm);
    \node[circuit] (accepting) at (-2.0,2.55)
      {Accepting circuit\\\(\pacc\ge2/3\)};
    \node[circuit] (rejecting) at (2.0,2.55)
      {Rejecting circuit\\\(\pacc\le1/3\)};

    \node[full] (fullfamily) at (0,1.2)
      {Local Hamiltonian family \(H(\mathbf{k})\)\\
       encoding the circuit};
    \draw[arrow] (accepting.south) -- (accepting.south |- fullfamily.north);
    \draw[arrow] (rejecting.south) -- (rejecting.south |- fullfamily.north);

    \fill[blue!6] (-3,-0.5) rectangle (0,0);
    \fill[blue!18] (-3,-0.5) rectangle (-1,0);
    \fill[gray!10] (0,-0.5) rectangle (3,0);
    \fill[gray!28] (1,-0.5) rectangle (3,0);
    \draw[line width=0.6pt] (-3,-0.5) -- (3,-0.5);
    \draw[dashed,line width=0.7pt] (0,-1.15) -- (0,0);
    \draw[arrow] (accepting.south |- fullfamily.south) -- (-2.0,0.08);
    \draw[arrow] (rejecting.south |- fullfamily.south) -- (2.0,0.08);
    \node[align=center,font=\scriptsize,fill=white,inner sep=1pt]
      at (0,0.4) {Chern number};

    \node[phase] at (-1.55,-0.88)
      {topological phase\\\(C=1\)};
    \node[phase] at (1.55,-0.88)
      {trivial phase\\\(C=0\)};
    \node[align=center,font=\scriptsize,fill=white,inner sep=1pt]
      at (0,-1.5) {gap closing};
    \draw[arrow] (3,-1.9) -- (-3,-1.9);
    \node[font=\scriptsize,below] at (-2.7,-1.95) {\(\pacc=1\)};
    \node[font=\scriptsize,below] at (2.7,-1.95) {\(\pacc=0\)};
    \node[font=\small,below] at (0,-1.95) {\(\pacc\)};
  \end{tikzpicture}
  \end{minipage}%
  }
  \caption{Quantum algorithm and computational hardness of the Chern number $C$
  for a nondegenerate ground state on a parameter torus.
  \textbf{(a)} Generalized quantum signal processing accumulates fractional
  geometric phases cell by cell and transports two ground-state copies
  from the lower-left vertex of each cell to that of the next cell in the scan,
  using each copy in turn as the other's energy reference.
  An inverse quantum Fourier transform recovers $C$ with bounded error.
  For $L_x,L_y\geq\Delta_{\min}/2$, the oracle evolution time
  $\widetilde O\qty(L_xL_y/\Delta_{\min}^3)$ is worst-case optimal up to
  polylogarithmic factors, even for $C\in\qty{0,1}$; state-preparation costs
  are not included in the oracle time.
  Here $L_\mu$ bounds $\norm{\partial_{k_\mu}H}$, and $\Delta_{\min}>0$
  is a known spectral-gap lower bound.
  \textbf{(b)} Feynman--Kitaev encoding makes circuit acceptance control an
  effective Qi--Wu--Zhang mass, yielding $C=1$ for $\pacc\geq2/3$ and $C=0$
  for $\pacc\leq1/3$.
  An exact Schrieffer--Wolff transformation and gapped interpolation transfer
  this invariant to the full 5-local Hamiltonian.
  With polynomial norm and derivative bounds, an inverse-polynomial gap,
  and a supplied circuit preparing a guiding state with inverse-polynomial
  ground-state overlap at $\mathbf{k}_0$, exact computation from succinct local descriptions is
  in $\FBQP$ and $\BQP$-hard; its binary restriction is $\BQP$-complete.
  }
  \label{fig:intro_overview}
\end{figure*}

Topological phases of matter reveal distinctions between quantum states that cannot be captured by conventional local order parameters, motivating their description in terms of global invariants~\cite{hasan_kane_topological_2010}.
Among these invariants, the Chern number plays a central role by connecting the geometry of
parameter-dependent quantum states to observable properties of quantum
matter~\cite{simon_holonomy_1983,berry_quantal_1984}.
It determines quantized Hall and adiabatic pumping
responses~\cite{thouless_hall_1982,thouless_pump_1983,lohse_pump_2016,nakajima_pump_2016}
and the net chirality of edge modes in Chern
insulators~\cite{hatsugai_edge_1993}.
Chern topology also places fundamental constraints on how quantum states can vary across parameter
space: a nonzero band Chern number prevents the construction of an orthonormal basis of
exponentially localized Wannier states~\cite{brouder_wannier_2007} and imposes lower bounds on the integrated quantum metric, which quantifies how rapidly quantum states change under parameter variations~\cite{ozawa_mera_metric_2021}.
These geometric constraints can bound the superfluid weight in topological flat-band
systems~\cite{peotta_superfluidity_2015}.

For interacting systems, single-particle band topology is generally insufficient to characterize
the many-body state.
A corresponding invariant can instead be defined by varying phases inserted across the periodic
boundaries; the resulting many-body Chern number characterizes the quantized Hall
response~\cite{niu_thouless_wu_1985,goldman_ozawa_hall_2024}.
Recent studies of interaction-enabled
pumping~\cite{viebahn_interactions_pumping_2024}
and correlated Chern phases in moir\'e materials, including MoTe$_2$ and
graphene~\cite{wang_fci_mote2_2024,lu_graphene_fqah_2024},
motivate computing these invariants in interacting systems.
Such computations can also help assess the stability of correlated topological phases against
competing orders~\cite{chen_nonabelian_mote2_2025,he_competing_mote2_2025}.
Finite-size effects and the computational cost of resolving interacting many-body states
motivate methods capable of estimating Chern numbers in larger systems.
Reliable estimates at these scales could clarify the emergence and stability of correlated
topological phases and guide the design of materials and quantum simulators with desired
topological properties.

Classical methods for Chern-number estimation include discretized eigenstate
overlaps~\cite{FukuiHatsugaiSuzuki2005}, real-space markers for band
systems~\cite{bianco_resta_marker_2011}, and Green-function formulations for interacting
insulators~\cite{wang_zhang_green_2012}.
Many-body calculations employ exact diagonalization on finite
clusters~\cite{wang_fci_mote2_2024,chen_nonabelian_mote2_2025} and density-matrix renormalization
group methods that extract quantized responses through flux
insertion~\cite{grushin_fci_2015,he_competing_mote2_2025}.
Several strategies reduce the computational cost: gauge-invariant discretizations can recover the
integer on coarse meshes~\cite{FukuiHatsugaiSuzuki2005}, tensor networks compress
suitably structured states~\cite{orus_tensor_networks_2014}, and boundary-twist insensitivity can permit accurate estimates without
full twist-space integration in sufficiently large gapped
systems~\cite{watanabe_twist_2018,kudo_chern_2019}.
Despite these advances, the exponential growth of the many-body Hilbert space presents a potential
obstacle to classical calculations, suggesting that quantum computation could offer an advantage.
However, quantization and geometric structure can simplify Chern-number estimation, so a quantum
advantage is not evident from the Hilbert-space dimension alone.

This raises central questions: can quantum algorithms achieve a quantum advantage in Chern-number
estimation, how efficiently can they perform this task, and what fundamental limits constrain their
performance?
For a nondegenerate eigenstate followed around a closed loop of Hamiltonian parameters, the
Berry phase is the geometric part of the acquired phase, distinct from the energy-dependent
dynamical phase~\cite{berry_quantal_1984}.
Quantum approaches to the related problem of Berry-phase estimation include gate-based adiabatic
simulation~\cite{murta_berry_2020}, variational methods~\cite{tamiya_berry_2021}, and techniques for
suppressing adiabatic phase errors~\cite{kiumi_adiabatic_2026a}.
A decision formulation is $\BQP$-complete under
suitable gap, regularity, and guiding-state promises~\cite{hayakawa_berry_complexity_2025}.
That result concerns a generally continuous phase defined modulo a full turn; it does not establish the
complexity of determining an integer Chern number, which is invariant under gap-preserving
deformations. More directly, Ref.~\cite{Niedermeier2024} proposes adiabatic circuits for Chern-number estimation
and demonstrates them for noninteracting two-band models.
Although that work discusses finite-time and measurement errors, it does not establish an overall
runtime bound in terms of the spectral gap, Hamiltonian derivatives, and target failure probability.
Establishing both constructive runtime guarantees and matching lower bounds is therefore essential
to determining the optimal quantum resources for Chern-number estimation.

We obtain two main results for the same Chern-number problem,
using distinct computational models.

\paragraph*{\textnormal{\textbf{Nearly optimal quantum algorithm.}}}
In the Hamiltonian-oracle model, our algorithm returns the exact integer Chern number
with a prescribed target failure probability (\cref{thm:nonadiabatic_algorithm}).
Up to polylogarithmic factors in the inverse failure probability and problem parameters,
its evolution time is bounded by the product of the two first-derivative bounds divided
by the cube of the spectral-gap lower bound.
The costs of ground-state preparation at a single fixed reference point
and of input-independent gates are not included in this oracle-time bound.
For any fixed positive target failure probability below one half, we prove a matching
worst-case lower bound under the same oracle access and cost convention
(\cref{thm:chern_lower_bound}).
The algorithm is therefore optimal up to polylogarithmic factors, and the lower bound
persists even when the Chern number is promised to be either zero or one
(\cref{cor:binary_chern_lower_bound}).

Our algorithm replaces the adiabatic transport of previous Chern-number estimation
circuits~\cite{Niedermeier2024} with generalized-quantum-signal-processing (GQSP),
requiring no adiabatic approximation.
It prepares two ground-state copies at a single reference point and transports them
along the parameter torus, using each as a local energy reference for the other,
without further state-preparation calls.
Since the total geometric phase is an integer number of full turns and is invisible
to direct accumulation, we instead accumulate fractional phases coherently over a finite mesh
and read out the integer with an inverse quantum Fourier transform (QFT);
see \cref{fig:intro_overview_a}.

\paragraph*{\textnormal{\textbf{Complexity from a succinct classical input.}}}
For local Hamiltonian families given by succinct classical descriptions under the promises
of \cref{def:guided_cne}, exact Chern-number computation is in $\FBQP$, the
function analogue of $\BQP$, and is $\BQP$-hard
(\cref{thm:guided_cne_complexity}).
These promises include polynomial norm and first-derivative bounds, an inverse-polynomial
gap, and a supplied guiding-state circuit with inverse-polynomial overlap.
Restricting the Chern number to zero or one gives a $\BQP$-complete decision problem
(\cref{cor:guided_cne_bqp_complete}).
The hardness proof encodes a quantum circuit into a local Hamiltonian family whose
ground-state Chern number is $1$ or $0$ according to whether the circuit accepts or
rejects (\cref{fig:intro_overview_b}).
Here uniform quantum gate complexity includes state preparation and Hamiltonian simulation
from the classical input.
Our $\BQP$-hardness result gives complexity-theoretic evidence of a quantum advantage:
unless $\BPP=\BQP$, no classical polynomial-time algorithm solves this promise problem in the worst-case.

Together, these results open a route to using quantum computers to identify and characterize interacting topological phases 
that are difficult to study classically.

This paper is organized as follows.
In \cref{sec:chern_definition}, we introduce the problem setting and state our main results.
In \cref{sec:upper_bound_algorithm}, we present the quantum algorithm for Chern-number estimation.
We establish the oracle lower bound in \cref{sec:chern_lower_bound} and study the computational
complexity of the exact function problem in \cref{sec:bqp_hardness}.
Finally, \cref{sec:discussion} discusses the implications of our results and directions for future
work.

\section{Problem Setting and Main Results}
\label{sec:chern_definition}

\subsection{Problem setting and computational models}

\noindent\textbf{Chern-number estimation.}
We consider a smooth finite-dimensional Hamiltonian family $H(\mathbf{k})$ parametrized by a
two-dimensional torus $X:=\mathbb{T}^2$, represented by $[0,2\pi]^2$ with periodic boundary
conditions.
Here $\mathbf{k}:=(k_x,k_y)$ denotes general periodic parameters, such as the momentum
of a Bloch Hamiltonian or boundary twist angles of an interacting many-body system.
Let $E_0(\mathbf{k})$ denote the ground-state energy.
Throughout the main cost analysis, we assume that the ground state is nondegenerate and separated
from the remaining spectrum by a positive gap.
Writing $E_m(\mathbf{k})$ for the remaining eigenvalues, define
\begin{align}
    \Delta(\mathbf{k})
    &:=
    \min_{m\ne 0}\abs{E_m(\mathbf{k})-E_0(\mathbf{k})},
    \label{eq:gap_def}
\end{align}
and let $\Delta_{\min}>0$ denote a supplied lower bound with
$\Delta(\mathbf{k})\geq\Delta_{\min}$ for all $\mathbf{k}\in X$.
Only this bound enters the algorithm and the cost analysis; the true
minimum gap is never required.

The Berry connection and Berry curvature are
\begin{align}
    A_\mu(\mathbf{k})
    &:= i\ev{\partial_{k_\mu}}{u(\mathbf{k})},
    \qquad \mu\in\{x,y\},
    \\
    \Omega(\mathbf{k})
    &:=
    \partial_{k_x}A_y(\mathbf{k})
    -
    \partial_{k_y}A_x(\mathbf{k}),
    \label{eq:berry_connection_curvature_def}
\end{align}
where $\ket{u(\mathbf{k})}$ is a normalized ground state chosen in a smooth local gauge.
A local gauge fixes the otherwise arbitrary phase of this eigenstate
smoothly on a neighborhood.  When the Chern number is nonzero, no single
smooth periodic choice exists over the whole torus.
The Chern number is
\begin{align}
    C
    &:=
    \frac{1}{2\pi}
    \int_X d^2\mathbf{k}\,\Omega(\mathbf{k})
    \in\mathbb{Z}.
    \label{eq:chern_number_def}
\end{align}

Let $P(\mathbf{k}):=\ket{u(\mathbf{k})}\bra{u(\mathbf{k})}$ denote the ground-state projector.
The Berry curvature has the gauge-invariant representation
\begin{align}
    \Omega(\mathbf{k})
    &=
    i\operatorname{Tr}\qty(
        P[\partial_{k_x}P,\partial_{k_y}P]
    ).
    \label{eq:berry_curvature_projector_form}
\end{align}
Let $L_x,L_y>0$ be supplied upper bounds on the Hamiltonian derivatives,
satisfying
\begin{align}
    \norm{\partial_{k_\mu}H(\mathbf{k})}
    &\leq L_\mu,
    \qquad \mu\in\qty{x,y},
    \label{eq:hamiltonian_derivative_bounds}
\end{align}
for every $\mathbf{k}\in X$.

We study the same ground-state Chern-number problem, with invariant $C[P]:=C$,
in two computational models.
\Cref{thm:nonadiabatic_algorithm,thm:chern_lower_bound} concern
Hamiltonian-oracle complexity under a common access model, whereas
\cref{thm:guided_cne_complexity} concerns the succinct classical-input
complexity of the same invariant.
For estimation, the required output $\widehat C$ satisfies $\Pr[|\widehat C-C|<1/2]\geq1-\eta$;
rounding then recovers the integer $C$.

\noindent\textbf{Hamiltonian-oracle model.}
The model provides coherent parameter addressing, controlled forward and
inverse Hamiltonian evolution, and ground-state preparation only
at the fixed reference point $\mathbf{k}_0:=(0,0)$.
Coherent parameter addressing means that a quantum address register
stores a mesh-point index and selects the corresponding Hamiltonian
acting on the system register. The address may be in a superposition:
each branch undergoes $e^{\pm iH(\mathbf{k})t}$ for its specified
$\mathbf{k}$, preserving coherence between branches without measuring
the address register. \Cref{eq:alg_hamiltonian_oracle} gives the
corresponding block-diagonal oracle.

The algorithm and lower bound in
\cref{thm:nonadiabatic_algorithm,thm:chern_lower_bound}
use the following common Hamiltonian-oracle model.
Let $\mathcal O_H$ denote the oracle in
\cref{eq:alg_hamiltonian_oracle}.
Introducing a control qubit $\mathrm c$, define
\begin{align}
    \widetilde{\mathcal O}_H
    :=\ket{1}\!\bra{1}_{\mathrm c}\otimes\mathcal O_H.
    \label{eq:controlled_hamiltonian_oracle}
\end{align}
Evolution under this oracle implements controlled Hamiltonian
evolution; fixing the control qubit to $\ket1$ gives uncontrolled
evolution.
The algorithm evolves according to
\begin{align}
    i\frac{d}{dt}\ket{\psi_H(t)}
    &=\qty[H_D(t)+g(t)\widetilde{\mathcal O}_H]
      \ket{\psi_H(t)},
    \notag\\
    g(t)&\in\mathbb R,\qquad \abs{g(t)}\leq1,
    \label{eq:hamiltonian_oracle_evolution}
\end{align}
where $H_D(t)$ is an arbitrary input-independent
driver~\cite{farhi_quantum_2008,mochon_hamiltonian_2007}.
Negative $g(t)$ permits inverse evolution.
Input-independent gates can implement coherent control conditions
and route the oracle to different data registers.

We measure Hamiltonian-oracle cost by the total evolution time
supplied by the oracle.
For final physical time $T$, define
\begin{align}
    T_{\mathrm{oracle}}
    :=\int_0^Tdt\,\abs{g(t)}.
    \label{eq:oracle_time_definition}
\end{align}
This resource measure includes neither an additional
Hamiltonian-norm factor nor the gate cost of implementing
Hamiltonian evolution.
Input-independent gates and driver evolution incur no
Hamiltonian-oracle-time cost.
The cost of ground-state preparation at $\mathbf{k}_0$ is not
included in $T_{\mathrm{oracle}}$.

The algorithm realizes its oracle calls within this model by
setting $H_D(t)=0$ and $g(t)=\pm1$ during each call, and applying
input-independent gates with $g(t)=0$.
Consequently, a forward or inverse oracle call of duration $|t|$
contributes $|t|$ to \cref{eq:oracle_time_definition}, including
when controlled or applied with coherent parameter addressing.

\noindent\textbf{Succinct classical-input model.}
The input consists of a succinct classical description of a local
Hamiltonian family with efficiently evaluable coefficients and a
guiding-state preparation circuit, under the promises of
\cref{def:guided_cne}: inverse-polynomial gap and guiding-state overlap,
and polynomial norm and first-derivative bounds.
We consider uniform polynomial-time quantum computation, counting the
gate costs of state preparation and Hamiltonian simulation from the
classical input. Hamiltonian evolution is not supplied as an oracle.

\noindent\textbf{Complexity classes.}
The class \(\BQP\) consists of promise decision problems solvable with
bounded error by a uniform polynomial-size quantum circuit family.
Its function analogue \(\FBQP\) consists of functions whose classical
output is produced correctly with bounded error by such a circuit family.
We call a function problem \(\BQP\)-hard when every problem in \(\BQP\)
reduces to evaluating it with polynomial-time classical pre- and
postprocessing.
All complexity statements below are understood on inputs satisfying the
stated promises.

\subsection{Hamiltonian-oracle upper and lower bounds}

The explicit algorithm is described in
\cref{sec:upper_bound_algorithm}.
The early-return condition is $\min\qty{L_x,L_y}<\Delta_{\min}/2$,
as in \cref{eq:alg_zero_check}.
Here $\widetilde O$ omits polylogarithmic factors in the inverse
failure probability $1/\eta$ and the problem parameters.

\begin{theorem}[Chern-number estimation algorithm]
\label{thm:nonadiabatic_algorithm}
Let $H(\mathbf{k})$ be a smooth finite-dimensional Hamiltonian
family on $X:=\mathbb T^2$. Suppose the following conditions hold:
\begin{enumerate}
    \item The ground state is nondegenerate throughout $X$.

    \item Bounds $\Delta_{\min},H_{\max},L_x,L_y>0$ are supplied, with
    \begin{align}
        \Delta(\mathbf{k})&\geq\Delta_{\min},
        \notag\\
        \norm{H(\mathbf{k})}&\leq H_{\max},
        \notag\\
        \norm{\partial_{k_\mu}H(\mathbf{k})}&\leq L_\mu,
        \qquad \mu\in\qty{x,y},
        \label{eq:alg_input_bounds}
    \end{align}
    for every $\mathbf{k}\in X$.

    \item Controlled forward and inverse Hamiltonian evolution
    with coherent parameter addressing is available, together
    with a coherent circuit preparing the ground state
    $\ket{u(\mathbf{k}_0)}$ at $\mathbf{k}_0:=(0,0)$.
\end{enumerate}
Then \cref{alg:chern_estimation} returns the exact ground-state Chern number $C$ 
defined in \cref{eq:chern_number_def} with probability at least $1-\eta$ for every $0<\eta<1/2$.
For the supplied bounds and target failure probability $\eta$, let
$T_{\mathrm{upper}}$ denote the worst-case Hamiltonian-oracle time of
\cref{alg:chern_estimation}, measured according to
\cref{eq:oracle_time_definition}.
In the early-return case defined by
\cref{eq:alg_zero_check}, no Hamiltonian evolution is used, so
\begin{align}
    T_{\mathrm{upper}}=0.
    \label{eq:alg_runtime_zero}
\end{align}
In the quantum case,
\begin{align}
    T_{\mathrm{upper}}
    =\widetilde O\qty(\frac{L_xL_y}{\Delta_{\min}^3}).
    \label{eq:alg_runtime}
\end{align}
\end{theorem}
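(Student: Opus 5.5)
The plan has four steps: handle the early return, reduce $C$ to a mesh plaquette sum, realise that sum with GQSP transport and phase accumulation, and read out the integer by phase estimation.

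\emph{Early return.} When $\min\{L_x,L_y\}<\Delta_{\min}/2$, say $L_x<\Delta_{\min}/2$, I show that $C=0$. By the Riesz-projector representation,
\[
\norm{\partial_{k_x}P}\le 2L_x/\Delta_{\min}<1 .
\]
This bound makes the family of projectors along every $k_x$-loop too rigid to wind. The rigidity should force the Berry phases of the $k_x$-loops to be independent of $k_y$ modulo $2\pi$ with zero net winding, so the curvature integrates to zero. An alternative route is \cref{eq:berry_curvature_projector_form} combined with the transition-function picture: the transition function in the $x$-direction is homotopically trivial. Either way, returning $0$ is correct and costs no oracle time.

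\emph{Discretisation.} Otherwise, with $L_x,L_y\ge\Delta_{\min}/2$, I choose an $N_x\times N_y$ mesh with spacings $\delta_\mu$ satisfying
\[
L_\mu\delta_\mu\lesssim\Delta_{\min}\,\epsilon .
\]
This keeps neighbouring ground states close: $\norm{P(\mathbf k)-P(\mathbf k')}\lesssim\epsilon$. On such a mesh the Fukui--Hatsugai--Suzuki plaquette phases $\theta_{ij}=\arg\prod\braket{u_a|u_b}$ are well defined, lie in $(-\pi,\pi)$, and sum exactly to $2\pi C$. This is the standard gauge-invariant lattice Chern number, valid once every plaquette phase is below $\pi$. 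The number of cells is $N_xN_y=\Theta(L_xL_y/\Delta_{\min}^2)$, up to the constant in $\epsilon$.

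\emph{Cell routine.} For each cell I implement, with GQSP, an approximate ground-state projector (a filter) at a target mesh point. The filter uses Hamiltonian evolution of duration $\widetilde O(1/\Delta_{\min})$, with polylog factors from the precision. The energy reference is supplied by the other transported copy: the filter uses the relative time evolution $e^{-i(H\otimes 1 - 1\otimes H)t}$, or equivalently a phase-estimation-style threshold. This avoids any need to know $E_0(\mathbf k)$ and any dependence on $H_{\max}$ beyond logarithms.

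Transporting a copy along the four edges of a cell by successive projections imprints the plaquette phase $e^{i\theta_{ij}}$, together with an amplitude loss of order $1-O(\epsilon^2)$ per step. Fixed-point oblivious amplitude amplification, or the standard Zeno-style product of nearly-unitary projections, renders each step unitary on the relevant subspace up to error $\epsilon^2$. Controlling the step on the QFT register lets the accumulated phase $\sum\theta_{ij}=2\pi C$ be scaled to $2\pi C\cdot m/M$, which serves the phase-estimation readout.

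\emph{Readout and cost.} I run Kitaev/QFT phase estimation with $M=O(|C|_{\max})$ powers. Here $|C|\lesssim\int|\Omega|/2\pi\le L_xL_y/\Delta_{\min}^2$, so $\log M$ register qubits suffice, since powers are obtained by accumulating $m/M$ fractional phases rather than repeating the loop $M$ times. I repeat $O(\log 1/\eta)$ times and take the median.

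Total error is (number of cells)$\times$(per-step error), which must be $O(1)$. Choosing the GQSP precision $1/\mathrm{poly}$ costs only logarithmic factors. Total oracle time is then (cells) $\times\,\widetilde O(1/\Delta_{\min})=\widetilde O(L_xL_y/\Delta_{\min}^3)$.

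\emph{Main obstacle.} The hardest step is making the cell-by-cell transport coherent and unitary, so that errors add linearly rather than amplitudes decaying multiplicatively, while the phase per cell remains exactly gauge-invariant. My first attempt will bound $\norm{P_{b}P_{a}|_{\mathrm{range}}-\text{ideal isometry}}\le O(\epsilon^2)$ and accumulate over the $\sim N_xN_y$ steps of the snake scan, choosing $\epsilon=\Theta(1)$ small but constant.
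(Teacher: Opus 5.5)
Your outline matches the paper in several places: a mesh of $\Theta(L_xL_y/\Delta_{\min}^2)$ cells, relative evolution against the second copy as energy reference, and fractional phases read out by an inverse QFT. The core step, however, has a genuine gap.

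\textbf{Fractional phase and transport.} Controlling a cell loop on a readout qubit only kicks back $e^{i\theta_{ij}}$, and $\prod e^{i\theta_{ij}}=e^{2\pi iC}=1$ for every control pattern. To obtain $e^{2\pi imC/M}$ you must implement the principal power $L_{\mathcal C}^{m/M}$ of each cell's loop operator, and you give no mechanism for this. Projection-based transport cannot supply one. It is not unitary, so there is no loop spectrum to take a power of. Its amplitude loss $1-\abs{\braket{u(a)}{u(b)}}\approx\norm{P(a)-P(b)}^2/2$ is a fixed geometric quantity set by the mesh, not an implementation error. Your own accounting, $O(\epsilon^2)$ per step over $N_xN_y\sim L_xL_y/(\Delta_{\min}^2\epsilon^2)$ steps, gives total error $\Theta(L_xL_y/\Delta_{\min}^2)$ independently of $\epsilon$. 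Zeno-type subdivision removes this loss only at polynomially larger cost. Any amplitude-amplification repair would also have to reproduce the Pancharatnam phase exactly and leave a loop unitary whose whole spectrum avoids the branch cut.

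The missing idea is an exactly unitary, phase-faithful transport. The paper uses the direct rotation $D_{wv}=(\mathcal R_w\mathcal R_v)^{1/2}$, built from GQSP reflections $\mathcal R_v=2P(v)-I$. It maps $\ket{u(v)}$ exactly to $P(w)\ket{u(v)}/\norm{P(w)\ket{u(v)}}$. The loop $L_{\mathcal C}=D_{ad}D_{dc}D_{cb}D_{ba}$ is then unitary, has $\ket{u(a)}$ as an eigenvector with eigenphase $-\phi_{\mathcal C}$, and satisfies $\norm{L_{\mathcal C}-I}\le1/4$ on the whole space. Fractional-power GQSP therefore realizes $L_{\mathcal C}^{\alpha}$ with $O(\log(1/\epsilon))$ queries uniformly in $\alpha$. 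Only implementation errors then accumulate linearly, and these can be reduced at polylogarithmic cost.

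\textbf{Early return.} This step is not a proof, and as set up it cannot work. The Riesz bound $\norm{\partial_{k_x}P}\le2L_x/\Delta_{\min}<1$ does not imply $C=0$. The flattened QWZ Hamiltonian $H_1$ in the lower-bound construction has $\norm{\partial_{k_\mu}P}\le1/2$ in both directions, yet $C=1$. Nor are $k_x$-loop Berry phases independent of $k_y$; their $k_y$-derivative is the curvature integrated along the loop.

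You need the sharp reduced-resolvent bound $\norm{\partial_{k_x}P}\le L_x/\Delta_{\min}<1/2$. Fix a smooth periodic reference $\ket{u_{\mathrm{ref}}(k_y)}$ on the circle $k_x=0$, and horizontally transport the ground state along the shorter $k_x$-arc from $(0,k_y)$, of parameter length at most $\pi$. The transported state then moves less than $\pi/2$ on the unit sphere, so it is never orthogonal to $\ket{u_{\mathrm{ref}}(k_y)}$. Hence $P(\mathbf k)\ket{u_{\mathrm{ref}}(k_y)}/\norm{P(\mathbf k)\ket{u_{\mathrm{ref}}(k_y)}}$ is a global smooth periodic section, and $C=0$.

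\textbf{Lattice identity.} ``Every plaquette phase below $\pi$'' is vacuous for principal arguments. The exact identity requires resolving the $2\pi$ ambiguity against a continuum quantity. The paper does this by pairing shared-edge connection integrals into $E_{\mathcal C}$, with $e^{iE_{\mathcal C}}=e^{-i\phi_{\mathcal C}}$, $\abs{E_{\mathcal C}}<\pi$, and $\abs{\phi_{\mathcal C}}<\pi/4$.
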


For the lower bound, write $a_\mu:=L_\mu/\Delta_{\min}$ for
$\mu\in\qty{x,y}$.

\begin{theorem}[Chern-number estimation lower bound]
\label{thm:chern_lower_bound}
Let $\mathfrak{H}$ be a class of smooth finite-dimensional Hamiltonian
families on $X:=\mathbb T^2$. Suppose the following conditions hold:
\begin{enumerate}
    \item For every $H\in\mathfrak{H}$, the ground state is nondegenerate
    throughout $X$.

    \item Bounds $\Delta_{\min},H_{\max},L_x,L_y>0$ are supplied, with
    \begin{align}
        \Delta(\mathbf{k})&\geq\Delta_{\min},
        \notag\\
        \norm{H(\mathbf{k})}&\leq H_{\max},
        \label{eq:lower_bound_norm_constraint}\\
        \norm{\partial_{k_\mu}H(\mathbf{k})}&\leq L_\mu,
        \qquad \mu\in\qty{x,y},
        \label{eq:lower_bound_derivative_constraints}
    \end{align}
    for every $\mathbf{k}\in X$ and every $H\in\mathfrak{H}$.

    \item Controlled forward and inverse Hamiltonian evolution
    with coherent parameter addressing is available, together
    with a coherent circuit preparing the ground state
    $\ket{u(\mathbf{k}_0)}$ at $\mathbf{k}_0:=(0,0)$.
    Ground-state preparation is allowed only at this reference point.
    The Hamiltonian-oracle access model and cost convention are defined in
    \cref{eq:hamiltonian_oracle_evolution,eq:oracle_time_definition}.

    \item If $a_x,a_y\geq1/2$, $\mathfrak{H}$ contains
    the corresponding two-band hard family constructed in
    \cref{sec:lower_bound_hard_instance}, either directly
    or after adding the same trivial unoccupied bands to every Hamiltonian
    in that family.
     In this case, the supplied values of $L_x,L_y,\Delta_{\min}$, and $H_{\max}$ are
    held fixed across the hard family.
    All its Hamiltonians are accessed on the same oracle grid defined in
    \cref{eq:alg_mesh}.
\end{enumerate}
For the supplied bounds and any fixed $0<\eta<1/2$, let
$T_{\mathrm{lower}}$ denote the lower bound specified below on the
worst-case Hamiltonian-oracle time of any algorithm that estimates the
ground-state Chern number to additive error less than $1/2$ with
probability at least $1-\eta$ on every $H\in\mathfrak{H}$.
In the early-return case $\min\qty{a_x,a_y}<1/2$, we have $C=0$
for every $H\in\mathfrak{H}$, and returning zero attains the lower bound
\begin{align}
    T_{\mathrm{lower}}=0.
    \label{eq:chern_lower_bound_theorem_zero}
\end{align}
In the quantum case $a_x,a_y\geq1/2$, every such algorithm has
worst-case Hamiltonian-oracle time
\begin{align}
    T_{\mathrm{lower}}
    &=c_\eta\frac{L_xL_y}{\Delta_{\min}^3},
    \label{eq:chern_lower_bound_theorem_result}\\
    c_\eta&:=\frac{1-2\sqrt{\eta(1-\eta)}}4>0.
    \label{eq:lower_bound_uniform_constant}
\end{align}
The constant $c_\eta$ is independent of $H_{\max}$ and of the
distances of $a_x,a_y$ from $1/2$.
\end{theorem}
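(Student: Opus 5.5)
The proof splits into the two regimes of the statement. In the early-return case I will show that $C=0$ directly. In the quantum case I will combine a two-instance indistinguishability (hybrid) argument with an explicit two-band construction.

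\emph{Early-return case.} Assume without loss of generality $a_x=L_x/\Delta_{\min}<1/2$. Let $\ket{u}$ be any local ground-state gauge and write $Q=1-P$. First-order perturbation theory gives
\begin{align}
    \norm{Q\,\partial_{k_x}\ket{u}}\leq\norm{\partial_{k_x}H}/\Delta\leq a_x .
\end{align}
Hence the Fubini--Study speed of $P(\cdot,k_y)$ along each $k_x$-loop is at most $a_x$. On the circle of length $2\pi$, every point $(k_x,k_y)$ is joined to $(0,k_y)$ by an arc of length at most $\pi$. Its Fubini--Study distance from $P(0,k_y)$ is therefore at most $\pi a_x<\pi/2$, so $\bra{u(0,k_y)}u(\mathbf{k})\rangle\neq0$. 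Next choose a smooth periodic gauge $\ket{u(0,k_y)}$ along the single circle $k_x=0$; this exists because every line bundle over $S^1$ is trivial. The vector $P(\mathbf{k})\ket{u(0,k_y)}$ is then a nowhere-vanishing smooth periodic section of the ground-state line bundle on all of $\mathbb T^2$. This makes the bundle trivial, so $C=0$ for every $H\in\mathfrak H$. Returning zero is therefore always correct and uses no oracle time.

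\emph{Quantum case: hybrid bound.} Take two members $H^{(0)},H^{(1)}$ of the hard family with Chern numbers $0$ and $1$. Both share the same reference ground state $\ket{u(\mathbf k_0)}$, since they agree near $\mathbf k_0$. In the model of \cref{eq:hamiltonian_oracle_evolution} the two runs differ only through the oracle term $g(t)\widetilde{\mathcal O}_H$. Because $\mathcal O_H$ is block-diagonal over the grid \cref{eq:alg_mesh}, a Duhamel/hybrid estimate gives
\begin{align}
    \norm{\ket{\psi^{(0)}(T)}-\ket{\psi^{(1)}(T)}}
    \leq T_{\mathrm{oracle}}\,\delta,
    \qquad
    \delta:=\max_{\mathbf k\in\text{grid}}\norm{H^{(0)}(\mathbf k)-H^{(1)}(\mathbf k)} .
\end{align}
Any algorithm meeting the specification must distinguish the two final states with success probability at least $1-\eta$. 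The Helstrom bound then forces $|\langle\psi^{(0)}|\psi^{(1)}\rangle|\leq2\sqrt{\eta(1-\eta)}$, which gives
\begin{align}
    \norm{\psi^{(0)}-\psi^{(1)}}^2\geq2\bigl(1-2\sqrt{\eta(1-\eta)}\bigr).
\end{align}
It follows that $T_{\mathrm{oracle}}\gtrsim(1-2\sqrt{\eta(1-\eta)})/\delta$. The claimed constant $c_\eta$ will result once I exhibit a pair with $\delta\leq 4\Delta_{\min}^3/(L_xL_y)$ up to the precise normalization. Appending identical trivial unoccupied bands changes neither $\delta$ nor the Chern numbers, which covers the ``after adding trivial bands'' clause.

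\emph{Quantum case: the hard pair.} I would build two-band Qi--Wu--Zhang-type Hamiltonians $H^{(j)}=\mathbf d^{(j)}(\mathbf k)\cdot\boldsymbol\sigma$. These have a common background and differ only inside a patch around one Dirac-like point. In that patch, $\mathbf d\approx(L_x\,\delta k_x,\,L_y\,\delta k_y,\,m)$ with the mass $m$ of opposite sign in the two instances, and the sign change is compensated elsewhere so that the gap $2|\mathbf d|\geq\Delta_{\min}$ holds everywhere. The derivative bounds fix the patch to have linear sizes of order $\Delta_{\min}/L_x$ and $\Delta_{\min}/L_y$. I would place it relative to the oracle grid so that the difference $\mathbf d^{(0)}-\mathbf d^{(1)}$ evaluated on grid points is suppressed. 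The intended heuristic for the size of the suppression is the patch area, $\sim\Delta_{\min}^2/(L_xL_y)$, times the energy scale $\Delta_{\min}$, giving $\delta\sim\Delta_{\min}^3/(L_xL_y)$. I expect this construction to be the main obstacle. It requires simultaneously exact control of the gap, the derivative bounds with fixed supplied $L_\mu,\Delta_{\min},H_{\max}$, $C\in\{0,1\}$, and the precise grid-level difference, uniformly in how close $a_x,a_y$ are to $1/2$. My first attempt would be a smooth mass profile interpolating on the scale of the grid spacing. If a single pair does not give enough suppression, the fallback is to use many shifted copies of the patch and an adversary-type averaging over positions.
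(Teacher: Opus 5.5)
Your early-return argument is essentially the paper's zero-Chern certificate and is fine. The quantum case has a genuine gap: a two-instance hybrid bound can never give more than $O(1/\Delta_{\min})$, because no admissible pair with different Chern numbers has a small grid difference $\delta$.

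To see this, let $D:=\max_{\mathbf k}\|H^{(0)}(\mathbf k)-H^{(1)}(\mathbf k)\|$. If $D<\Delta_{\min}/2$, Weyl's inequality shows that $(1-s)H^{(0)}+sH^{(1)}$ has gap at least $\Delta_{\min}-2D>0$ for all $s\in[0,1]$, so the two Chern numbers coincide. Hence $D\geq\Delta_{\min}/2$. Both instances satisfy the supplied derivative bounds, so each varies by at most $\Delta_{\min}/32$ within a cell of the oracle grid (\cref{eq:alg_cell_bounds}). Therefore $\delta\geq\Delta_{\min}/2-\Delta_{\min}/16=7\Delta_{\min}/16$, and your estimate yields at best $T_{\mathrm{oracle}}\gtrsim\sqrt{2(1-2\sqrt{\eta(1-\eta)})}/\Delta_{\min}$.

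Your area heuristic fails because the hybrid bound involves the operator norm of $\widetilde{\mathcal O}_{H^{(0)}}-\widetilde{\mathcal O}_{H^{(1)}}$, which is a maximum over addresses, not an integral. Flipping a Dirac mass while keeping $|\mathbf d|\geq\Delta_{\min}/2$ changes $\mathbf d$ by about $\Delta_{\min}$ on a patch of linear size $\sim\Delta_{\min}/L_\mu$, and such a patch contains many grid points. A two-instance argument therefore matches the theorem only when $a_xa_y=O(1)$; this is exactly the paper's corner pair with $M=1$.

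The missing idea is to obtain the $a_xa_y$ factor from many orthogonal address regions rather than from a small perturbation. The paper partitions the torus into $M\geq a_xa_y/4$ disjoint regions: tiles in the interior case, slabs in the boundary cases, and the whole torus at the corner. Each region carries an independent bit that inserts a unit-charge texture, so $C(H_z)=\sum_j z_j$, and neighbors $z,z^{(j)}$ differ only on region $j$, by at most $\Delta_{\min}$.

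Write $\beta_\eta=2\sqrt{\eta(1-\eta)}$. The progress function $W$, summed over all $M$ flip directions, must reach $2(1-\beta_\eta)M$. Meanwhile, orthogonality of the region projectors and the AM--GM inequality give $|dW/dt|\leq2\Delta_{\min}|g(t)|$, independently of $M$. Integrating yields $T_{\mathrm{oracle}}\geq(1-\beta_\eta)M/\Delta_{\min}\geq c_\eta L_xL_y/\Delta_{\min}^3$.

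Your fallback must have this counting (Hamming-weight) structure. A single defect at an unknown position versus none is a search-type family, and it only supports a $\sqrt{M}$-type bound, as in unstructured search. You would also still need explicit smooth textures with derivative budget $n_\mu/[2(1-\varepsilon)]\leq a_\mu$ that remain valid uniformly as $a_\mu\to1/2$. The paper supplies these with flat interpolations and the tile/slab/corner split.
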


Both theorems use coherent parameter addressing, controlled forward and
inverse Hamiltonian evolution, and ground-state preparation only at
$\mathbf{k}_0:=(0,0)$, with the same Hamiltonian-oracle-time accounting.
Neither oracle-time bound includes the costs of state preparation
or input-independent gates.
This makes the bounds directly comparable.
On the input classes covered by \cref{thm:chern_lower_bound}, the algorithm
is optimal up to polylogarithmic factors for general integer Chern-number
estimation. The early-return regime uses no Hamiltonian evolution,
as stated in \cref{eq:alg_runtime_zero,eq:chern_lower_bound_theorem_zero}.
The following corollary gives the same asymptotic lower bound for the
binary promise under an additional hard-family inclusion assumption.

\begin{corollary}[Binary-Chern lower bound]
\label{cor:binary_chern_lower_bound}
Under the assumptions of \cref{thm:chern_lower_bound}, with $a_x,a_y\geq1/2$, suppose additionally
that $\mathfrak{H}$ contains the binary-Chern hard family of
\cref{sec:lower_bound_binary_family}, with the same convention for
multiband embedding.
Even if the algorithm is required to succeed only on inputs in
$\mathfrak{H}$ satisfying $C\in\qty{0,1}$, its worst-case oracle time obeys
\begin{align}
    T_{\mathrm{lower}}
    &=\frac{c_\eta}{3}\frac{L_xL_y}{\Delta_{\min}^3}
    \notag\\
    &=\frac{1-2\sqrt{\eta(1-\eta)}}{12}
        \frac{L_xL_y}{\Delta_{\min}^3}.
    \label{eq:binary_chern_lower_bound}
\end{align}
Thus the same asymptotic lower bound holds under this binary promise,
with the same oracle and reference-state access assumptions.
\end{corollary}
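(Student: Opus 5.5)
We will prove \cref{cor:binary_chern_lower_bound} by reducing it to the same adversary/hybrid argument that underlies \cref{thm:chern_lower_bound}. The only change is that the two hard instances must now have Chern numbers $0$ and $1$ rather than an arbitrary pair differing by one.

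\textbf{Construction of the pair.} The binary-Chern hard family of \cref{sec:lower_bound_binary_family} will be built from the two-band hard family of \cref{sec:lower_bound_hard_instance}. Start from a trivial background Hamiltonian $H^{(0)}$ with $C=0$. Obtain $H^{(1)}$ by inserting a single localized ``topological bump'' that changes the Chern number by one, so that $C[H^{(1)}]=1$. The bump is supported in a parameter patch where the family deviates from the background.

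The budget $L_x,L_y,\Delta_{\min}$ is held fixed across the family. We lose a factor $3$ relative to the general case because the background must itself consume part of the derivative budget. The requirement $C\in\{0,1\}$ means we cannot use the free winding that the general family exploits, so we spend, say, a third of each $L_\mu$ on a background and bump compatible with the gap $\Delta_{\min}$. Concretely, the perturbation $V:=H^{(1)}-H^{(0)}$ should be supported on a region of area $\Theta(\Delta_{\min}^2/(L_xL_y))$, with amplitude of order $\Delta_{\min}$, after rescaling each $L_\mu\to L_\mu/3$.

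\textbf{Query-time bound.} Any algorithm with oracle time $T$ produces final states $\ket{\psi_0},\ket{\psi_1}$ on the two inputs. The standard hybrid argument for continuous-time oracles gives
\[
\norm{\ket{\psi_0}-\ket{\psi_1}}\le\int_0^T|g(t)|\,\norm{\widetilde{\mathcal O}_{H^{(1)}}-\widetilde{\mathcal O}_{H^{(0)}}}\,dt .
\]
Here the oracles are block-diagonal over the grid \cref{eq:alg_mesh}, and they differ only on the mesh points inside the bump. The reference-point preparation is identical on both inputs, since the bump avoids $\mathbf{k}_0$. The argument therefore reduces to the same quantitative lemma used for \cref{thm:chern_lower_bound}: the distinguishing power per unit oracle time is at most $\Delta_{\min}^3/(L_xL_y)$ times a constant. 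In the general theorem this yields $T\ge c_\eta L_xL_y/\Delta_{\min}^3$, where $c_\eta$ comes from the Helstrom bound: success probability $1-\eta$ on both inputs requires $|\braket{\psi_0}{\psi_1}|\le 2\sqrt{\eta(1-\eta)}$. We reuse that lemma verbatim with the rescaled budgets, and the rescaling produces the prefactor $c_\eta/3$.

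\textbf{Main obstacle.} The hard step is verifying that the binary family genuinely satisfies every promise with the rescaled constants. This means checking $\Delta(\mathbf{k})\ge\Delta_{\min}$ everywhere, including inside the bump where the band inversion occurs. It also means checking $\norm{\partial_{k_\mu}H}\le L_\mu$ for the sum of background and bump, which follows by the triangle inequality if each uses at most its allotted fraction of the budget. Finally, one must confirm that $C[H^{(0)}]=0$ and $C[H^{(1)}]=1$ exactly. The Chern numbers follow from a local degree computation for the two-band map $\mathbf{k}\mapsto \hat d(\mathbf{k})$: the background covers no solid angle, and the bump wraps $S^2$ once.

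The multiband embedding convention carries over unchanged, because the added trivial unoccupied bands affect neither $C$ nor the oracle difference.
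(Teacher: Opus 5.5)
The proposal has a genuine gap at the step you call the quantitative lemma, namely that the distinguishing power per unit oracle time is $O(\Delta_{\min}^3/(L_xL_y))$. For a fixed pair $H^{(0)},H^{(1)}$ no such estimate exists. The oracle is block diagonal in the mesh address, so $\norm{\widetilde{\mathcal O}_{H^{(1)}}-\widetilde{\mathcal O}_{H^{(0)}}}=\Theta(\Delta_{\min})$ however small the bump's area is. An algorithm that knows both candidates can therefore put all its amplitude on a single address inside the bump, and your hybrid inequality yields only $T\gtrsim1/\Delta_{\min}$. Nothing better can follow from a known pair: evolving a fixed probe state at a mesh point where the two Hamiltonians differ by $\Theta(\Delta_{\min})$ distinguishes them in time $O(1/\Delta_{\min})$ for fixed $\eta$. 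The bound in \cref{thm:chern_lower_bound} is not a per-pair estimate. It hides $M$ independent bits in disjoint regions and averages the squared-distance progress over all inputs and their neighbors. Orthogonality of the region projectors ($\sum_jp_{z,j}\le1$) then caps the total rate at $2\Delta_{\min}\abs{g(t)}$, while the required progress grows linearly in $M$. For the same reason, the factor $1/3$ is not a derivative-budget effect; rescaling both $L_x$ and $L_y$ by $1/3$, as you propose, would give $1/9$ anyway.

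Hiding the defect's location is also not enough under the binary promise. Suppose the inputs have either no defect or one positive defect in one of $M$ regions. The relation is then a star, and Cauchy--Schwarz bounds the rate only by $2\Delta_{\min}\abs{g(t)}\sqrt M$, which gives just the search-type bound $\Omega(\sqrt M/\Delta_{\min})$. The missing idea is the threshold construction of \cref{sec:lower_bound_binary_family}. Fix $m=\lfloor(M-1)/3\rfloor$ negative defects; reversing the azimuthal winding preserves the gap, norm, derivative bounds, and reference state. Use $2m+1$ variable positive-defect regions, and restrict to Hamming weights $m$ and $m+1$, so that $C=\abs{z}-m\in\{0,1\}$. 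The relation between these two central layers is $(m+1)$-regular on both sides. Hence the progress function $K^{-1}\sum_{(z,w)\in R}\norm{\ket{\psi_z}-\ket{\psi_w}}^2$, with $K$ the common layer size, must reach $2(1-\beta_\eta)(m+1)$, where $\beta_\eta=2\sqrt{\eta(1-\eta)}$. Edge-by-edge AM--GM, together with $\sum_jp_{v,j}\le1$ at each endpoint, keeps its rate at most $2\Delta_{\min}\abs{g(t)}$. This gives $T\ge(1-\beta_\eta)(m+1)/\Delta_{\min}$, and the factor $1/3$ is exactly $m+1\ge M/3$ combined with $M\ge a_xa_y/4$. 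Only at the corner $a_x=a_y=1/2$, where $M=1$, does a single pair suffice, and there the flattened QWZ pair already has Chern numbers $0$ and $1$.
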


\subsection{Hardness results for guided Chern-number estimation}

\Cref{thm:nonadiabatic_algorithm,thm:chern_lower_bound} characterize
Hamiltonian-oracle complexity. We now consider the same Chern-number
computation task with a succinct classical Hamiltonian description and
the gate-cost accounting of this model.

\begin{definition}[Guided exact Chern-number computation problem]
\label{def:guided_cne}
\leavevmode\\
\noindent\textbf{Input.}
  For a fixed constant \(k=O(1)\), a succinct classical description of a
  smooth periodic \(k\)-local
  \(n\)-qubit Hamiltonian family
  \begin{align}
    H(\mathbf{k})&:=\sum_{\ell=1}^{L}h_\ell(\mathbf{k}),
    \qquad \mathbf{k}\in X,\qquad L=\poly(n),
  \end{align}
  given by the support of each local term and a classical algorithm that,
  on a \(b\)-bit specification of \(\mathbf{k}\), evaluates its coefficients
  to \(b\) bits in time \(\poly(n,b)\); and a polynomial-size circuit over a
  fixed universal gate set preparing a guiding state \(\ket c\).  The reference
  point is fixed at \(\mathbf{k}_0:=(0,0)\), rather than supplied as part of
  the input.

\medskip
\noindent\textbf{Promise.}
  The ground state is nondegenerate on all of \( X\) and
  \(\Delta_{\min}\ge1/\poly(n)\).  The Hamiltonian norm and its first 
  parameter derivatives are polynomially bounded.  At \(\mathbf{k}_0\),
  the guiding state has inverse-polynomial overlap,
\[
\abs{\braket{c}{u(\mathbf{k}_0)}}^2
  \ge \frac{1}{\poly(n)}.
\]
  The polynomial upper bounds on the norm and derivatives, and the
  inverse-polynomial lower bounds on the gap and guiding-state overlap,
  are fixed as part of the promise problem and are available to the algorithm.

\medskip
\noindent\textbf{Output.}
  Output the integer ground-state Chern number \(C[P]\) in signed binary.
\end{definition}

The following theorem classifies the complexity of this function problem.

\begin{theorem}[Complexity of guided Chern-number computation]
\label{thm:guided_cne_complexity}
The guided exact Chern-number computation problem of
\cref{def:guided_cne} belongs to \(\FBQP\) and is \(\BQP\)-hard.  The
hardness already holds for 5-local Hamiltonian families.
\end{theorem}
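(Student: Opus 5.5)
The plan is to prove the two halves separately: membership in $\FBQP$ by compiling \cref{alg:chern_estimation} into a uniform circuit, and $\BQP$-hardness by a Feynman--Kitaev reduction that sets an effective Qi--Wu--Zhang (QWZ) mass according to acceptance, as sketched in \cref{fig:intro_overview_b}.

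\textbf{Membership in $\FBQP$.} I would run \cref{alg:chern_estimation} and remove its two idealized resources.
\begin{enumerate}
\item \emph{Oracle calls.} Each coherently addressed call $e^{\pm iH(\mathbf k)t}$ is replaced by a standard simulation algorithm, e.g.\ product formulas or QSP on a linear combination of unitaries. The $b$-bit coefficient evaluation is computed into an ancilla register and then uncomputed.
\begin{itemize}
\item By \cref{thm:nonadiabatic_algorithm}, the total oracle time is $\widetilde O(L_xL_y/\Delta_{\min}^3)=\poly(n)$.
\item The norm is polynomially bounded, and there are $L=\poly(n)$ local terms.
\item Hence simulation to total error $\epsilon=1/\poly$ uses $\poly(n)$ gates.
\end{itemize}
\item \emph{Reference state.} The exact preparation of $\ket{u(\mathbf k_0)}$ is replaced as follows.
\begin{itemize}
\item First, run phase estimation on the guiding state $\ket c$ $O(\poly(n))$ times. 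Taking the minimum observed energy gives $E_0(\mathbf k_0)$ to precision $\Delta_{\min}/4$ with high probability, because the ground-state overlap is inverse-polynomial.
\item Next, apply a QSP eigenvalue filter. It keeps energies below $E_0+\Delta_{\min}/2$ and has degree $\widetilde O(H_{\max}/\Delta_{\min})$.
\item Finally, use amplitude amplification, or repeat until success. This yields $\ket{u(\mathbf k_0)}$ to $1/\poly$ precision with $\poly(n)$ overhead.
\end{itemize}
\item \emph{Error accounting.} All approximation errors add by the triangle inequality in trace distance. Choosing each error to be a small constant divided by the number of steps keeps the total failure probability below $1/3$.
\item \emph{Output size.} The output fits in polynomially many bits. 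Using the projector form \cref{eq:berry_curvature_projector_form} together with $\norm{\partial_\mu P}\le 2L_\mu/\Delta_{\min}$, we get $|C|\le O(L_xL_y/\Delta_{\min}^2)$, which is polynomial.
\end{enumerate}

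\textbf{Hardness.} Start from a $\BQP$ circuit $U=U_T\cdots U_1$ on $m$ qubits. First amplify it so that $\pacc\ge 1-\epsilon$ or $\pacc\le\epsilon$.
\begin{itemize}
\item Build the Kitaev history Hamiltonian $\Hhist$ with a unary clock, and add penalties that pin the clock and the input. Its unique ground state is the history state, with gap $\Dhist\ge 1/\poly(T)$.
\item Adjoin one band qubit $\tau$ and define
\[
H(\mathbf k)=\Hhist+\lambda\,\big[\sin k_x\,\tau^x+\sin k_y\,\tau^y+(\cos k_x+\cos k_y+M)\,\tau^z\big].
\]
The mass operator $M=m_0-\beta\,\Oacc$ is chosen so that the product $\Oacc\,\tau^z$ is at most $5$-local. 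Here $\Oacc$ is the output-qubit projector $\ket1\!\bra1_{\mathrm{out}}$ multiplied by the final-clock projector (two clock qubits in unary). I would repeat the final identity gate $\Theta(T)$ times so that this weight is $\Theta(1)$.
\item For $\lambda\ll\Dhist$, an exact Schrieffer--Wolff transformation reduces the low-energy sector to the two-band family with effective mass $\ueff=m_0-\beta\langle\Oacc\rangle_{\mathrm{hist}}$, plus corrections of order $\lambda^2/\Dhist$.
\item Choose $m_0$ and $\beta$ so that $\ueff\in(-2,0)$ when the circuit accepts ($C=1$), and $\ueff>2$ when it rejects ($C=0$).
\item To transfer the invariant, interpolate linearly between $H$ and its block-diagonalized, SW-rotated version. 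Show the gap stays at least $\min(\Dhist,\lambda\cdot\mathrm{QWZ~gap})/2$ along the path. Since the Chern number is invariant under gap-preserving deformations, it is preserved.
\item Check the promises:
\begin{itemize}
\item The overall gap is $\Theta(\lambda)=1/\poly$.
\item Norms and derivatives are $\poly$.
\item The guiding state $\ket{\mathrm{hist}}\otimes\ket{\downarrow_z}$ is preparable exactly by building a clock superposition and applying controlled gates. At $\mathbf k_0$ its overlap with the true ground state is $1-O(\lambda/\Dhist)$.
\end{itemize}
\end{itemize}

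I expect the main obstacle to be controlling the Schrieffer--Wolff step and the interpolation gap uniformly over the whole torus. The QWZ gap is small near the closing points $\ueff\in\{0,\pm2\}$. So the promise gap in $\pacc$ must keep $\ueff$ a constant distance from these values. The corrections must also stay below this margin, which fixes $\lambda=\Dhist/\poly$. This is the step I would verify first, using the exact block-diagonalizing generator of the SW transformation and a Weyl-type bound on the spectrum of the off-diagonal coupling.
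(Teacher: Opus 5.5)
Your proposal is correct and follows essentially the paper's route for both halves. For membership you compile the oracle algorithm, preparing the two reference copies by phase estimation plus a filter and replacing oracle calls by standard coherently addressed simulation. For hardness you use a Feynman--Kitaev encoding in which the acceptance probability shifts a QWZ mass, and transfer the invariant to the physical ground band by an exact Schrieffer--Wolff rotation and a gapped interpolation.

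Your variations are harmless:
\begin{itemize}
\item You apply the QWZ hopping terms at all clock times with a \(\Theta(1)\)-weight post-computation acceptance window, instead of gating everything by the single final-clock projector \(\Pf\).
\item You use the standard QWZ sign with \(\ueff\in(-2,0)\), which does give \(C=1\) in the paper's convention.
\item You take the efficiently preparable history state itself as the guiding state, instead of the initial-idling superposition.
\end{itemize}

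Two small points to tighten. First, \(W(\mathbf{k})\) is already a smooth periodic bundle isomorphism, so no interpolation between \(H\) and \(WHW^\dagger\) is needed. The interpolation that actually requires your gap estimate is the one inside \(\operatorname{range}(P_0)\), from \(H_{\rm eff}^{\rm SW}\) to the first-order QWZ block. Second, uniformity of the \(\FBQP\) circuit requires the GQSP phase factors to be classically computable in polynomial time. Alternatively, as the paper does, replace each GQSP layer by coherent phase estimation with reversible function evaluation and uncomputation.
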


Restricting the Chern number to two values gives the corresponding decision
problem as an immediate consequence.

\begin{corollary}[\(\BQP\)-complete restriction]
\label{cor:guided_cne_bqp_complete}
Even for 5-local Hamiltonian families, restricting to the additional promise
\(C[P]\in\{0,1\}\) gives a \(\BQP\)-complete problem of deciding whether
\(C[P]=0\) or \(C[P]=1\).
\end{corollary}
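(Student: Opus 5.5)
The corollary should follow directly from \cref{thm:guided_cne_complexity}, by checking membership and hardness separately for the decision version. Call the restricted problem $\Pi_{01}$. Its instances are the inputs of \cref{def:guided_cne} that satisfy the extra promise $C[P]\in\{0,1\}$, the yes-instances are those with $C[P]=1$, and the no-instances are those with $C[P]=0$.

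\emph{Membership in $\BQP$.} Every instance of $\Pi_{01}$ is also an instance of the function problem, since it satisfies all the promises of \cref{def:guided_cne}. By the $\FBQP$ part of \cref{thm:guided_cne_complexity}, there is a uniform polynomial-size quantum circuit family that outputs the signed-binary value of $C[P]$ correctly with bounded error. We run this circuit, accept if the output is $1$, and reject otherwise. The success probability is at least that of the function algorithm, so it is bounded away from $1/2$, and standard majority-vote amplification brings it to $2/3$ at polynomial overhead. Under the promise, a wrong output such as $2$ or $-1$ is simply treated as a rejection and only counts as an error, so this causes no difficulty.

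\emph{$\BQP$-hardness.} Here I need more than the statement of \cref{thm:guided_cne_complexity}, since a function problem being $\BQP$-hard does not automatically make its binary restriction hard. I will use the structure of the reduction sketched in \cref{fig:intro_overview_b}. Given a $\BQP$ instance, that is, a uniform circuit with $\pacc\ge2/3$ or $\pacc\le1/3$, the Feynman--Kitaev/Qi--Wu--Zhang construction produces in polynomial time a 5-local family satisfying the promises of \cref{def:guided_cne}. Its Chern number is $C=1$ when the circuit accepts and $C=0$ when it rejects. So the reduction's image already lies inside the promise $C[P]\in\{0,1\}$, and the map from accepting to rejecting circuits sends yes-instances to $C=1$ and no-instances to $C=0$. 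This is a polynomial-time many-one (Karp) reduction from any $\BQP$ problem to $\Pi_{01}$, restricted to 5-local families. Combining the two parts gives $\BQP$-completeness.

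The only real obstacle is the hardness direction. I must confirm that the hardness proof of \cref{thm:guided_cne_complexity} in \cref{sec:bqp_hardness} produces Chern numbers exactly in $\{0,1\}$, as the figure and introduction state. I also need the guiding-state, gap, and derivative promises to hold uniformly for both outcomes of the circuit. Beyond that, the remaining steps are bookkeeping.
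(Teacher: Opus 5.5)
Your proposal is correct and matches the paper's argument. Membership follows by running the $\FBQP$ algorithm of \cref{thm:guided_cne_complexity} and accepting exactly when it outputs $1$. Hardness follows because the Feynman--Kitaev/QWZ reduction of \cref{sec:bqp_hardness_reduction} already sends accepting circuits to $C[P]=1$ and rejecting circuits to $C[P]=0$, with the promises of \cref{def:guided_cne} holding in both cases. You are right that hardness of the function problem alone would not suffice, and that one must check the image of the reduction; this is precisely the observation the paper's one-line justification relies on.
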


We prove \cref{thm:guided_cne_complexity} by establishing membership in
\(\FBQP\) in \cref{sec:bqp_containment} and \(\BQP\)-hardness in
\cref{sec:bqp_hardness_reduction}.
\Cref{cor:guided_cne_bqp_complete} then follows immediately because the
hardness construction produces instances with \(C[P]=0\) or \(C[P]=1\).

\section{Quantum Algorithm for Chern Number Estimation}
\label{sec:upper_bound_algorithm}

We now prove \cref{thm:nonadiabatic_algorithm} by constructing the algorithm
for the Hamiltonian $H(\mathbf{k})$, using the bounds specified in
\cref{sec:chern_definition}.
Two ground-state copies are initially prepared at $\mathbf{k}_0$
and subsequently transported along the parameter torus.
Each copy in turn serves as an energy reference for the other,
with the reference energy given by the ground-state energy
at the current parameter point of the reference copy.
The same two copies are reused throughout, without additional
ground-state preparation.

The construction has three parts. First, we express $C$ exactly as a sum
of principal geometric phases on a finite mesh. Second, we synthesize
ground-state transport and fractional phase modulation from Hamiltonian
evolution using GQSP. Third, we combine these
operations into closed scans and recover the signed integer by an inverse QFT. 

\subsection{Finite-mesh formulation and local-phase identity}
\label{sec:alg_mesh}

We first reduce Chern-number estimation to an exact sum of geometric
phases associated with the cells of a finite mesh.
The gap and first derivatives control both the ground-state geometry and
the required mesh resolution:
\begin{align}
    \norm{\partial_{k_\mu}P(\mathbf{k})}
    &\leq\frac{\norm{\partial_{k_\mu}H(\mathbf{k})}}{\Delta(\mathbf{k})}
    \leq\frac{L_\mu}{\Delta_{\min}}, \quad \mu\in\qty{x,y},
    \label{eq:alg_projector_bound}\\
    \abs{\Omega(\mathbf{k})}
    &\leq\frac{2L_xL_y}{\Delta_{\min}^2}.
    \label{eq:berry_curvature_hamiltonian_bound}
\end{align}
Their derivation appears in \cref{app:alg_geometry}.
Integrating \cref{eq:berry_curvature_hamiltonian_bound} over the torus
gives the integer bound
\begin{align}
    \abs{C}\leq C_{\max}
    :=\left\lceil\frac{4\pi L_xL_y}{\Delta_{\min}^2}\right\rceil.
    \label{eq:alg_chern_bound}
\end{align}

Before constructing the mesh, we check whether the supplied bounds
already force the Chern number to vanish.
If
\begin{align}
    \min\qty{L_x,L_y}<\frac{\Delta_{\min}}{2},
    \label{eq:alg_zero_check}
\end{align}
the ground-state bundle admits a smooth periodic normalized section and
$C=0$ as proved in \cref{app:alg_geometry}.
The algorithm performs a classical check of this sufficient
condition and returns zero in the early-return case before preparing
any quantum state.

Otherwise, the algorithm proceeds with the quantum case and chooses the rectangular mesh
\begin{align}
    N_\mu&:=\left\lceil\frac{128\pi L_\mu}{\Delta_{\min}}\right\rceil,
    \quad \delta k_\mu:=\frac{2\pi}{N_\mu}, \quad \mu\in\{x,y\},
    \label{eq:alg_mesh}\\
    N_{\square}&:=N_xN_y
    =\Theta\qty(\frac{L_xL_y}{\Delta_{\min}^2}).
    \label{eq:alg_cell_count}
\end{align}
The mesh vertices are indexed by $j_\mu=0,\ldots,N_\mu-1$
for $\mu\in\qty{x,y}$, with indices interpreted modulo $N_\mu$.
We assume controlled forward and inverse Hamiltonian evolution with
coherent parameter addressing. The mesh vertices and the corresponding
multiplication oracle are
\begin{align}
    \mathbf k_{j_x,j_y}
    &:=\qty(j_x\,\delta k_x,j_y\,\delta k_y),
    \label{eq:alg_mesh_points}\\
    \mathcal O_H
    &:=\sum_{j_x,j_y}\ket{j_x,j_y}\bra{j_x,j_y}
       \otimes H(\mathbf k_{j_x,j_y}).
    \label{eq:alg_hamiltonian_oracle}
\end{align}
Oracle costs are measured by the total Hamiltonian evolution time
defined in \cref{eq:oracle_time_definition}.
Input-independent gates, including the inverse QFT, are taken to be
exact in this oracle model. The gate costs of implementing these
operations and Hamiltonian evolution are not included in the oracle time.

A coherent circuit prepares
$\ket{u(\mathbf{k}_0)}$ at $\mathbf{k}_0:=(0,0)$ and is applied twice.
The cost of these two preparation calls is not included in the oracle time.
All subsequent ground-state transfers are constructed from the Hamiltonian oracle.

The purpose of the chosen mesh resolution is to control the variation
of the Hamiltonian and its ground-state projector within each cell.
Any two points $v,w$ in the same rectangular cell of the mesh obey
\begin{align}
    \norm{H(v)-H(w)}&\leq\frac{\Delta_{\min}}{32},
    \notag\\
    \norm{P(v)-P(w)}&\leq\rho:=\frac{1}{32}
    \label{eq:alg_cell_bounds}
\end{align}
as proved in \cref{app:alg_geometry}.

We use the oriented cell boundaries to define the local geometric phases.
Write $\mathcal C_{j_x,j_y}:=(a,b,c,d)$ for the cell with
lower-left vertex $a:=\mathbf k_{j_x,j_y}$ and successive
vertices $b:=a+(\delta k_x,0)$, $c:=a+(\delta k_x,\delta k_y)$,
and $d:=a+(0,\delta k_y)$.
The order $a\to b\to c\to d\to a$ is counterclockwise.
Cells crossing a periodic seam are represented as rectangles in lifted
coordinates.

For each oriented cell, define the four-point Bargmann
invariant~\cite{mukunda_bargmann_2003}
and its principal phase by
\begin{align}
    B_{\mathcal C}
    &:=
    \operatorname{Tr}\qty[P(a)P(b)P(c)P(d)],
    &
    \phi_{\mathcal C}
    &:=
    \operatorname{Arg}B_{\mathcal C},
    \label{eq:alg_bargmann}
\end{align}
where $\operatorname{Arg}z\in(-\pi,\pi]$.
Equivalently, $B_{\mathcal C}$ is the product of the four
overlaps between ground states at adjacent
vertices of the cell $\mathcal C$ in counterclockwise order.
Independent phase choices at the vertices cancel in this
product, so $\phi_{\mathcal C}$ is the same gauge-invariant
phase that underlies lattice formulations of the
Chern number~\cite{FukuiHatsugaiSuzuki2005}.
It remains to show that summing these principal phases recovers
the Chern number exactly.

The mesh bound ensures that all overlaps are nonzero
and that $\abs{\phi_{\mathcal C}}<\pi/4$, as proved in
\cref{app:alg_bargmann_bounds}.
With the Berry-connection convention of
\cref{eq:berry_connection_curvature_def}, the exact identity is
\begin{align}
    2\pi C
    &=
    -\sum_{\mathcal C}\phi_{\mathcal C}.
    \label{eq:alg_phase_sum}
\end{align}
To obtain it, apply Stokes' theorem in smooth local gauges
to express $2\pi C$ as the sum of counterclockwise
Berry-connection integrals around all cells.
Pair the two oppositely oriented contributions on each
shared edge. For each cell $\mathcal C$, define $E_{\mathcal C}$
as the sum of these paired contributions on its right and
upper edges. Then
$2\pi C=\sum_{\mathcal C}E_{\mathcal C}$.
Here $E_{\mathcal C}$ groups shared-edge contributions
and need not equal the Berry-curvature integral through
$\mathcal C$.

In projector-based gauges anchored at the lower-left
vertices, these integrals reduce to endpoint differences
of transition phases, giving
$e^{iE_{\mathcal C}}=e^{-i\phi_{\mathcal C}}$.
The mesh bounds imply $|E_{\mathcal C}|<\pi$; together
with $|\phi_{\mathcal C}|<\pi/4$, this fixes
$E_{\mathcal C}=-\phi_{\mathcal C}$ as a real equality.
The full argument is in \cref{app:alg_phase_sum}.
The local operations below implement the phase accumulation required
by this identity.

\subsection{Local algorithmic primitives}
\label{sec:alg_primitives}

The global algorithm needs two local operations: a direct rotation
$D_{wv}$ that transports a ground state from vertex $v$ to vertex $w$
within the same cell,
and a fractional phase modulation $W_{\mathcal C}(\alpha)$ that returns
the anchor state with phase $e^{-i\alpha\phi_{\mathcal C}}$.
We construct them through three GQSP transformations.
\Cref{fig:oracle_hierarchy} summarizes the construction of the local
algorithmic primitives from Hamiltonian evolution.

During each complete local operation, one copy is held fixed as an
energy reference. Local error guarantees include all work registers
and hold uniformly over the other system and control registers when
this reference is exact.

\begin{figure}[bt]
    \centering
    \resizebox{0.3\textwidth}{!}{%
    \begin{tikzpicture}[
        x=1pt,y=1pt,
        box/.style={draw,rounded corners=2pt,align=center,
            inner sep=5pt,text width=155pt,font=\small},
        arrow/.style={-{Latex[length=4pt]},line width=0.6pt},
        tag/.style={fill=white,inner sep=2pt,align=center,font=\scriptsize}
    ]
        \node[box] (hamiltonian) at (0,0)
            {Hamiltonian evolution\\$e^{\pm iH(\mathbf{k})t}$};
        \node[box] (reflection) at (0,-65)
            {Ground-state reflection\\$\mathcal R_v:=2P(v)-I$};
        \node[box] (rotation) at (0,-130)
            {Direct rotation\\
            $D_{wv}:=\qty(\mathcal R_w\mathcal R_v)^{1/2}$};
        \node[box] (fractional) at (0,-195)
            {Fractional phase modulation\\
            $W_{\mathcal C}(\alpha):=\qty(D_{ad}D_{dc}D_{cb}D_{ba})^\alpha$};
        \draw[arrow] (hamiltonian) --
            node[tag]{Phase-step GQSP\\\cref{sec:alg_reflection}} (reflection);
        \draw[arrow] (reflection) --
            node[tag]{Square-root GQSP\\\cref{sec:alg_rotation}} (rotation);
        \draw[arrow] (rotation) --
            node[tag]{Fractional-power GQSP\\
            \cref{sec:alg_fractional}} (fractional);
    \end{tikzpicture}%
    }
    \caption{Construction of the local algorithmic primitives. Three GQSP
    transformations synthesize ground-state reflections, direct rotations,
    and fractional phase modulations from Hamiltonian evolution.
    Here $v,w$ are vertices of cell $\mathcal C$, and $a,b,c,d$ label
    its vertices in counterclockwise order.
    }
    \label{fig:oracle_hierarchy}
\end{figure}

\subsubsection{Hamiltonian evolutions to ground-state reflection}
\label{sec:alg_reflection}

An exact ground-state copy at $a$ supplies an energy reference for a
reflection at any vertex $v$ in the same cell. Define
\begin{align}
    K_{v|a}:=I\otimes H(v)-H(a)\otimes I.
    \label{eq:alg_relative_hamiltonian}
\end{align}
On the sector with reference state $\ket{u(a)}$, its eigenvalues are
$E_m(v)-E_0(a)$. By the variational principle and
\cref{eq:alg_cell_bounds},
\begin{align}
    \abs{E_0(v)-E_0(a)}&\leq\frac{\Delta_{\min}}{32},
    \notag\\
    E_m(v)-E_0(a)&\geq\frac{31\Delta_{\min}}{32}\qquad(m\ne0).
    \label{eq:alg_relative_energy_bounds}
\end{align}
Thus the reference separates the
ground state from the excited states without estimating $E_0(a)$.
The relative evolution can be implemented directly with the available
Hamiltonian oracles.
The commuting factors give
\begin{align}
    e^{-iK_{v|a}t}
    =e^{+iH(a)t}\otimes e^{-iH(v)t},
    \label{eq:alg_relative_evolution}
\end{align}
at oracle time $2|t|$.

We convert this energy separation into a reflection by a GQSP
approximation to a sign function.
Shift the relative energy by the threshold $\Delta_{\min}/2$ and set
\begin{align}
    \tau_R&:=\frac{\pi}{4H_{\max}},
    &U_{v|a}&:=e^{-i\tau_R\qty(K_{v|a}-\Delta_{\min}I/2)}.
    \label{eq:alg_reflection_signal}
\end{align}
Writing its eigenvalues as $e^{-i\varphi}$, all phases in the
exact-reference sector lie in $(-\pi/2,\pi/2)$. Ground and excited
phases lie on opposite sides of zero with separation
$\Theta(\Delta_{\min}/H_{\max})$. A bounded polynomial approximation
to $-\operatorname{sgn}(\sin\varphi)$, implemented by
GQSP, therefore realizes
\begin{align}
    \mathcal R_v:=2P(v)-I.
    \label{eq:alg_reflection}
\end{align}
The construction uses the sign approximation of
Ref.~\cite[Lemma~14]{gilyen_qsvt_2019} and the Laurent-polynomial
implementation of
Ref.~\cite[Corollary~5 and Theorem~6]{motlagh_generalized_qsp_2024}.
For state error $\epsilon$, including the signal ancilla, it requires oracle time
\begin{align}
    T_{\mathcal R}(\epsilon)
    =O\qty(\frac{1}{\Delta_{\min}}\log\frac{1}{\epsilon})
    \label{eq:alg_reflection_cost}
\end{align}
and constant-size GQSP ancilla overhead. The spectral and error guarantees
are proved in \cref{app:alg_reflection}.

Controls act on the complete
specified circuit, retaining its phase, and inverses reverse the
circuit. Both preserve oracle time and the relevant state-error
guarantees, as established in \cref{app:alg_controls}.
These reflections can now be combined to transport ground states
between vertices of a cell.

\subsubsection{Ground-state reflections to direct rotation}
\label{sec:alg_rotation}

Direct rotations transport ground states between vertices of a cell with a fixed phase convention.
For two vertices $v,w$ in the same cell, $P(v)$ and $P(w)$ project
onto one-dimensional complex ground-state subspaces, called rays.
Each ray represents a normalized ground state up to an overall phase.
The product of the corresponding reflections rotates the plane
spanned by these rays by twice the angle between them. Its principal
square root is the direct rotation~\cite{bravyi2011schriefferwolff}
\begin{align}
    D_{wv}&:=\qty(\mathcal R_w\mathcal R_v)^{1/2},
    \label{eq:alg_direct_rotation}\\
    D_{wv}\ket{u(v)}
    &=\ket{u(w)}^{(v)}
    :=\frac{P(w)\ket{u(v)}}{\norm{P(w)\ket{u(v)}}}.
    \label{eq:alg_rotation_action}
\end{align}
The state $\ket{u(w)}^{(v)}$ is the normalized ground state at $w$
whose overlap with $\ket{u(v)}$ is real and strictly positive.
This condition uniquely fixes the phase of $\ket{u(w)}^{(v)}$ relative to
$\ket{u(v)}$~\cite{pancharatnam_generalized_1956}.
Here and throughout, $U^\alpha$ denotes the principal power of a unitary $U$,
defined spectrally by $e^{i\theta}\mapsto e^{i\alpha\theta}$ with
$\theta\in(-\pi,\pi)$, unless another branch is explicitly specified.

The small angle between the ground rays also makes the direct rotation
efficient to synthesize.
Since $\norm{P(v)-P(w)}\leq\rho$, the reflection
product has its spectrum in a fixed arc around $1$, separated from the
principal branch cut at $-1$. Square-root GQSP implements $D_{wv}$ using
$O\qty(\log(1/\epsilon))$ queries to the reflection product and its
inverse~\cite[Theorem~10]{motlagh_generalized_qsp_2024}.
The theorem's dependence on the minimum singular value of the phase
generator contributes only a constant here, as shown in \cref{app:alg_qsp}.
A fixed reference copy at any vertex of the same cell supplies both reflections.
\Cref{app:alg_qsp} proves the spectral statement and
\cref{app:alg_nested_costs} gives the internal precision allocation.

\subsubsection{Direct rotations to fractional phase modulation}
\label{sec:alg_fractional}

A closed sequence of direct rotations converts ground-state transport
into a cell phase.
Hold one copy at the anchor $a$ while applying four direct rotations
to the other:
\begin{align}
    L_{\mathcal C}&:=D_{ad}D_{dc}D_{cb}D_{ba},
    \label{eq:alg_cell_loop}\\
    L_{\mathcal C}\ket{u(a)}
    &=e^{-i\phi_{\mathcal C}}\ket{u(a)}.
    \label{eq:alg_loop_action}
\end{align}
The loop phase is determined by the ground-state overlaps along its boundary.
For an oriented edge $v\to w$, define
$\varphi_{vw}:=\operatorname{Arg}\braket{u(v)}{u(w)}$.
By \cref{eq:alg_rotation_action},
\begin{align}
    D_{wv}\ket{u(v)}
    &=\frac{\braket{u(w)}{u(v)}}{\abs{\braket{u(v)}{u(w)}}}\ket{u(w)}
    \notag\\
    &=e^{-i\varphi_{vw}}\ket{u(w)}.
    \label{eq:alg_edge_overlap_phase}
\end{align}
Multiplying the four phase factors gives
\begin{align}
    e^{-i\qty(\varphi_{ab}+\varphi_{bc}+\varphi_{cd}+\varphi_{da})}
    &=\frac{B_{\mathcal C}^*}{\abs{B_{\mathcal C}}}
    =e^{-i\phi_{\mathcal C}}.
    \label{eq:alg_cell_phase_product}
\end{align}
Every reflection query
remains within the cell of the fixed reference.

To take a fractional power of this loop, we need a branch gap for
the entire loop operator.
Moreover, each direct rotation is close to the identity, giving
\begin{align}
    \norm{L_{\mathcal C}-I}
    \leq8\sin\qty(\frac{1}{2}\arcsin\rho)\leq\frac14.
    \label{eq:alg_loop_branch_gap}
\end{align}
The entire spectrum of the closed product therefore stays a constant
distance from $-1$. Fractional-power GQSP can be applied to this product,
uniformly for known $\alpha\in(0,1)$, to construct
\begin{align}
    W_{\mathcal C}(\alpha)&:=\qty(L_{\mathcal C})^\alpha,
    \label{eq:alg_fractional}\\
    W_{\mathcal C}(\alpha)\ket{u(a)}
    &=e^{-i\alpha\phi_{\mathcal C}}\ket{u(a)}.
    \label{eq:alg_fractional_action}
\end{align}

It uses $O\qty(\log(1/\epsilon))$ loop queries and their
inverses~\cite[Theorem~10]{motlagh_generalized_qsp_2024}.
The theorem's dependence on the minimum singular value of the phase
generator also contributes only a constant for the loop signal, as shown
in \cref{app:alg_qsp}.
In particular, the exponent $\alpha$ is synthesized directly, 
so it introduces no multiplicative oracle-time factor.
The spectral bound and the fractional-power implementation are detailed
in \cref{app:alg_qsp}.

\Cref{tab:alg_primitive_costs} summarizes the costs derived in
\cref{app:alg_nested_costs}, with precision allocated to all nested
calls. The GQSP ancilla overhead remains constant
size because the number of GQSP construction layers is fixed.
\begin{table}[bt]
    \caption{Hamiltonian-oracle time for each completed local primitive
    at state error $\epsilon$, including its internal error allocation.
    }
    \label{tab:alg_primitive_costs}
    \centering
    \begin{tabular}{c c}
        \hline\hline
        Primitive & Oracle time \\
        \hline
        $\mathcal R$ & $O\qty(\Delta_{\min}^{-1}\log(1/\epsilon))$ \\
        $D$ & $O\qty(\Delta_{\min}^{-1}\log^2(1/\epsilon))$ \\
        $W$ & $O\qty(\Delta_{\min}^{-1}\log^3(1/\epsilon))$ \\
        \hline\hline
    \end{tabular}
\end{table}

\subsection{Overall algorithm and resource bounds}
\label{sec:alg_overall}

We now combine the local primitives into a complete algorithm.
A closed scan accumulates a controlled fractional phase over all cells;
several such scans then encode the signed Chern number for Fourier readout.
We then bound the implementation error and total resources.

\subsubsection{Closed scans with two ground-state copies}
\label{sec:alg_scan}

A scan visits every cell anchor in cyclic raster order,
\begin{align}
    a_\ell&:=(j_x\,\delta k_x,j_y\,\delta k_y),
    &\ell&:=j_yN_x+j_x,
    \label{eq:alg_raster}
\end{align}
with successor $v:=a_{(\ell+1)\bmod N_{\square}}$ for the current anchor
$a:=a_\ell$. A move within a row is horizontal. At a row transition,
periodicity identifies the successor with the upper-right vertex of
the last cell in the row.
The final move returns to $\mathbf{k}_0$. Thus every move connects
two vertices of a single cell.
This traversal is valid also for $C\ne0$: it only selects states along
the scan path and requires no smooth periodic gauge on the whole
torus.

Each step consists of phase accumulation at the current anchor
followed by transport to the next anchor.
Initialize two copies at $\mathbf{k}_0$ and fix a readout qubit.
At each anchor, apply $W_{\mathcal C}(m/q)~(m,q \in \mathbb{Z})$ to copy 2 controlled on the
readout being $\ket{1}$, using copy 1 as the fixed reference. Both
copies return to the anchor ground ray, and the cell phase is kicked
back onto the readout qubit.
Then move both copies to the next anchor
$v$ without conditioning on the readout qubit: apply $D_{va}$ to copy
2 using copy 1 as reference, followed by $D_{va}$ on copy 1 using the
moved copy 2 as reference. On the ideal trajectory each reference is
an exact local ground state.

The controlled cell phases multiply to
\begin{align}
    \prod_{\mathcal C}e^{-im\phi_{\mathcal C}/q}
    =e^{2\pi imC/q},
    \label{eq:alg_scan_phase}
\end{align}
by \cref{eq:alg_phase_sum}. The transports are independent of the
readout value, so their accumulated phase is common to both branches.
Consequently, a closed scan $\mathcal S_{q,m}$ restores both
reference-point ground rays and implements the logical gate
\begin{align}
    G_{q,m}:=\operatorname{diag}\qty(1,e^{2\pi imC/q}).
    \label{eq:alg_scan_gate}
\end{align}

Each scan uses $N_{\square}$ fractional phase modulations and
$2N_{\square}$ navigation rotations. The two prepared copies are
reused across all scans.

\subsubsection{Integer reconstruction and algorithm}
\label{sec:alg_readout}

An unfractionated scan would produce $e^{2\pi iC}=1$.
Fractional phases retain the integer residue. Choose
\begin{align}
    r_C&:=\left\lceil\log_2\qty(2C_{\max}+1)\right\rceil,
    &q&:=2^{r_C}>2C_{\max}.
    \label{eq:alg_modulus}
\end{align}
The scans encode the integer residue as a Fourier state of the readout register.
Prepare $r_C$ readout qubits in $\ket{+}^{\otimes r_C}$.
For each $j=0,\ldots,r_C-1$, perform one scan controlled by bit $j$
with $m:=2^j$. The fractional exponents are
$\alpha:=2^j/q\in(0,1/2]$. Since these exponents are implemented
directly, this uses exactly $r_C$ scans. Up to a common transport
phase, the readout state is
\begin{align}
    \frac{1}{\sqrt q}\sum_{n=0}^{q-1}e^{2\pi inC/q}\ket n.
    \label{eq:alg_fourier_state}
\end{align}
The inverse QFT returns $R=C\bmod q$ with certainty in the ideal
circuit. Centered decoding,
\begin{align}
    \operatorname{cent}_q(R):=
    \begin{cases}
        R,&0\leq R<\frac{q}{2},\\
        R-q,&\frac{q}{2}\leq R<q,
    \end{cases}
    \label{eq:alg_decode}
\end{align}
recovers the signed integer uniquely because $\abs{C}<q/2$.
The scan and Fourier identities are proved in
\cref{app:alg_scan_readout}.
\Cref{alg:chern_estimation} summarizes the construction.

\begin{figure}[bt]
\normalsize
\raggedright
\refstepcounter{algorithm}\label{alg:chern_estimation}
\par\medskip\noindent
\textbf{\Cref{alg:chern_estimation}. Chern-number estimation}
\par\smallskip
\begin{algorithmic}[1]
    \Require Controlled Hamiltonian evolution; $L_x,L_y,\Delta_{\min},H_{\max}$;
    coherent ground state preparation at $\mathbf{k}_0$; target failure probability $\eta$.
    \Ensure The ground-state Chern number $C$ with probability at least $1-\eta$.
    \If{$\min\qty{L_x,L_y}<\Delta_{\min}/2$}
        \State \Return $0$.
    \EndIf
    \State Construct the mesh using \cref{eq:alg_mesh,eq:alg_cell_count} and compute
    $C_{\max},r_C,q$ from \cref{eq:alg_chern_bound,eq:alg_modulus}.
    \State Set $\epsilon_{\mathrm{loc}}:=\eta/(3N_{\square}r_C)$ as shown in \cref{eq:alg_local_precision}.
    \State Prepare two copies of $\ket{u(\mathbf{k}_0)}$ and the
    readout state $\ket{+}^{\otimes r_C}$.
    \For{$j=0,\ldots,r_C-1$}
        \For{each cell $\mathcal C$ in cyclic raster order}
            \State Apply $W_{\mathcal C}(2^j/q)$ to copy 2 controlled
            by readout bit $j$, using copy 1 as reference.
            \State Move copy 2 to the next anchor using copy 1
            as reference; then move copy 1 using copy 2.
        \EndFor
    \EndFor
    \State Apply the inverse QFT and measure the readout value $R$.
    \State \Return $\operatorname{cent}_q(R)$ from \cref{eq:alg_decode}.
\end{algorithmic}
\end{figure}

\subsubsection{Correctness and resource bounds}
\label{sec:alg_cost}

We choose the local accuracies to achieve the target success probability
for the implemented circuit.
There are at most $3N_{\square}r_C$ complete local primitives.
Assign each one state error
\begin{align}
    \epsilon_{\mathrm{loc}}:=\frac{\eta}{3N_{\square}r_C}
    \label{eq:alg_local_precision}
\end{align}
on its ideal incoming state, including all work registers. Unitarity
bounds the final state error by the sum of these local errors.
This also propagates errors in the moved reference copies: the local
spectral promises are required only on the ideal trajectory. Since the
ideal inverse-QFT output is exact, the accumulated state error bounds
the probability of returning a wrong integer. The detailed argument is
given in \cref{app:alg_error}.

With this precision choice, the total oracle time follows by counting
completed local operations.
In the quantum case, each of the $r_C$ scans contains
$N_{\square}$ fractional phase modulations and $2N_{\square}$
direct rotations. Let $T_W(\epsilon)$ and $T_D(\epsilon)$ denote
their respective Hamiltonian-oracle times at state error $\epsilon$,
as summarized in \cref{tab:alg_primitive_costs}. With the local
precision in \cref{eq:alg_local_precision}, the total runtime is
\begin{align}
    T_{\mathrm{upper}}(\eta)
    &=O\qty[
        N_{\square}r_C\qty(
            T_W(\epsilon_{\mathrm{loc}})
            +2T_D(\epsilon_{\mathrm{loc}})
        )
    ]
    \notag\\
    &=O\qty[
        \frac{N_{\square}r_C}{\Delta_{\min}}
        \log^3\qty(\frac{N_{\square}r_C}{\eta})
    ]
    \notag\\
    &=\widetilde O\qty(\frac{L_xL_y}{\Delta_{\min}^3}).
    \label{eq:alg_resource_composition}
\end{align}
The last equality uses \cref{eq:alg_cell_count,eq:alg_chern_bound}
and the scan count $r_C=O\qty(\log\qty(2C_{\max}+1))$ from
\cref{eq:alg_modulus}, establishing \cref{eq:alg_runtime} in
\cref{thm:nonadiabatic_algorithm}.
The reflection degree contains
$H_{\max}/\Delta_{\min}$, but each query of the Hamiltonian evolution has duration
$\Theta(H_{\max}^{-1})$; the norm scale cancels in oracle time.

For an $n$-qubit Hamiltonian, the two data registers, mesh address,
readout register, and constant-size GQSP ancilla overhead require the following number of qubits:
\begin{align}
    N_{\mathrm{qubit}}
    &=2n+O\qty[\log N_{\square}+\log\qty(2C_{\max}+1)]+O(1)
    \notag\\
    &=2n+O\qty[\log\qty(1+\frac{L_xL_y}{\Delta_{\min}^2})].
    \label{eq:alg_qubits}
\end{align}
These registers are reused throughout.
The qubit count in \cref{eq:alg_qubits} excludes any additional
workspace needed for state preparation or gate implementation.
Gate-synthesis and classical-processing costs are not included
in the oracle time.
\Cref{app:alg_resources} derives the oracle time and qubit count.

On the input classes covered by
\cref{thm:chern_lower_bound}, the evolution-time scaling matches the
lower bound up to polylogarithmic factors.
For the input classes covered by
\cref{cor:binary_chern_lower_bound}, the same matching scaling holds
under the binary promise $C\in\qty{0,1}$.

\section{Query Lower Bound for Chern-Number Estimation}
\label{sec:chern_lower_bound}

We now prove \cref{thm:chern_lower_bound,cor:binary_chern_lower_bound}.
The lower bound follows the same division of problem parameter space as the
algorithm in \cref{sec:upper_bound_algorithm}.
Write
\begin{align}
    a_\mu:=\frac{L_\mu}{\Delta_{\min}},
    \qquad \mu\in\qty{x,y}.
    \label{eq:lower_bound_dimensionless_parameters}
\end{align}
If $\min\qty{a_x,a_y}<1/2$, the zero-Chern certificate in
\cref{app:alg_geometry} gives $C=0$ for every allowed input.
Returning zero requires no oracle calls, so the optimal oracle time,
and hence its lower bound, is
\begin{align}
    T_{\mathrm{lower}}=0.
    \label{eq:lower_bound_zero_branch}
\end{align}
This is exactly the early-return case of \cref{eq:alg_zero_check}.
In the complementary regime $a_x,a_y\geq1/2$, including its boundary,
we first partition the parameter torus into regions and assign
an independently chosen bit to each region.
A bit value of zero selects an inactive configuration with zero
Chern-number contribution.
A bit value of one selects a smooth Hamiltonian configuration
within the region that contributes $+1$ to the total Chern number;
we call this configuration a positive defect.
The total Chern number is therefore the number of positive defects.
We then adapt this construction in
\cref{sec:lower_bound_binary_family} by placing
fixed negative
defects, each contributing $-1$, in selected regions and restricting the
variable bits so that the total Chern number is zero or one.
A common adversary argument gives a lower bound that matches the
algorithm's evolution-time scaling up to polylogarithmic factors.

\subsection{Hard-instance construction}
\label{sec:lower_bound_hard_instance}

For a partition into $M$ regions, the three basic constructions below encode
the associated bits $z:=(z_1,\ldots,z_M)\in\qty{0,1}^M$ in a two-band
Hamiltonian
\begin{align}
    H_z(\mathbf{k})
    :=\frac{\Delta_{\min}}2\mathbf n_z(\mathbf{k})\cdot\bm\sigma,
    \qquad \abs{\mathbf n_z(\mathbf{k})}=1,
    \label{eq:lower_bound_hard_hamiltonian}
\end{align}
where $\bm\sigma:=(\sigma_x,\sigma_y,\sigma_z)$ is the vector of Pauli
matrices.
We refer to the unit-vector field $\mathbf n_z(\mathbf{k})$ as the texture.
All inputs therefore have eigenvalues $\pm\Delta_{\min}/2$, gap
$\Delta_{\min}$, and norm $\Delta_{\min}/2$.
In each of the three constructions below, we choose $\mathbf n_z$
so that the supplied derivative bounds hold and
$\mathbf n_z(0,0)=(0,0,1)$ for every bit string $z$.
This ensures that $H_z(0,0)=(\Delta_{\min}/2)\sigma_z$,
with the common reference ground state $\ket1$.
In these basic families, flipping a bit inserts or removes a positive
defect and changes the Hamiltonian only in the corresponding
region.
The details of these properties are proved in
\cref{app:chern_lower_bound_proof}.

\subsubsection{Tiles in the interior}
\label{sec:lower_bound_tiles}

When $a_x,a_y>1/2$, both parameter circles can be partitioned into
intervals with enough derivative budget for smooth interpolation.
Choose
\begin{align}
    n_\mu&:=\max\qty{1,\lfloor a_\mu\rfloor},
    &w_\mu&:=\frac{2\pi}{n_\mu}.
    \label{eq:lower_bound_tile_counts}
\end{align}
Since $n_\mu<2a_\mu$, there exists $0<\varepsilon<1/2$ such that
\begin{align}
    \frac{n_\mu}{2(1-\varepsilon)}\leq a_\mu,
    \qquad \mu\in\qty{x,y}.
    \label{eq:lower_bound_smoothing_budget}
\end{align}
Let $B_\varepsilon:[0,1]\to[0,1]$ be the smooth interpolation
constructed in \cref{app:lower_bound_interpolation}, with
\begin{align}
    B_\varepsilon(0)&=0, \quad B_\varepsilon(1)=1,
    \notag\\
    0\leq B_\varepsilon'(u)&\leq\frac1{1-\varepsilon},
    \notag\\
    B_\varepsilon^{(m)}(0)&=B_\varepsilon^{(m)}(1)=0
    \quad(m\geq1).
    \label{eq:lower_bound_interpolation_properties}
\end{align}
The partition counts $n_\mu$ describe the hard instance and are
distinct from the oracle-grid counts $N_\mu$.

On the tile with indices $0\leq i<n_x$ and $0\leq j<n_y$, use
coordinates $s:=(k_x-iw_x)/w_x$ and $t:=(k_y-jw_y)/w_y$ in $[0,1]$.
Define a polar excursion from north to south and back to north by
\begin{align}
    \vartheta(s):=
    \begin{cases}
        \pi B_\varepsilon(2s),&0\leq s\leq\frac12,\\
        \pi\qty[1-B_\varepsilon(2s-1)],&\frac12<s\leq1.
    \end{cases}
    \label{eq:lower_bound_polar_excursion}
\end{align}
For the bit $z_{ij}$, set
\begin{align}
    \varphi_{z_{ij}}(s,t)&:=
    \begin{cases}
        2\pi z_{ij}B_\varepsilon(t),&0\leq s\leq\frac12,\\
        0,&\frac12<s\leq1,
    \end{cases}
    \label{eq:lower_bound_tile_azimuth}\\
    \mathbf m(\theta,\phi)
    &:=\qty(\sin\theta\cos\phi,\sin\theta\sin\phi,\cos\theta),
    \label{eq:lower_bound_spherical_texture}\\
    \mathbf n_z(\mathbf k)
    &:=\mathbf m\qty(\vartheta(s),\varphi_{z_{ij}}(s,t)).
    \label{eq:lower_bound_tile_texture}
\end{align}
An active tile ($z_{ij}=1$) contains a positive defect: its texture
winds once in azimuth during the first
half of the tile. Every tile returns along the same meridian
during its second half.
Flatness of the interpolation at its endpoints makes the texture
smooth at the poles and across tile boundaries, including boundaries
between different bit values.
As proved in \cref{app:lower_bound_tiles}, the construction satisfies
\begin{align}
    \norm{\partial_{k_\mu}H_z}
    &\leq\frac{\Delta_{\min}n_\mu}{2(1-\varepsilon)}
    \leq L_\mu,
    \label{eq:lower_bound_tile_derivatives}\\
    M=n_xn_y&\geq\frac{a_xa_y}{4},\\
    C(H_z)&=\sum_{i,j}z_{ij}.
    \label{eq:lower_bound_tile_chern_count}
\end{align}

\subsubsection{Slabs on the boundary}
\label{sec:lower_bound_slabs}

When $a_y=1/2$ and $a_x>1/2$, retain the $n_x$ polar excursions
along $x$ and use the entire $y$ circle for the azimuthal winding.
In slab $i$, replace \cref{eq:lower_bound_tile_azimuth} by
\begin{align}
    \varphi_{z_i}(s,k_y):=
    \begin{cases}
        z_i k_y,&0\leq s\leq\frac12,\\
        0,&\frac12<s\leq1.
    \end{cases}
    \label{eq:lower_bound_slab_azimuth}
\end{align}
An active slab ($z_i=1$) contains a positive defect, with one azimuthal
winding around the $y$ circle on its first half; an inactive slab
($z_i=0$) has no azimuthal winding.

Choose $\varepsilon$ to satisfy \cref{eq:lower_bound_smoothing_budget}
only for $\mu=x$.
The winding is smooth and periodic in $k_y$ without interpolation
in that direction, giving
\begin{align}
    \norm{\partial_{k_x}H_z}
    &\leq\frac{\Delta_{\min}n_x}{2(1-\varepsilon)}\leq L_x,
    \notag\\
    \norm{\partial_{k_y}H_z}&\leq\frac{\Delta_{\min}}2=L_y,
    \label{eq:lower_bound_slab_derivatives}\\
    M=n_x&\geq\frac{a_x}{2}=a_xa_y,\\
    C(H_z)&=\sum_i z_i.
    \label{eq:lower_bound_slab_chern_count}
\end{align}
For $a_x=1/2$ and $a_y>1/2$, interchange the coordinates and
reverse the azimuthal winding to preserve the positive Chern
contribution of each active slab.
The resulting family has $M=n_y\geq a_xa_y$ and satisfies the
corresponding derivative bounds.
The derivative bounds and Chern-number formulas for
both boundary cases
are proved in \cref{app:lower_bound_slabs}.

\subsubsection{Flattened QWZ construction at the corner}
\label{sec:lower_bound_corner}

At $a_x=a_y=1/2$, compare a constant Hamiltonian with a
spectrally flattened orientation-reversed form of the Qi--Wu--Zhang (QWZ)
model~\cite{qi_topological_2006}:
\begin{align}
    H_0(\mathbf k)&:=\frac{\Delta_{\min}}2\sigma_z,
    \label{eq:lower_bound_corner_hamiltonian_0}\\
    H_1(\mathbf k)&:=\frac{\Delta_{\min}}2
        \frac{\mathbf d(\mathbf k)}{\abs{\mathbf d(\mathbf k)}}
        \cdot\bm\sigma,
    \label{eq:lower_bound_corner_hamiltonian_1}\\
    \mathbf d(\mathbf k)
    &:=\qty(\sin k_x,-\sin k_y,1+\cos k_x+\cos k_y).
    \label{eq:lower_bound_corner_bloch_vector}
\end{align}
Thus \(\mathbf d(\mathbf k)\) is precisely the Bloch vector of
\(H_{\rm QWZ}(\mathbf k;1)\) from \cref{eq:qwz_model_hardness}.
The normalization is smooth, and both Hamiltonians satisfy
$\norm{\partial_{k_\mu}H_z}\leq\Delta_{\min}/2=L_\mu$.
They share the reference ground state $\ket1$ at $(0,0)$ and have
$C(H_0)=0$ and $C(H_1)=1$ with the convention of
\cref{eq:berry_connection_curvature_def}.
The derivative and flattening checks are given in
\cref{app:lower_bound_corner}.
Taking the whole torus
as a single region gives $M=1$, with $H_0$
representing the inactive configuration and $H_1$ a positive defect.

\subsubsection{Common properties and multiband embedding}
\label{sec:lower_bound_common_properties}

All three basic families have the same properties needed for the lower bound.
After enumerating their bit regions by $j=1,\ldots,M$, they satisfy
\begin{align}
    M&\geq\frac{a_xa_y}{4},
    \label{eq:lower_bound_number_of_defects}\\
    C(H_z)&=\sum_{j=1}^M z_j.
    \label{eq:lower_bound_chern_hamming_weight}
\end{align}
If $z^{(j)}$ denotes $z$ with its $j$th bit flipped, then the
neighboring Chern numbers differ by one.
Their Hamiltonians differ only in the assigned region and obey
\begin{align}
    \norm{H_z(\mathbf k)-H_{z^{(j)}}(\mathbf k)}
    \leq\Delta_{\min}.
    \label{eq:lower_bound_neighbor_norm}
\end{align}
The common reference state can be supplied by the same preparation
circuit for every input.

The hard families also embed in any fixed number $N\geq2$ of bands:
\begin{align}
    H_z^{(N)}(\mathbf k)
    :=H_z(\mathbf k)\oplus\frac{\Delta_{\min}}2 I_{N-2}.
    \label{eq:lower_bound_multiband_embedding}
\end{align}
The added unoccupied bands preserve the nondegenerate ground state,
gap, norm, derivative bounds, and Chern number.
They are independent of both $z$ and $\mathbf k$, so neighboring
oracle differences acquire only a zero block.

\subsubsection{\texorpdfstring{Binary-Chern hard family}{Binary-Chern hard family}}
\label{sec:lower_bound_binary_family}

To establish the lower bound under the promise $C\in\qty{0,1}$,
we reserve
some tile or slab regions for fixed negative
defects.
Their contributions shift the total Chern numbers associated with
two adjacent central Hamming layers to zero and one.

For tiles and slabs, reversing the azimuthal winding of a positive
defect produces a negative defect.
On the first half of a tile, use
\begin{align}
    \varphi(s,t):=-2\pi B_\varepsilon(t),
    \label{eq:lower_bound_signed_tile_azimuth}
\end{align}
and retain $\varphi:=0$ on the return half.
For a slab, reverse the azimuthal winding on its first half:
use $-k_y$ when $a_y=1/2<a_x$, and $+k_x$ when $a_x=1/2<a_y$,
while retaining $\varphi:=0$ on the return half.
In both constructions, the Berry curvature changes sign, so the
region contributes $-1$.
The derivative bounds, smoothness, gap, norm, and common reference
ground state are preserved, as verified in
\cref{app:lower_bound_tiles,app:lower_bound_slabs}.

To assemble the binary-Chern family,
let $M$ be the available number of tile or slab regions and set
\begin{align}
    m:=\lfloor(M-1)/3\rfloor,
    \qquad M_{\mathrm{var}}:=2m+1.
    \label{eq:lower_bound_binary_region_counts}
\end{align}
Fix negative defects in $m$ regions and choose another $M_{\mathrm{var}}$ regions as
variable regions; this uses $m+M_{\mathrm{var}}=3m+1\leq M$ regions.
In variable region $j$, a bit $z_j$ specifies the absence or presence
of a positive defect, and all remaining regions are inactive.
Restrict $z\in\qty{0,1}^{M_{\mathrm{var}}}$ to $|z|=m$ or $|z|=m+1$, where
$|z|:=\sum_{j=1}^{M_{\mathrm{var}}} z_j$.
The resulting family, also denoted by $H_z$, satisfies
\begin{align}
    C(H_z)=-m+|z|\in\qty{0,1}.
    \label{eq:lower_bound_binary_chern_values}
\end{align}
All supplied bounds and the reference-state preparation circuit remain
fixed across this family, and adjacent allowed inputs differ in just one
variable region with the norm bound of \cref{eq:lower_bound_neighbor_norm}.
At $a_x=a_y=1/2$, use the existing flattened-QWZ pair with Chern
numbers zero and one directly.

\subsection{Common adversary estimate}
\label{sec:lower_bound_oracle_argument}

For each hard input $H_z$, we apply the common Hamiltonian-oracle
model of \cref{eq:controlled_hamiltonian_oracle,eq:hamiltonian_oracle_evolution}
with $H=H_z$, using the common
mesh specified in \cref{eq:alg_mesh}.
Write $\ket{\psi_z(t)}$ for the corresponding algorithm state.
The oracle-time cost is defined in
\cref{eq:oracle_time_definition}.
All supplied bounds and the reference-state preparation circuit
are fixed across the hard inputs, so the driver and initial state
carry no input-dependent information.

Successful estimation requires the algorithm to distinguish every
neighboring pair of bit strings.
We measure this distinguishability with the squared-distance progress
function~\cite{farhi_quantum_2008,yonge-mallo_adversary_2011}
\begin{align}
    W(t):=\frac1{2^M}\sum_{z\in\qty{0,1}^M}\sum_{j=1}^M
    \norm{\ket{\psi_z(t)}-\ket{\psi_{z^{(j)}}(t)}}^2.
    \label{eq:chern_progress_function}
\end{align}
All inputs start in the same state, including any prepared ground-state
copies, so $W(0)=0$.
Since neighboring Chern numbers differ by one, their successful output
intervals of radius $1/2$ are disjoint.
The final states must therefore have overlap at most
$2\sqrt{\eta(1-\eta)}$, where $\eta\in(0,1/2)$ bounds the estimation
failure probability on every input. This implies
\begin{align}
    W(T)\geq2\qty[1-2\sqrt{\eta(1-\eta)}]M.
    \label{eq:lower_bound_required_progress}
\end{align}
The lower bound on $W(T)$
is derived in
\cref{eq:app_lower_bound_required_progress_derivation}.

The disjoint bit regions limit how rapidly this progress can accumulate.
Their address projectors are mutually orthogonal, and each neighboring
oracle difference is supported on one such region with norm at most
$\Delta_{\min}$ by \cref{eq:lower_bound_neighbor_norm}.
Tensoring with $\ket{1}\!\bra{1}_{\mathrm c}$ preserves the region
support and the norm bound.
Differentiating \cref{eq:chern_progress_function} cancels the common
driver, and summing over the orthogonal regions gives
\begin{align}
    \abs{\frac{dW}{dt}}\leq2\Delta_{\min}\abs{g(t)}.
    \label{eq:key_progress_rate_bound}
\end{align}
For the corner pair, the same argument uses $M=1$ and the identity
as the region projector.
The full calculation appears in \cref{app:chern_lower_bound_adversary}.

Comparing the required progress with the integrated rate proves the
uniform oracle-time lower bound.
Since $W(T)\leq2\Delta_{\min}T_{\mathrm{oracle}}$, we obtain
\begin{align}
    T_{\mathrm{oracle}}
    &\geq\frac{1-2\sqrt{\eta(1-\eta)}}{\Delta_{\min}}M
    \notag\\
    &\geq c_\eta\frac{a_xa_y}{\Delta_{\min}}
    =c_\eta\frac{L_xL_y}{\Delta_{\min}^3}.
    \label{eq:chern_hamiltonian_oracle_scaling}
\end{align}
The second line uses \cref{eq:lower_bound_number_of_defects},
establishing \cref{thm:chern_lower_bound} throughout
$a_x,a_y\geq1/2$.
Together with \cref{eq:lower_bound_zero_branch}, this covers the
same two regimes as the algorithm.

For \cref{cor:binary_chern_lower_bound}, apply the adversary argument to
the relation between the two central Hamming layers defined in
\cref{app:lower_bound_binary_promise}.
This relation has degree $m+1$ on both sides and yields
\begin{align}
    T_{\mathrm{oracle}}
    \geq\frac{1-2\sqrt{\eta(1-\eta)}}{\Delta_{\min}}(m+1).
    \label{eq:lower_bound_binary_layer_runtime}
\end{align}
Since $m+1\geq M/3$ and $M\geq a_xa_y/4$, this proves
\cref{eq:binary_chern_lower_bound} for tiles and slabs.
The corner pair gives $\Omega(\Delta_{\min}^{-1})$, which equals
$\Omega(L_xL_y/\Delta_{\min}^3)$ at $a_x=a_y=1/2$.
The progress function, its rate bound, and the corner constant
are derived in
\cref{app:lower_bound_binary_promise}.

\section{Computational Complexity of Chern-Number Estimation}
\label{sec:bqp_hardness}

We now prove the succinct-input complexity classification stated in
\cref{thm:guided_cne_complexity}.

\subsection{Containment in \texorpdfstring{$\FBQP$}{FBQP}}
\label{sec:bqp_containment}
We apply the quantum algorithm of \cref{sec:upper_bound_algorithm}.
Let $L_x,L_y,H_{\max}\leq\poly(n)$ be the promised derivative and norm bounds,
let $\Delta_{\min}\geq1/\poly(n)$ be the promised gap lower bound, and let
$\gamma\geq1/\poly(n)$ bound the guiding-state overlap from below.
We account for both reference-state preparation and the gate cost of the algorithm.

First prepare two approximate copies of the ground state at the single reference point
$\mathbf{k}_0$.
Phase estimation of $H(\mathbf{k}_0)$ on fresh guiding states, with energy resolution less than
$\Delta_{\min}/8$, samples the ground energy with probability at least $\gamma$, up to the
phase-estimation error.
Taking the smallest outcome from $O(\gamma^{-1}\log(1/\epsilon))$ repetitions estimates
$E_0(\mathbf{k}_0)$ to this resolution with failure probability at most $\epsilon$, provided the
error probability of each phase-estimation call is chosen correspondingly smaller.
The resulting estimate defines an energy threshold inside the gap.
Applying an energy filter to fresh guiding states and accepting outcomes below this threshold
prepares the ground state to trace-distance error at most $\epsilon$.
Choosing the filter error polynomially smaller than $\gamma\epsilon^2$ and capping the number of
attempts at $O(\gamma^{-1}\log(1/\epsilon))$ makes both the preparation error and the probability
of exhausting the attempts $O(\epsilon)$.
All operations have polynomial gate cost for inverse-polynomial $\epsilon$.
Repeating this procedure supplies the two initial copies used by the algorithm.
No additional ground-state preparation is needed during its execution.

Next, implement the controlled forward and inverse Hamiltonian evolutions from the succinct local
description.
For fixed locality, collecting the local Pauli terms gives polynomially many terms with
polynomially bounded coefficients: each coefficient of the full Hamiltonian is bounded by its
operator norm.
Their parameter dependence can be evaluated reversibly to the required precision in polynomial
time.
Standard Hamiltonian simulation therefore realizes the required controlled evolutions with gate
cost polynomial in $n$, the evolution time, and the inverse simulation error.
For the membership claim, each GQSP functional transformation may equivalently
be implemented uniformly by coherent phase estimation of its signal unitary,
reversible evaluation of the required scalar function on the phase register,
phase kickback, and uncomputation.  The promised inverse-polynomial spectral
margins and inverse-polynomial target accuracy make the phase precision,
arithmetic, and gate synthesis polynomial in \(n\).  Hence a classical
polynomial-time algorithm generates the complete circuit description, not
merely a polynomial-length query sequence.
The mesh contains polynomially many cells, and the integer readout uses
polynomially many qubits and gates.  In particular, the
promised derivative and gap bounds make the range of possible Chern numbers
polynomially bounded by \cref{eq:alg_chern_bound}, so the classical output
has polynomial length.
In particular, the oracle-time bound becomes
\begin{align}
    T_{\mathrm{upper}}(\eta)
    =\widetilde O\!\left(\frac{L_xL_y}{\Delta_{\min}^3}\right)
    =\poly(n)
\end{align}
for fixed $\eta$, with all suppressed problem parameters polynomially bounded.

Finally, run the algorithm with ideal-input failure probability at most $1/12$.
Choose the total trace-distance error of the two prepared copies to be at most $1/12$, and the
accumulated implementation error of the simulation, GQSP, and readout gates to be at most $1/12$.
These accuracies require only polynomial resources.
Contractivity of trace distance and the circuit error bound then give total failure probability at
most $1/4$.
The circuit therefore outputs the exact integer \(C[P]\) with bounded error
in polynomial time.  This proves membership in \(\FBQP\).  The remainder of
the section proves \(\BQP\)-hardness.

\subsection{\texorpdfstring{$\BQP$}{BQP}-Hardness}
\label{sec:bqp_hardness_reduction}

\Cref{fig:hardness_overview} summarizes the topological part of the reduction.
The notation appearing in the figure is introduced below.

\begin{figure}[t]
  \centering
  \begin{tikzpicture}[
      x=1cm,y=1cm,
      box/.style={draw,rounded corners=2pt,align=center,
        inner sep=5pt,text width=6.7cm,font=\small},
      outcome/.style={draw,rounded corners=2pt,align=center,
        inner sep=4pt,text width=2.8cm,font=\small},
      arrow/.style={-{Latex[length=4pt]},line width=0.6pt},
      tag/.style={fill=white,inner sep=2pt,align=center,font=\scriptsize}
  ]
    \node[box] (circuit) at (0,0)
      {Input quantum circuit \(U_x\), Eq.~\eqref{eq:hardness_circuit_promise}\\
       $\pacc\ge2/3$ (accept) \quad or \quad
       $\pacc\le1/3$ (reject)};
    \node[box] (physical) at (0,-2.3)
      {Hamiltonian family, Eq.~\eqref{eq:hardness_physical_family}\\
       $H_x(\mathbf{k}):=H_0+rV(\mathbf{k})$};
    \node[box] (effective) at (0,-4.6)
      {Exact SW low-energy block, Eq.~\eqref{eq:hardness_exact_sw_block}\\
       \(H_{\rm eff}^{\rm SW}(\mathbf{k})=G_1(\mathbf{k})\)};
    \node[box] (qwz) at (0,-6.9)
      {First-order QWZ block, Eq.~\eqref{eq:hardness_gapped_homotopy}\\
       \(G_0(\mathbf{k})=E_0P_0+
       \dfrac rN H_{\rm QWZ}(\mathbf{k};\ueff(\pacc))\)};
    \node[outcome] (reject) at (1.9,-9.1)
      {Reject\\\(\pacc\le1/3\)\\
       \(\ueff\ge7/3\), \(C=0\)};
    \node[outcome] (accept) at (-1.9,-9.1)
      {Accept\\\(\pacc\ge2/3\)\\
       \(\ueff\le5/3\), \(C=1\)};

    \draw[arrow] (circuit) --
      node[tag]{Feynman--Kitaev encoding\\and endpoint perturbation}
      (physical);
    \draw[arrow] (physical) --
      node[tag]{smooth bundle isomorphism\\induced by \(W(\mathbf{k})\)}
      (effective);
    \draw[arrow] (effective) --
      node[tag]{gapped homotopy \(G_s\)}
      (qwz);
    \coordinate (fork) at (0,-8.05);
    \draw[line width=0.6pt] (qwz.south) -- (fork);
    \draw[arrow] (fork) -| (reject.north);
    \draw[arrow] (fork) -| (accept.north);
  \end{tikzpicture}
  \caption{Proof overview for the hardness reduction.  The circuit acceptance
  probability is encoded into the endpoint perturbation of a Feynman--Kitaev
  Hamiltonian.  The exact SW rotation
  identifies the physical ground bundle with the lower-band bundle of the
  effective Hamiltonian.  The gapped interpolation \(G_s\) then removes the
  higher-order SW remainder without changing the Chern number.  The resulting
  QWZ mass distinguishes rejecting and accepting circuits.}
  \label{fig:hardness_overview}
\end{figure}

\subsubsection{Preliminary: One-Qubit QWZ Model}
\label{sec:qwz_preliminary}

We first introduce a one-qubit QWZ model that will serve as the first-order
effective description of the low-energy sector of the Hamiltonian family
constructed below.

The basic local model behind the construction is a two-band Hamiltonian
\begin{align}
  H(\mathbf{k})&:=d_x(\mathbf{k})\sigma_x+d_y(\mathbf{k})\sigma_y
  +d_z(\mathbf{k})\sigma_z=\mathbf d(\mathbf{k})\cdot\sigma ,
\end{align}
where \(\mathbf{k}\in X\),
\(\mathbf d(\mathbf{k}):=(d_x(\mathbf{k}),d_y(\mathbf{k}),d_z(\mathbf{k}))
\in\mathbb R^3\) is the Bloch vector, and
\(\sigma:=(\sigma_x,\sigma_y,\sigma_z)\) denotes the Pauli matrices.  If
\(\mathbf d(\mathbf{k})\neq0\) for all \(\mathbf{k}\),
the two bands are separated by the gap
\(2\min_{\mathbf{k}}\|\mathbf d(\mathbf{k})\|\).  The normalized vector
\begin{align}
  \widehat{\mathbf d}(\mathbf{k})
  &:=\frac{\mathbf d(\mathbf{k})}{\|\mathbf d(\mathbf{k})\|}
\end{align}
defines a map \(\widehat{\mathbf d}: X\to S^2\).  The Chern number of a band is the
degree of this map, up to the sign convention determined by whether one takes
the upper or lower band and by the orientation of \( X\).  Thus the
topological content of a one-qubit Chern model is the wrapping of the Bloch
sphere.

We use an orientation-reversed form of the Qi--Wu--Zhang
model~\cite{qi_topological_2006}:
\begin{equation}
\label{eq:qwz_model_hardness}
  \begin{aligned}
  H_{\rm QWZ}(\mathbf{k};u)
  &:=\sin k_x\,\sigma_x-\sin k_y\,\sigma_y \\
  &\quad +(u+\cos k_x+\cos k_y)\sigma_z .
  \end{aligned}
\end{equation}
For large positive \(u\), the vector \(\mathbf d(\mathbf{k})\) stays near the north
pole of the Bloch sphere and the lower band is topologically trivial.  As
\(u\) is tuned, the gap closes at the phase boundaries where
\(\mathbf d(\mathbf{k})=0\), and the Chern number can change.  We use the
Chern-number convention of
\cref{eq:chern_number_def,eq:berry_curvature_projector_form}, with the
orientation \(dk_x\wedge dk_y\).  Let \(P_{\rm QWZ}(\mathbf{k};u)\) denote the
lower-band projector of \(H_{\rm QWZ}(\mathbf{k};u)\), and write
\(C_{\rm QWZ}(u):=C[P_{\rm QWZ}(\mathord\cdot;u)]\).
The four ``Dirac points'' are the momenta
\(k_x,k_y\in\{0,\pi\}\), where the sine components vanish.  The remaining
\(\sigma_z\) coefficient is the local mass, whose zero closes the gap.
Their contributions give
\[
  \begin{aligned}
  C_{\rm QWZ}(u)
  &=-\frac12\left[
  \operatorname{sgn}(u+2)-2\operatorname{sgn}(u)
  +\operatorname{sgn}(u-2)\right].
  \end{aligned}
\]
In particular,
\[
  u>2 \Rightarrow C_{\rm QWZ}(u)=0,\qquad
  0<u<2 \Rightarrow C_{\rm QWZ}(u)=1 .
\]
The sign and the uniform gap margin are checked in
\cref{app:qwz_calibration}.
Only this coarse dichotomy is needed below: a trivial region and a neighboring
topological region separated by a gap closing.

In the reduction, \(\tau_x,\tau_y,\tau_z\) act on a separate ancilla qubit
and play the role of the Pauli matrices above.  The circuit acceptance
probability sets \(\ueff(\pacc)\), placing rejecting and accepting circuits in
the trivial and topological regions, respectively.  The remaining argument
shows that this first-order identification survives in the full Hamiltonian
family.

\subsubsection{Construction}

Start with a \(\BQP\) circuit \(U_x\) with acceptance probability \(\pacc\), promised
to satisfy
\begin{equation}
\label{eq:hardness_circuit_promise}
  x\in L_{\rm yes}\Rightarrow \pacc\ge 2/3,\qquad
  x\in L_{\rm no}\Rightarrow \pacc\le 1/3 .
\end{equation}
Let \(\Hhist(U_x)\) be the Feynman--Kitaev history-state Hamiltonian for the
padded circuit described below, with unique ground state
\(\ket{\eta}\)~\cite{kitaev2002classical}.  We use the version without an
output penalty,
\[
  \Hhist(U_x):=H_{\rm in}+H_{\rm prop}+H_{\rm clock},
\]
acting on the full computational and unary-clock Hilbert space, where
\(H_{\rm clock}\) penalizes illegal clock strings.  The total Hilbert space is
\[
  \mathcal H_{\rm comp}\otimes\mathcal H_{\rm clock}
  \otimes\mathcal H_\tau,
  \qquad \mathcal H_\tau\cong\mathbb C^2.
\]
The circuit output qubit belongs to \(\mathcal H_{\rm comp}\), whereas
\(\mathcal H_\tau\) is a separate ancilla qubit with Pauli operators
\(\tau_x,\tau_y,\tau_z\).  Initially it is decoupled, so
\[
  H_0:=\Hhist(U_x)\otimes I_\tau .
\]
Let \(T\) be the number of gates in \(U_x\), and let \(n\) be the number of
computational qubits.  Choose the initial-idling length \(M:=Tn^2\).  This
polynomial choice is made before fixing the
perturbation strength \(r\).  Prepend these \(M\) identity gates to the
computation.  Denote the extended gate sequence with initial-idling by
\[
  \widetilde U_t:=I\quad(1\le t\le M),\qquad
  \widetilde U_{M+j}:=U_j\quad(1\le j\le T).
\]
The history state is then
\[
  \ket{\eta}
  :=
  \frac{1}{\sqrt N}
  \sum_{t=0}^{T+M}
  \widetilde U_t\cdots \widetilde U_1\ket{0^n}\otimes\ket{t}_C,
  \qquad
  N:=T+M+1.
\]
Fix a positive rational constant \(a_{\rm FK}\) below the standard
Feynman--Kitaev gap constant and, throughout the proof, set
\[
  \Dhist:=a_{\rm FK}N^{-3}
\]
so that the actual spectral gap of \(\Hhist(U_x)\) is at least \(\Dhist\)
\cite{kitaev2002classical,gharibian2012hardness}.
Using the unary convention
\(\ket{t}_C:=\ket{1^t0^{T+M-t}}\), the completed computation occurs at
\(t_f:=T+M\).  On the full clock Hilbert space, define the local final-clock
indicator by
\[
  \Pf:=\ket{1}\!\bra{1}_{C,T+M}.
\]
Its restriction to the legal unary-clock subspace is
\(\ket{t_f}\!\bra{t_f}\).  Define the output-acceptance projector and its
final-clock-gated version by
\[
  \Pi_{\rm acc}:=\ket{1}\!\bra{1}_{\rm out},
  \qquad
  \Oacc:=\Pi_{\rm acc}\otimes \Pf .
\]
For constants \(u_0:=3\) and \(c:=2\), define the endpoint perturbation by
\[
  \begin{aligned}
  V(\mathbf{k})
  &:=I_{\rm comp}\otimes\Pf\otimes\Bigl[
  \sin k_x\,\tau_x-\sin k_y\,\tau_y \\
  &\hspace{2.4em} +(u_0+\cos k_x+\cos k_y)\tau_z
  \Bigr]
  -c\Oacc\otimes\tau_z .
  \end{aligned}
\]
Here \(I_{\rm comp}\) acts on the computational register, \(\Pf\) acts on the
clock register, \(\Oacc\) acts on the computational and clock registers, and
the Pauli matrices act on the ancilla qubit.  Identities within
\(\Pi_{\rm acc}\) on computational qubits other than the output qubit are
implicit.  In particular, the
acceptance term contains exactly the one final-clock factor already included
in \(\Oacc\).  The output qubit enters only through \(\Pi_{\rm acc}\); the
Pauli operators \(\tau_a\) never act on it.
The Hamiltonian family is
\begin{equation}
\label{eq:hardness_physical_family}
  H_x(\mathbf{k}):=H_0+rV(\mathbf{k}),
\end{equation}
Here \(r\) is an inverse-polynomial perturbation strength.  Its precise value
will be specified in the Schrieffer--Wolff (SW) analysis below so that
\(r\ll\Dhist\).

The family is local with the same locality convention as the standard
Feynman-Kitaev construction.  
% The history Hamiltonian is local, and the
% endpoint perturbation couples only the output qubit, the final clock qubit,
% and the ancilla qubit \(\tau\).  
By its full-space definition, \(\Pf\) is a one-qubit projector.  Thus
\(I_{\rm comp}\otimes\Pf\otimes\tau_a\) is 2-local and
\(\Oacc\otimes\tau_z\) is 3-local; the full family remains 5-local because
of the Feynman--Kitaev terms.  In particular, \(V(\mathbf{k})\) is
\(O(1)\)-local and has a polynomial-size classical description.  Moreover,
\(V(\mathbf{k})\) is a trigonometric polynomial in \(\mathbf{k}\), so
\(\|\partial_{k_\mu}H_x(\mathbf{k})\|\) is polynomially bounded uniformly
over \( X\), for \(\mu\in\{x,y\}\).

\subsubsection{First-Order Effective QWZ Model}

Let \(P_0:=\ket{\eta}\!\bra{\eta}\otimes I_\tau\) project onto the
two-dimensional ground space of \(H_0\), where the twofold degeneracy is the
ancilla-qubit degree of freedom.  Explicitly,
\[
  \operatorname{range}(P_0)
  =\operatorname{span}\bigl\{
  \ket{\eta}\otimes\ket{0}_\tau,
  \ket{\eta}\otimes\ket{1}_\tau
  \bigr\}.
\]
The dimensionless perturbative parameter is
\(r\sup_{\mathbf{k}}\|V(\mathbf{k})\|/\Dhist\).  When this ratio is sufficiently small, the exact
Schrieffer--Wolff transformation block diagonalizes \(H_0+rV(\mathbf{k})\),
and its restriction to the low-energy block has the convergent expansion
\[
  H_{\rm eff}^{\rm SW}(\mathbf{k})
  =E_0P_0+rP_0V(\mathbf{k})P_0
   +\sum_{j\ge2}r^jH_{{\rm eff},j}(\mathbf{k}).
\]
The next section supplies the precise convergence condition and bounds the
summed higher-order terms by \(O(r^2\|V\|^2/\Dhist)\).  Thus the linear term
selects the target topological model, while the later stability argument shows
that the remaining terms cannot close its gap or change its Chern number.

We now compute this linear term.  Since the history state
has uniform amplitudes over the \(N=T+M+1\) clock sites,
\[
  \bra{\eta}\Pf\ket{\eta}=\frac1N,\qquad
  \bra{\eta}\Oacc\ket{\eta}=\frac{\pacc}{N}.
\]
Thus the first-order effective Hamiltonian on \(\operatorname{range}(P_0)\) is
\[
  \begin{aligned}
  P_0 rV(\mathbf{k})P_0
  &=\frac rN\Bigl[
  \sin k_x\,\tau_x-\sin k_y\,\tau_y \\
  &\hspace{3.9em}
  +\bigl(\ueff(\pacc)+\cos k_x+\cos k_y\bigr)\tau_z
  \Bigr].
  \end{aligned}
\]
where
\[
  \ueff(p):=u_0-cp=3-2p .
\]
This is the two-band Qi--Wu--Zhang model~\cite{qi_topological_2006} up to a
positive scale factor.  We obtain the following straightforward lemma.

\begin{lemma}[QWZ calibration]
For the above choice \(u_0=3,c=2\), no instances have
\(\ueff(\pacc)\ge 7/3>2\), while yes instances have
\(\ueff(\pacc)\le 5/3\in(0,2)\).  With the chosen orientation, the corresponding
QWZ ground-state Chern numbers are \(0\) and \(1\), respectively.
\end{lemma}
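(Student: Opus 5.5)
The argument has two parts. The first is elementary arithmetic with $\ueff(p)=3-2p$. The second evaluates the QWZ Chern number from the formula for $C_{\rm QWZ}(u)$ stated in \cref{sec:qwz_preliminary}.

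\textbf{Mass placement.} Since $\ueff$ is strictly decreasing and affine in $p$, the promise \cref{eq:hardness_circuit_promise} transfers directly to the mass:
\begin{itemize}
\item for no instances, $\pacc\le1/3$ gives $\ueff\ge 3-2/3=7/3>2$;
\item for yes instances, $\pacc\ge2/3$ gives $\ueff\le 3-4/3=5/3$.
\end{itemize}
Because $\pacc\le1$, we also have $\ueff\ge1>0$ for yes instances. So yes instances lie in $[1,5/3]\subset(0,2)$, and no instances lie in $[7/3,3]\subset(2,\infty)$. Both intervals are a constant distance $1/3$ away from the gap-closing values $u\in\{-2,0,2\}$. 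I would record this margin explicitly, because the later Schrieffer--Wolff stability argument needs a uniform gap.

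\textbf{Chern numbers.} I would substitute these ranges into
\[
C_{\rm QWZ}(u)=-\tfrac12\bigl[\operatorname{sgn}(u+2)-2\operatorname{sgn}(u)+\operatorname{sgn}(u-2)\bigr].
\]
\begin{itemize}
\item For $u>2$, the bracket is $1-2+1=0$, so $C=0$.
\item For $0<u<2$, the bracket is $1-2-1=-2$, so $C=1$.
\end{itemize}
One point to state: the first-order block carries the extra factor $r/N>0$. Multiplying $H_{\rm QWZ}$ by a positive scalar leaves its lower-band projector, and hence its Chern number, unchanged.

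\textbf{Main obstacle.} The only nontrivial step is justifying the sign formula under the orientation convention \cref{eq:chern_number_def,eq:berry_curvature_projector_form}. The natural route is the following:
\begin{itemize}
\item $C_{\rm QWZ}(u)$ is locally constant on each gapped interval of $u$, since the projector varies smoothly and the Chern number is an integer.
\item For $u>2$ it is zero: deform $u\to\infty$ without closing the gap, so that $\widehat{\mathbf d}$ stays in the northern hemisphere and is null-homotopic.
\item Crossing $u=2$ at the Dirac point $(\pi,\pi)$ changes $C$ by the local degree there, which is the sign of the Jacobian of $(\sin k_x,-\sin k_y)$ times the sign of the mass jump.
\end{itemize}
Fixing this sign is exactly the orientation check deferred to \cref{app:qwz_calibration}. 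Alternatively, the corner pair of \cref{sec:lower_bound_corner} already asserts $C(H_1)=1$ for $u=1$, and $u=1\in(0,2)$. Combining that with local constancy on $(0,2)$ gives $C=1$ for every yes instance without redoing the Dirac-point computation, so I would rely on that reference.
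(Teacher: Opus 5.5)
Your mass-placement step is correct and agrees with the paper. For $\pacc\le1/3$ you get $\ueff\in[7/3,3]$, and for $\pacc\ge2/3$ you get $\ueff\in[1,5/3]$. Both ranges are at distance at least $1/3$ from $\{-2,0,2\}$, and the positive factor $r/N$ leaves the lower-band projector unchanged.

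The gap is in the Chern-number half, which is the only non-arithmetic content of the lemma (``with the chosen orientation''). You outline the same route as \cref{app:qwz_calibration}: local constancy in $u$, triviality for $u>2$ because $d_z>0$, and a jump at $(\pi,\pi)$ when $u$ crosses $2$. However, you stop before determining the sign and then propose to rely on $C(H_1)=1$ from the corner pair. That fallback is circular, because in \cref{app:lower_bound_corner} the paper obtains $C(H_1)=1$ precisely by invoking this QWZ calibration. Substituting into the $\operatorname{sgn}$ formula of \cref{sec:qwz_preliminary} does not help either. That formula is only asserted there, with its sign deferred to the same appendix, so using it presupposes the orientation claim rather than proving it.

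What is missing is two short computations, which is exactly what the paper's appendix supplies.
\begin{enumerate}
\item For $P_-=(I-\hat{\mathbf d}\cdot\sigma)/2$ one has $\partial_\mu P_-=-\tfrac12\,\partial_\mu\hat{\mathbf d}\cdot\sigma$. Combined with $[\mathbf a\cdot\sigma,\mathbf b\cdot\sigma]=2i(\mathbf a\times\mathbf b)\cdot\sigma$, this gives $\Omega=\tfrac12\,\hat{\mathbf d}\cdot(\partial_{k_x}\hat{\mathbf d}\times\partial_{k_y}\hat{\mathbf d})$. Hence $C[P_-]=+\deg\hat{\mathbf d}$ under \cref{eq:berry_curvature_projector_form}, with no extra sign.
\item Evaluate the degree, say at $u=1$. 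The only point with $\hat{\mathbf d}=-\mathbf e_z$ is $(\pi,\pi)$, since the other Dirac points have $d_z=3,1,1>0$. There $\mathbf d\approx(-q_x,q_y,-1)$, and the triple product is $(0,0,-1)\cdot[(-1,0,0)\times(0,1,0)]=+1$. Hence $\deg\hat{\mathbf d}=1$, and local constancy gives $C=1$ on all of $(0,2)$.
\end{enumerate}

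In the paper's bookkeeping, $\chi_{(\pi,\pi)}=\operatorname{sgn}(\cos\pi\cdot(-\cos\pi))=-1$. As $u$ decreases through $2$, the mass goes from positive to negative, so $C$ changes by $-\chi_{(\pi,\pi)}=+1$. Your rule ``Jacobian sign times sign of the mass jump'' would give the same $(-1)(-1)=+1$ if evaluated. Its overall sign, however, is precisely what the first computation is needed to justify.
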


\subsubsection{Exact Schrieffer--Wolff Stability}

We use the exact Schrieffer--Wolff transformation with a deliberately
conservative perturbation strength.
Set
\begin{equation}
\label{eq:hardness_perturbation_norm}
  B_V:=|u_0|+c+4=9,
\end{equation}
so that \(\|V(\mathbf{k})\|\le B_V\) uniformly over \( X\).

\begin{lemma}[Exact SW estimate]
\label{lem:exact_sw}
Let \(P_0\) be the rank-two ground eigenspace of \(H_0\), with eigenvalue
\(E_0\), separated from the rest of the spectrum by at least \(\Dhist\), and let
\(V(\mathbf{k})\) be periodic real-analytic with
\(\|V(\mathbf{k})\|\le B_V\) uniformly in \(\mathbf{k}\).
If
\[
  rB_V\le \frac{\Dhist}{32},
\]
then $H_x(\mathbf{k}):=H_0+rV(\mathbf{k})$ has an isolated rank-two
low-energy cluster and a periodic real-analytic direct rotation
\(W(\mathbf{k})\) mapping its spectral projector to \(P_0\).  The effective
Hamiltonian
\begin{equation}
\label{eq:hardness_exact_sw_block}
  H_{\rm eff}^{\rm SW}(\mathbf{k})
  :=
  P_0W(\mathbf{k})H_x(\mathbf{k})W(\mathbf{k})^\dagger P_0
\end{equation}
has remainder
\begin{align}
  R_{\rm SW}(\mathbf{k})
  &:=H_{\rm eff}^{\rm SW}(\mathbf{k})-E_0P_0-rP_0V(\mathbf{k})P_0, \\
  \|R_{\rm SW}(\mathbf{k})\|
  &\le 4\frac{r^2B_V^2}{\Dhist}.
\end{align}
uniformly over \( X\).
\end{lemma}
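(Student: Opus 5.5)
Our plan is to use standard Riesz-projector perturbation theory followed by the Bravyi--DiVincenzo--Loss direct-rotation construction~\cite{bravyi2011schriefferwolff}. Throughout, $\epsilon:=rB_V/\Dhist\le 1/32$.

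\textbf{Step 1: isolated cluster and its projector.} Take the contour $\Gamma$ to be the circle of radius $\Dhist/2$ centred at $E_0$. On $\Gamma$ we have $\|(z-H_0)^{-1}\|\le 2/\Dhist$ and $\|rV(\mathbf{k})\|\le \epsilon\Dhist\le\Dhist/32$, so the Neumann series for $(z-H_0-rV)^{-1}$ converges. By Weyl's inequality, exactly two eigenvalues of $H_x(\mathbf{k})$ lie within $\Dhist/32$ of $E_0$. All other eigenvalues lie at distance at least $31\Dhist/32$ from $E_0$. This gives a rank-two cluster separated by a gap of at least $\Dhist/2$ from the rest of the spectrum. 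Define
\[
P(\mathbf{k}):=\frac{1}{2\pi i}\oint_\Gamma (z-H_x(\mathbf{k}))^{-1}\,dz.
\]
It is real-analytic and periodic, since $V$ is and the resolvent depends analytically on $V$. Integrating the first Neumann term gives $\|P-P_0\|\le 2\epsilon+O(\epsilon^2)<1$; the length of $\Gamma$ and the resolvent norms supply the constant.

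\textbf{Step 2: direct rotation.} Since $\|P-P_0\|<1$, define $W:=\sqrt{(I-2P_0)(I-2P)}$ using the principal branch. The spectrum of the reflection product lies in an arc away from $-1$, so this square root is real-analytic and periodic in $\mathbf{k}$ and satisfies $WPW^\dagger=P_0$. This is the direct rotation of Bravyi et al. It is also the unique generator $W=e^{S}$ with $S$ block off-diagonal with respect to $P_0$ and $\|S\|<\pi/2$. Because $W$ intertwines the projectors, $WH_xW^\dagger$ is block diagonal. Its $P_0$ block is therefore exactly $H_{\rm eff}^{\rm SW}$, which has the same spectrum as $H_x$ restricted to the cluster.

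\textbf{Step 3: remainder bound (main obstacle).} Write $P_0WH_xW^\dagger P_0=E_0P_0+rP_0VP_0+R_{\rm SW}$ and expand with $S=O(\epsilon)$. The exact block-diagonalization condition gives $[H_0,S_1]=-r\,\mathcal O(V)$ at first order, where $\mathcal O$ denotes the off-diagonal part. Hence $\|S\|\lesssim r B_V/\Dhist$. Expanding $e^{S}H_xe^{-S}$, the second-order term is $\tfrac12 P_0[S,r\mathcal O(V)]P_0$, which has norm at most $r^2B_V^2/\Dhist$. Higher orders form a geometric series in $\epsilon$.

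To obtain the clean constant $4$, we would use the convergent-series estimate of Bravyi et al.: for $\|rV\|\le\Delta/16$ (here the gap is $\Dhist$), the $j$-th order term is bounded by a Catalan-type coefficient times $\Dhist\epsilon^j$. Summing over $j\ge2$ with $\epsilon\le1/32$ gives at most $4\Dhist\epsilon^2=4r^2B_V^2/\Dhist$.

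If matching their constants proves awkward, there is a fallback. Write $R_{\rm SW}=P_0W(H_x-E_0)W^\dagger P_0-rP_0VP_0$ and use $\|W-I\|\le\|P-P_0\|/\sqrt{\,\cdot\,}$. Then bound $P_0(H_x-E_0)P=P_0(H_x-E_0)(P-P_0)+rP_0VP_0$ using the second-order resolvent term. With the slack available at $\epsilon\le1/32$, this should still fit under the factor $4$.

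Uniformity over $X$ follows because every estimate depends only on $B_V$, $\Dhist$, and $r$.
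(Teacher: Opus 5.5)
Your Steps 1 and 2 are sound and match the paper. The Riesz projector on the circle of radius $\Dhist/2$ gives the isolated rank-two cluster with a periodic real-analytic projector $P$. The principal square root of $(I-2P_0)(I-2P)$ is the periodic real-analytic direct rotation.

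The gap is in Step 3, which carries the lemma's only quantitative content: neither of your two routes actually delivers the constant $4$.

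\textbf{Primary route.} It rests on a ``Catalan-type'' termwise bound attributed to Bravyi--DiVincenzo--Loss, but you never state or sum those coefficients. You quote a convergence radius $\|rV\|<\Dhist/16$, so $\epsilon=1/32$ sits at half that radius. You give no computation showing that the tail $\sum_{j\ge2}$ is as small as $4\Dhist\epsilon^2$ there.

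\textbf{Fallback route.} This has a concrete leading-order defect. Let $X:=P(H_x-E_0)P$, so $\|X\|\le rB_V$. With the crude bound $\|W-I\|\approx\|P-P_0\|\lesssim2\epsilon$, the two cross terms $P_0(W-I)XW^\dagger P_0$ and $P_0X(W^\dagger-I)P_0$ contribute about $4\epsilon\,rB_V$. The term $rP_0V(P-P_0)P_0$ adds between $\epsilon\,rB_V$ and $2\epsilon\,rB_V$. You therefore land near $5$--$6$ times $r^2B_V^2/\Dhist$. The smallness $\epsilon\le1/32$ only damps higher orders; it cannot lower a first-order coefficient.

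This is not cosmetic. The later choice $\alpha=1/4096<c_0/(16B_V^2)=1/3888$ uses the factor $4$ with almost no room to spare.

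What is missing is an exact expression for the compressed block, in place of a series. The paper writes $\operatorname{range}(P)$ as the graph of a map $Y:\operatorname{range}(P_0)\to\operatorname{range}(I-P_0)$, with $y:=\|Y\|\le2\epsilon/\sqrt{1-4\epsilon^2}\le3\epsilon$. Two facts then combine: $W^\dagger P_0=(I+Y)(I+Y^\dagger Y)^{-1/2}P_0$, and $\operatorname{range}(P)$ is $H_x$-invariant. Together they give exactly $H_{\rm eff}^{\rm SW}-E_0P_0=G(A+BY)G^{-1}$, where $G:=(I+Y^\dagger Y)^{1/2}$, $A:=rP_0VP_0$, and $B:=rP_0V(I-P_0)$. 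Hence $\|R_{\rm SW}\|\le rB_V\bigl(y^2+y\sqrt{1+y^2}\bigr)\le4r^2B_V^2/\Dhist$.

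Alternatively, you can repair your fallback by noting that $P_0(W-I)P$ is second order. In each principal plane it equals $(1-\cos\theta)$ times a rank-one partial isometry. So the cross terms are only $O(\epsilon^2 rB_V)$. The only first-order piece is then $rP_0V(I-P_0)PP_0$, whose norm is at most $rB_V\|P-P_0\|\approx2\epsilon\,rB_V$, and that fits under $4$.
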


\begin{proof}
This is a standard consequence of exact Schrieffer--Wolff perturbation
theory~\cite{bravyi2011schriefferwolff}.  We record a short estimate for the
constants used here.  Shift \(E_0\) to zero and
write \(e:=rB_V\) and \(s:=e/\Dhist\le1/32\).  The projector estimate in
Lemma 3.1 of that reference gives
\(\|P-P_0\|\le2s\) for the perturbed cluster projector \(P\).  Its
map \(Y:\operatorname{range}(P_0)\to\operatorname{range}(I-P_0)\) therefore
has norm \(y:=\|Y\|\le2s/\sqrt{1-4s^2}\le3s\).  With
\(S:=(I+Y^\dagger Y)^{1/2}\), \(A:=P_0H_xP_0\), and
\(B:=P_0H_x(I-P_0)\), we have \(\|A\|,\|B\|\le e\) after the shift, and
the canonical direct rotation gives
\[
  \begin{aligned}
  H_{\rm eff}^{\rm SW}&=S(A+BY)S^{-1},\\
  \|R_{\rm SW}\|
  &\le e\bigl(y^2+y\sqrt{1+y^2}\bigr)
  \le4\frac{e^2}{\Dhist}.
  \end{aligned}
\]
Lemma 3.4 of the same reference also places \(s\le1/32\) inside its
absolute-convergence radius.  The Riesz-projector construction makes the
cluster projector, and hence the direct rotation, periodic and real-analytic
in \(\mathbf{k}\).  This proves the claim.
\end{proof}

\subsubsection{Uniform Gap and Chern Number Transfer}
\label{sec:gap_chern_transfer}

We now upgrade the first-order identification to a statement about the full
Hamiltonian family.  There are two gaps to control: the splitting of the two
bands inside the low-energy rank-two cluster, and the separation of that
cluster from the history-excited spectrum.  The basic principle is simple.
If two Hamiltonian families can be joined continuously without closing their
ground-band gap, then no topological phase transition occurs along the way,
and their Chern numbers agree.  We use this gapped interpolation first to
compare the QWZ and exact SW Hamiltonians, and then use the exact SW unitary
to compare the latter with the physical ground band.

Recall that, after identifying \(\operatorname{range}(P_0)\) with the
ancilla-qubit space,
\[
  P_0V(\mathbf{k})P_0
  =
  \frac1N
  H_{\rm QWZ}\bigl(\mathbf{k};\ueff(\pacc)\bigr).
\]
Let \(c_0:=1/3\), the distance of both promise sectors from the nearest QWZ
phase boundary: no instances have \(\ueff\ge7/3\), whereas yes instances have
\(1\le\ueff\le5/3\).  Hence, before multiplication by \(r/N\), the two QWZ
bands are separated by at least \(2c_0\).  Using \(B_V=9\) from
\cref{eq:hardness_perturbation_norm}, choose the positive rational constant
\(\alpha\) by
\[
  \alpha:=\frac1{4096}<
  \min\left\{
    \frac{1}{32B_V},
    \frac{c_0}{16B_V^2},
    \frac{1}{8B_V}
  \right\}=\frac1{3888},
\]
and set
\[
  r:=\frac{\alpha\Dhist}{N}.
\]
The three restrictions on \(\alpha\) respectively place the perturbation in
the SW convergence regime, make the full higher-order tail small compared
with the QWZ gap, and keep the low-energy cluster away from the
history-excited spectrum.

\paragraph{SW remainder.}
By the definition of \(r\),
\[
  \frac{r}{\Dhist}=\frac{\alpha}{N}.
\]
The SW remainder lemma therefore gives
\[
  \|R_{\rm SW}(\mathbf{k})\|
  \le
  4\frac{r^2B_V^2}{\Dhist}
  \le
  4\alpha B_V^2\frac rN
  \le
  \frac{c_0r}{4N}.
\]
This estimate controls the sum of all terms of order two and higher, not only
the second-order correction.  It is therefore enough to compare the exact SW
Hamiltonian directly with its first-order QWZ part.

\paragraph{Uniform spectral gaps.}
The two eigenvalues of the first-order QWZ Hamiltonian are separated by at
least \(2c_0r/N\).  Weyl's inequality shows that the two eigenvalues of the
exact SW Hamiltonian are separated by at least
\[
  \frac{2c_0r}{N}-2\|R_{\rm SW}\|
  \ge\frac{3c_0r}{2N}
  \ge\frac{c_0r}{N}.
\]
Because the SW transformation is unitary, this is also the physical
ground-band gap.  Thus the full Hamiltonian has an isolated, nondegenerate
ground band, and we may choose the supplied gap lower bound as
\begin{align}
  \Delta_{\min}:=\frac{c_0r}{N}=\Omega(N^{-5}).
\end{align}
The choice \(\alpha\le1/(8B_V)\) and another application of
Weyl's inequality keep the rank-two cluster separated from all
history-excited states by at least \(\Dhist/2\).
Thus the internal ground-band gap is \(\Omega(N^{-5})\), while the larger
external gap from the low-energy cluster to the history excitations is
\(\Omega(N^{-3})\).

\paragraph{Chern-number transfer.}
Norm closeness alone does not yet identify an integer topological invariant.
Instead, we turn on the remainder continuously.  On
\(\operatorname{range}(P_0)\), consider the straight-line interpolation
\begin{align}
  G_s(\mathbf{k})
  &:=E_0P_0
  +\frac rN H_{\rm QWZ}\qty(\mathbf{k};\ueff(\pacc))
  \notag\\
  &\quad +sR_{\rm SW}(\mathbf{k}),
  \qquad 0\le s\le1.
  \label{eq:hardness_gapped_homotopy}
\end{align}
The preceding norm estimate keeps its two bands separated for every
\((s,\mathbf{k})\).  Thus the lower-band projector varies smoothly with
\(s\), and its Chern number---an integer---cannot change continuously.
Hence the lower-band Chern number of \(G_s\) is independent of \(s\).  At
\(s=0\) it is the QWZ Chern number, while at \(s=1\) it is the lower-band
Chern number of the exact SW Hamiltonian.

Finally, the periodic real-analytic unitary \(W(\mathbf{k})\) is a smooth
change of basis: it maps the physical ground line bundle isomorphically onto
the lower-band bundle of \(H_{\rm eff}^{\rm SW}(\mathbf{k})\).  A smooth
change of basis does not alter the first Chern number.  These two
steps---gapped interpolation followed by the SW bundle isomorphism---give,
for the original physical Hamiltonian,
\[
  C[P]=C_{\rm QWZ}(\ueff(\pacc)).
\]

The conservative choice
\(r=\Theta(\Dhist/N)=\Theta(N^{-4})\) gives a
slightly smaller band-gap bound than a structure-specific higher-order
analysis, but preserves every inverse-polynomial promise used by the
algorithm.

\subsubsection{Guiding State}

At the fixed base point \(\mathbf{k}_0:=(0,0)\), the effective QWZ Hamiltonian is
\((\ueff(\pacc)+2)\tau_z\), with \(\ueff(\pacc)\in[1,3]\).  Its ground spinor
is therefore the fixed \(\tau_z\)-eigenstate \(\ket{1}_\tau\), using
\(\tau_z\ket{1}=-\ket{1}\).  This choice is independent of the unknown
acceptance probability.
Moreover, at \(\mathbf{k}_0\) the full perturbation contains only
\(I_\tau\) and \(\tau_z\).  Hence \(H_x(\mathbf{k}_0)\), its low-energy
projector, and the direct rotation commute with \(\tau_z\).
The preceding gap-preserving interpolation therefore shows that the exact ground state has
the same fixed ancilla-qubit state \(\ket{1}_\tau\), not merely its first-order
approximation.

For the history-state factor, use the efficiently preparable initial-idling
state
\[
  \ket{c_{\rm hist}}
  :=
  \ket{0^n}\otimes
  \frac{1}{\sqrt{M+1}}\sum_{t=0}^{M}\ket t .
\]
Since the first \(M\) gates are identities,
\[
  \abs{\braket{c_{\rm hist}}{\eta}}^2
  =\frac{M+1}{N}=1-\frac{T}{N}.
\]
Our choice of \(M\) gives \(T/N\le 1/(n^2+1)\le 1/2\).
The full guiding state is
\[
  \ket c:=\ket{c_{\rm hist}}\otimes\ket{1}_\tau,
\]
and satisfies
\[
  \abs{\braket{c}{u(\mathbf{k}_0)}}^2
  \ge
  1-\frac{T}{N}
  -O\!\left(\frac{r}{\Dhist}\right)
  =\Omega(1).
\]
Initial idling controls only the guiding-state overlap; the single final-clock
projector controls only the acceptance signal.

\subsubsection{Verification of the Algorithmic Promises}

We finish the reduction by checking that the constructed family satisfies the
input and access promises used by the algorithmic upper bound.

\paragraph{Succinct description and derivative bounds.}
The reduction outputs a polynomial list of standard Feynman--Kitaev terms,
the supports of the final-clock, output, and ancilla-qubit terms, the rational
constants \(u_0,c,r\), and the symbolic coefficient functions
\(\sin k_\mu,\cos k_\mu\), \(\mu\in\{x,y\}\).  These coefficients can be
evaluated to \(b\) bits in time polynomial in the input size and \(b\).
Moreover,
\[
  \begin{gathered}
  \|H_x(\mathbf{k})\|\le\poly(n), \\
  \max_{\mathbf{k}\in X}
  \|\partial_{k_\mu}H_x(\mathbf{k})\|\le2r,
  \quad \mu\in\{x,y\}.
  \end{gathered}
\]
and
\[
  \begin{gathered}
  \max_{\mathbf{k}\in X}
  \|\partial_{k_\mu}\partial_{k_\nu}H_x(\mathbf{k})\|
  \le2r, \\
  \mu,\nu\in\{x,y\}.
  \end{gathered}
\]
Thus the usual local-term block encoding and Hamiltonian simulation have
polynomial cost.

\paragraph{Connection to the algorithm.}
The explicit guiding state and inverse-polynomial gap provide the two reference-state copies
required by \cref{sec:upper_bound_algorithm}, using the preparation procedure in the containment
argument.
The succinct local description supplies controlled forward and inverse Hamiltonian simulation with
polynomial gate cost.
Thus the constructed instances satisfy the resources needed to apply that algorithm in the
classical-input model.

\paragraph{Completion of the reduction.}
Putting the pieces together: the reduction constructs \(H_x(\mathbf{k})\) in
polynomial time; locality and polynomial derivative bounds hold by construction; the
spectral gap is inverse-polynomial by the perturbative stability analysis; the
true ground band has the same Chern number as the QWZ model by the exact SW
bundle isomorphism and the effective-Hamiltonian homotopy; and the guiding
state is explicit.  Hence yes
instances give \(C=1\) and no instances give \(C=0\).
In this precise sense, the reduction compiles the computational outcome into
whether a uniformly gapped ground band is topologically trivial or carries
unit Chern number.

\section{Discussion and Conclusion}
\label{sec:discussion}

\paragraph{Conclusion.}
We have established a nearly optimal quantum algorithm 
for Chern-number estimation and characterized its
computational complexity.
Using GQSP and two ground-state copies prepared at a single reference point,
the algorithm achieves a Hamiltonian-oracle evolution time of $\widetilde
	O(L_xL_y/\Delta_{\min}^3)$ under the stated access assumptions.
    A matching worst-case oracle lower bound under the same access model
    establishes optimality up to polylogarithmic factors.
The same asymptotic oracle lower-bound scaling holds even under the promise $C\in\qty{0,1}$.
For succinct local Hamiltonian families satisfying the gap and guiding-state
promises, we also prove that exact Chern-number computation is in \(\FBQP\)
and is \(\BQP\)-hard.  Its restriction to Chern number zero or one is
\(\BQP\)-complete.  Thus, even extracting a quantized integer invariant can
encode arbitrary quantum computation.
These results establish both an achievable quantum runtime and limits on its improvement, while
showing that quantization alone does not remove worst-case computational hardness.

\paragraph{Discussion.}
An important next step is to determine how additional physical structure changes these conclusions.
For local many-body systems parametrized by boundary twists, locality and a gap can suppress twist
dependence, enabling estimates without full parameter-space integration in suitable
regimes~\cite{watanabe_twist_2018,kudo_chern_2019}.
Exploiting such structure may lead to more efficient quantum algorithms.
A complementary question is whether the computational
hardness established here persists within physically motivated classes, such as geometrically local
lattice Hamiltonians whose parameters enter only through boundary twists.
This question is not settled by our reduction, which allows
general periodic dependence of local terms
without imposing the boundary-twist restriction.
The same restriction also raises a separate question for oracle complexity.
The matching oracle lower bound is proved for general smooth periodic Hamiltonian families and
holds even under $C\in\qty{0,1}$, independently of the
$\BQP$-completeness result for the classical-input problem.
It remains open whether the same oracle lower bound holds for geometrically local lattice
Hamiltonians whose parameters enter only through boundary twists.
Our results are also related to recent work on the hardness of recognizing phases of matter~\cite{schuster_phases_2025}, where quantum hardness was established for ground states of log-local Hamiltonians.
In contrast, we establish classical hardness of Chern number estimation for constant-local Hamiltonians.
It would be interesting to investigate the relationship between such phase-recognition problems and the complexity of extracting explicit topological invariants such as the Chern number.

Extending the algorithm to an isolated, possibly degenerate ground-state manifold would connect
the framework more directly to fractional Chern insulators~\cite{chen_nonabelian_mote2_2025}.
This requires treating higher-rank projectors and matrix-valued holonomies, together with suitable
preparation resources and a coherent readout of the manifold's Chern number.
Such an extension would broaden the interacting phases accessible to the algorithm beyond the
unique-ground-state setting considered here.

Finally, assessing quantum advantage in concrete models requires resources beyond
Hamiltonian-oracle time.
The system-size dependence of the gap and derivative bounds, the cost of reference-state
preparation, and the gate costs of Hamiltonian simulation and GQSP must be included.
Comparing these total resources with exact diagonalization and tensor-network methods would
identify regimes in which the algorithm offers an advantage for physically relevant systems.
This would connect the present complexity results to the study of correlated topological phases and
their stability at larger system sizes.

\begin{acknowledgments}
We thank Takahiro Morimoto for helpful discussions.
This work was supported by the Center of Innovation for Sustainable Quantum AI
(JST Grant No.~JPMJPF2221) and JST ASPIRE (Grant No.~JPMJAP2319).
S.I. acknowledges support from FoPM (a WINGS Program) and the JSR Fellowship
at the University of Tokyo.
S.A. acknowledges support from the Toyota Riken Overseas Fellowship.
K.S. acknowledges support from MEXT Q-LEAP (Grant No.~JPMXS0120319794)
and JST SPRING (Grant No.~JPMJSP2138).
R.H. acknowledges support from JST PRESTO (Grant No.~JPMJPR23F9)
and JST ASPIRE (Grant No.~JPMJAP26A4), Japan.
C.K. acknowledges support from JST PRESTO (Grant No.~JPMJPR25F1)
and JSPS KAKENHI (Grant No.~JP26K17052).
\end{acknowledgments}

\appendix
\crefalias{section}{appendix}
\crefalias{subsection}{subappendix}
\crefalias{subsubsection}{subsubappendix}

\section{Proof Details for the Chern-Number Algorithm}
\label{app:nonadiabatic_algorithm}

We prove the geometric identity, local implementation guarantees, and
resource bounds used in \cref{sec:upper_bound_algorithm}.

Throughout, the Hamiltonian, ground-state projector, Berry connection,
and supplied bounds are those of \cref{sec:chern_definition}; the mesh
and local primitives are those of \cref{sec:upper_bound_algorithm}.

\subsection{Geometric bounds and the exact phase-sum identity}
\label{app:alg_geometric_proofs}

The geometric estimates serve two purposes: they certify the
early-return case and control the phases on the finite mesh.
We first establish the projector bounds, then use them to prove the
exact phase-sum identity.

\subsubsection{Projector bounds and the zero-Chern certificate}
\label{app:alg_geometry}

The spectral gap controls how rapidly the ground-state projector
can change with the parameters.
Let $Q:=I-P$ and introduce the reduced resolvent
\begin{align}
    S(\mathbf{k})
    :=\qty[Q\qty(H(\mathbf{k})-E_0(\mathbf{k}))Q]^{-1}Q,
    \label{eq:alg_reduced_resolvent}
\end{align}
where the inverse acts on $\operatorname{Ran}Q$. The spectral gap implies
$\norm{S}=1/\Delta(\mathbf{k})$. Differentiating $HP=E_0P$ and
$P^2=P$ gives
\begin{align}
    X_\mu:=Q(\partial_{k_\mu}P)P
    &=-S(\partial_{k_\mu}H)P,
    \notag\\
    P(\partial_{k_\mu}P)P&=Q(\partial_{k_\mu}P)Q=0.
    \label{eq:alg_projector_blocks}
\end{align}
Thus, with respect to $\operatorname{Ran}P\oplus\operatorname{Ran}Q$,
\begin{align}
    \partial_{k_\mu}P
    &=\begin{pmatrix}0&X_\mu^\dagger\\X_\mu&0\end{pmatrix},
    \notag\\
    \norm{\partial_{k_\mu}P}
    &=\norm{X_\mu}
    \leq\frac{\norm{\partial_{k_\mu}H}}{\Delta(\mathbf{k})}.
    \label{eq:alg_projector_proof}
\end{align}
The rank-one projector in \cref{eq:berry_curvature_projector_form}
then yields
\begin{align}
    |\Omega(\mathbf{k})|
    &\leq2\norm{\partial_{k_x}P}\norm{\partial_{k_y}P}
    \notag\\
    &\leq\frac{2\norm{\partial_{k_x}H}\norm{\partial_{k_y}H}}
                      {\Delta(\mathbf{k})^2}.
    \label{eq:alg_curvature_proof}
\end{align}
This proves \cref{eq:alg_projector_bound} and
\cref{eq:berry_curvature_hamiltonian_bound}. Integrating over the torus
of area $4\pi^2$ proves \cref{eq:alg_chern_bound}.

A sufficiently small derivative bound in either direction guarantees
a smooth periodic ground state over the whole torus.
Suppose $L_x<\Delta_{\min}/2$. On the circle $k_x=0$, choose a smooth,
periodic normalized ground state $\ket{u_{\mathrm{ref}}(k_y)}$.
For any $\mathbf{k}$, horizontally transport this state along a shortest
$k_x$ arc from $(0,k_y)$ to $\mathbf{k}$, of length $\ell\leq\pi$.
Writing $P(s)$ for the projector along this arc, with $s$ its arc-length
parameter, denote the transported normalized state by
$\ket{u_{\parallel}(s)}$. It satisfies
\begin{align}
    \ket{u_{\parallel}(0)}&=\ket{u_{\mathrm{ref}}(k_y)},
    \notag\\
    P(s)&=\ket{u_{\parallel}(s)}\bra{u_{\parallel}(s)}.
    \label{eq:alg_zero_transport_state}
\end{align}
We fix its phase along the arc by imposing the horizontal-transport
condition
\begin{align}
    \bra{u_{\parallel}(s)}\partial_s\ket{u_{\parallel}(s)}=0.
    \label{eq:alg_zero_horizontal_transport}
\end{align}
In particular,
$P(\mathbf{k})=\ket{u_{\parallel}(\ell)}\bra{u_{\parallel}(\ell)}$.
Differentiating $P(s)\ket{u_{\parallel}(s)}=\ket{u_{\parallel}(s)}$
and using \cref{eq:alg_zero_horizontal_transport} gives
$\partial_s\ket{u_{\parallel}(s)}=(\partial_sP(s))\ket{u_{\parallel}(s)}$.
Thus, \cref{eq:alg_projector_bound} bounds the length of the
transported state-vector curve on the unit sphere by
\begin{align}
    \int_0^\ell ds\,\norm{\partial_s\ket{u_{\parallel}(s)}}
    &\leq\int_0^\ell ds\,\norm{\partial_sP(s)}
    \notag\\
    &\leq\frac{\pi L_x}{\Delta_{\min}}<\frac{\pi}{2}.
    \label{eq:alg_zero_path_length}
\end{align}
Orthogonal unit vectors form an angle of $\pi/2$.
On the unit sphere, a shortest path between them follows a circular arc
of unit radius through this angle and therefore has length $\pi/2$.
Any curve on the sphere joining them must therefore be at least this long.
Since \cref{eq:alg_zero_path_length} bounds the transported curve's length
strictly below $\pi/2$, $\ket{u_{\parallel}(\ell)}$ is not orthogonal to
$\ket{u_{\mathrm{ref}}(k_y)}$.
The endpoint projector therefore obeys
$P(\mathbf{k})\ket{u_{\mathrm{ref}}(k_y)}\neq 0$, and
\begin{align}
    \ket{u_X(\mathbf{k})}
    :=\frac{P(\mathbf{k})\ket{u_{\mathrm{ref}}(k_y)}}
          {\norm{P(\mathbf{k})\ket{u_{\mathrm{ref}}(k_y)}}}
    \label{eq:alg_zero_global_section}
\end{align}
is a normalized ground state defined in a single smooth, periodic gauge
over the entire torus. In this global gauge, Stokes' theorem gives $C=0$.
Interchanging $x$ and $y$ proves the
other case of \cref{eq:alg_zero_check}.

In the quantum case, $L_\mu/\Delta_{\min}\geq1/2$, and the ceiling function in
\cref{eq:alg_mesh} changes the mesh counts only by constant factors.
Also $L_\mu\,\delta k_\mu\leq\Delta_{\min}/64$. Integrating the Hamiltonian and
projector derivatives along an axis-aligned path inside a cell 
with \cref{eq:alg_projector_bound} gives
\begin{align}
    \norm{H(v)-H(w)}&\leq L_x\,\delta k_x+L_y\,\delta k_y
        \leq\frac{\Delta_{\min}}{32},
    \notag\\
    \norm{P(v)-P(w)}&\leq\frac{L_x\,\delta k_x+L_y\,\delta k_y}
        {\Delta_{\min}}\leq\rho.
    \label{eq:alg_cell_lipschitz_proof}
\end{align}
Cells crossing a periodic seam are represented by a rectangle in a lifted
coordinate chart, so the same bounds apply there.
These cell bounds will ensure that the vertex overlaps stay nonzero
and that the Bargmann phases remain small.

\subsubsection{Bargmann phases and branch bounds}
\label{app:alg_bargmann_bounds}

We bound the Bargmann phase. Rank-one projectors obey
\begin{align}
    |\braket{u(v)}{u(w)}|^2
    =1-\norm{P(v)-P(w)}^2\geq1-\rho^2>0
    \label{eq:alg_overlap_bound}
\end{align}
for vertices of a cell.
To bound the imaginary parts of the overlaps between adjacent vertices,
we choose the phases of the ground states relative to the anchor state.
Choose their phases so that
$r_v:=\braket{u(a)}{u(v)}$ is positive for $v=b,c,d$, and set
$\ket{v_\perp}:=\ket{u(v)}-r_v\ket{u(a)}$, where
$\braket{u(a)}{v_\perp}=0$ and $\norm{v_\perp}\leq\rho$.
For $(v,w)=(b,c),(c,d)$,
\begin{align}
    z_{vw}&:=\braket{u(v)}{u(w)}
    =r_vr_w+\braket{v_\perp}{w_\perp},
    \notag\\
    \operatorname{Re}z_{vw}&\geq1-2\rho^2,
    \quad |\operatorname{Im}z_{vw}|\leq\rho^2.
    \label{eq:alg_aligned_overlap}
\end{align}
The other two factors in $B_{\mathcal C}=r_bz_{bc}z_{cd}r_d$ are
positive. Therefore
\begin{align}
    |\phi_{\mathcal C}|
    \leq2\arctan\frac{\rho^2}{1-2\rho^2}
    <\frac{\pi}{4}.
    \label{eq:alg_bargmann_branch_bound}
\end{align}

\subsubsection{From Stokes' theorem to Bargmann phases}
\label{app:alg_phase_sum}

We prove the phase-sum identity by pairing shared-edge integrals,
evaluating their transition phases, and fixing the remaining branch ambiguity.
Stokes' theorem expresses $2\pi C$ as a sum of Berry-connection
integrals around the cells. Let
$A_{\mathcal C}:=i\ev{d}{u_{\mathcal C}}$ be the Berry connection in a
smooth local gauge on $\mathcal C$, with components
$A_{\mathcal C,\mu}:=i\ev{\partial_{k_\mu}}{u_{\mathcal C}}$
for $\mu\in\qty{x,y}$.
Taking each boundary in the
counterclockwise order $a\to b\to c\to d\to a$ gives
\begin{align}
    2\pi C
    &=\sum_{\mathcal C}\int_{\mathcal C}d^2\mathbf{k}\,\Omega(\mathbf{k})
    \notag\\
    &=\sum_{\mathcal C}\oint_{\partial\mathcal C}A_{\mathcal C}.
    \label{eq:alg_cell_stokes}
\end{align}

\begin{figure}[bt]
    \centering
    \begin{tikzpicture}[
        x=1pt,y=1pt,
        cell/.style={draw,line width=0.5pt},
        loop/.style={-{Latex[length=3.5pt]},line width=0.45pt,gray,
            shorten >=2pt,shorten <=2pt},
        pairx/.style={-{Latex[length=4pt]},line width=1pt,
            blue!70!black,shorten >=2pt,shorten <=2pt},
        pairy/.style={-{Latex[length=4pt]},line width=1pt,
            red!70!black,shorten >=2pt,shorten <=2pt},
        anc/.style={circle,fill,inner sep=1.6pt},
        vtx/.style={circle,draw,fill=white,inner sep=1.3pt},
        lbl/.style={font=\small,inner sep=1.5pt}
    ]
        \def\s{54}   % cell size
        \def\o{7}    % inset of the oriented boundaries
        % cells
        \draw[cell] (0,0) rectangle (\s,\s);
        \draw[cell] (\s,0) rectangle (2*\s,\s);
        \draw[cell] (0,\s) rectangle (\s,2*\s);
        % counterclockwise boundary of C, a->b->c->d->a
        \draw[loop] (\o,\o) -- (\s-\o,\o);
        \draw[pairx] (\s-\o,\o) -- (\s-\o,\s-\o);
        \draw[pairy] (\s-\o,\s-\o) -- (\o,\s-\o);
        \draw[loop] (\o,\s-\o) -- (\o,\o);
        % counterclockwise boundary of C_x (anchored at b)
        \draw[loop] (\s+\o,\o) -- (2*\s-\o,\o);
        \draw[loop] (2*\s-\o,\o) -- (2*\s-\o,\s-\o);
        \draw[loop] (2*\s-\o,\s-\o) -- (\s+\o,\s-\o);
        \draw[pairx] (\s+\o,\s-\o) -- (\s+\o,\o);
        % counterclockwise boundary of C_y (anchored at d)
        \draw[pairy] (\o,\s+\o) -- (\s-\o,\s+\o);
        \draw[loop] (\s-\o,\s+\o) -- (\s-\o,2*\s-\o);
        \draw[loop] (\s-\o,2*\s-\o) -- (\o,2*\s-\o);
        \draw[loop] (\o,2*\s-\o) -- (\o,\s+\o);
        % cell labels
        \node[lbl] at (0.5*\s,0.5*\s) {$\mathcal C$};
        \node[lbl] at (1.5*\s,0.5*\s) {$\mathcal C_x$};
        \node[lbl] at (0.5*\s,1.5*\s) {$\mathcal C_y$};
        % vertices: filled = gauge anchors of C, C_x, C_y
        \node[anc] (a) at (0,0) {};
        \node[anc] (b) at (\s,0) {};
        \node[vtx]    (c) at (\s,\s) {};
        \node[anc] (d) at (0,\s) {};
        \node[lbl,below left]  at (a) {$a$};
        \node[lbl,below]       at (b) {$b$};
        \node[lbl,above right] at (c) {$c$};
        \node[lbl,left=1.5pt]  at (d) {$d$};
        % mesh spacings
        \draw[{Latex[length=3pt]}-{Latex[length=3pt]},line width=0.4pt]
            (0,-12) -- node[lbl,below,font=\scriptsize]{$\delta k_x$} (\s,-12);
        \draw[{Latex[length=3pt]}-{Latex[length=3pt]},line width=0.4pt]
            (-14,\s) -- node[lbl,left,font=\scriptsize]{$\delta k_y$} (-14,2*\s);
        % axes
        \draw[-{Latex[length=4pt]},line width=0.5pt] (\s+26,72) -- (\s+52,72)
            node[lbl,right,font=\scriptsize]{$k_x$};
        \draw[-{Latex[length=4pt]},line width=0.5pt] (\s+26,72) -- (\s+26,102)
            node[lbl,left,font=\scriptsize]{$k_y$};
    \end{tikzpicture}
    \caption{Cell $\mathcal C=(a,b,c,d)$ and its right and upper
    neighbors $\mathcal C_x$ and $\mathcal C_y$. Vertices are labeled
    counterclockwise from the lower-left anchor $a$. Filled dots mark the
    anchors of the local gauges in \cref{eq:alg_cell_gauge}: $a$ for
    $\mathcal C$, $b$ for $\mathcal C_x$, and $d$ for $\mathcal C_y$.
    Arrows show the counterclockwise boundaries in
    \cref{eq:alg_cell_stokes}. Each shared edge is traversed twice with
    opposite orientations: the right edge by $\mathcal C$ ($b\to c$) and
    by $\mathcal C_x$ ($c\to b$), shown in blue; the upper edge by
    $\mathcal C$ ($c\to d$) and by $\mathcal C_y$ ($d\to c$), shown in
    red.}
    \label{fig:alg_cell_pairing}
\end{figure}

Every shared edge occurs twice with opposite orientations, including
edges identified by periodicity. Each contribution uses the local gauge
of its cell. To count each pair of shared-edge integrals exactly once, we let each
cell collect the paired contributions on its right and upper edges.
Thus, each vertical edge is counted with the cell on its left,
and each horizontal edge with the cell below. For a cell
$\mathcal C=(a,b,c,d)$, let $\mathcal C_x$ and $\mathcal C_y$ be its
right and upper neighbors, as shown in \cref{fig:alg_cell_pairing}. Collect the paired contributions on its
right edge $b\to c$ and upper edge $c\to d$ as
\begin{align}
    E_{\mathcal C}
    &:=\int_b^c dk_y\,\qty(A_{\mathcal C,y}-A_{\mathcal C_x,y})
    \notag\\
    &\quad+\int_c^d dk_x\,\qty(A_{\mathcal C,x}-A_{\mathcal C_y,x}).
    \label{eq:alg_paired_edge_integrals}
\end{align}
Each integral follows the indicated edge; the minus sign accounts for
the neighboring cell's opposite orientation. Every edge pair is counted
once, so \cref{eq:alg_cell_stokes} becomes
\begin{align}
    2\pi C=\sum_{\mathcal C}E_{\mathcal C}.
    \label{eq:alg_transition_sum}
\end{align}
Thus $E_{\mathcal C}$ collects paired edge contributions to the global
sum; it need not equal the Berry flux through $\mathcal C$.

To relate $E_{\mathcal C}$ to the Bargmann invariant, take the lower-left
vertex $a$ of each cell as its anchor, fix a normalized ground state
$\ket{u(a)}$ there, and choose
\begin{align}
    \ket{u_{\mathcal C}(\mathbf{k})}
    :=\frac{P(\mathbf{k})\ket{u(a)}}
          {\norm{P(\mathbf{k})\ket{u(a)}}}.
    \label{eq:alg_cell_gauge}
\end{align}
Its denominator is at least $\sqrt{1-\rho^2}$ by
\cref{eq:alg_cell_lipschitz_proof}, so this gauge is smooth throughout
the cell.

The mismatch between neighboring gauges turns each paired edge integral
into an endpoint phase difference.
The neighbors $\mathcal C_x$ and $\mathcal C_y$ are anchored
at $b$ and $d$, respectively. On the shared edges, choose continuous
transition phases satisfying
\begin{align}
    \ket{u_{\mathcal C_x}}&=e^{i\chi_x}\ket{u_{\mathcal C}},
    &A_{\mathcal C_x, \mu}&=A_{\mathcal C,\mu}-\partial_{k_\mu}\chi_x,
    \notag\\
    \ket{u_{\mathcal C_y}}&=e^{i\chi_y}\ket{u_{\mathcal C}},
    &A_{\mathcal C_y, \mu}&=A_{\mathcal C,\mu}-\partial_{k_\mu}\chi_y
    \label{eq:alg_transition_connections}
\end{align}
for $\mu\in\{x, y\}$. Substituting into \cref{eq:alg_paired_edge_integrals} reduces the
integrals to endpoint differences:
\begin{align}
    E_{\mathcal C}
    &=\chi_x(c)-\chi_x(b)+\chi_y(d)-\chi_y(c).
    \label{eq:alg_paired_edge_phases}
\end{align}

The projector gauges make these transition phases explicit in terms
of ground-state overlaps.
For $z\ne0$, write $\operatorname{ph}(z):=z/|z|$. The rank-one identity
in \cref{eq:alg_cell_gauge} gives
\begin{align}
    e^{i\chi_x(\mathbf{k})}
    &=\operatorname{ph}\qty(\bra{u(a)}P(\mathbf{k})\ket{u(b)}),
    \notag\\
    e^{i\chi_y(\mathbf{k})}
    &=\operatorname{ph}\qty(\bra{u(a)}P(\mathbf{k})\ket{u(d)}).
    \label{eq:alg_transition_explicit}
\end{align}
In particular,
\begin{align}
    e^{i\chi_x(b)}&=\operatorname{ph}\qty(\braket{u(a)}{u(b)}),
    \notag\\
    e^{i\chi_x(c)}&=\operatorname{ph}\qty(
        \braket{u(a)}{u(c)}\braket{u(c)}{u(b)}),
    \notag\\
    e^{i\chi_y(d)}&=\operatorname{ph}\qty(\braket{u(a)}{u(d)}),
    \notag\\
    e^{i\chi_y(c)}&=\operatorname{ph}\qty(
        \braket{u(a)}{u(c)}\braket{u(c)}{u(d)}).
    \label{eq:alg_transition_endpoints}
\end{align}
Multiplying the four phases with the signs in
\cref{eq:alg_paired_edge_phases} cancels the diagonal overlap and gives
\begin{align}
    e^{iE_{\mathcal C}}
    =\operatorname{ph}\qty(B_{\mathcal C}^*)
    =e^{-i\phi_{\mathcal C}}.
    \label{eq:alg_transition_modulo}
\end{align}
Here $\phi_{\mathcal C}=\operatorname{Arg}B_{\mathcal C}$ is the
principal phase of the Bargmann invariant. Thus $E_{\mathcal C}$ equals
$-\phi_{\mathcal C}$ modulo $2\pi$.

To remove the $2\pi$ ambiguity, set
\begin{align}
    z_x(\mathbf{k})&:=\bra{u(a)}P(\mathbf{k})\ket{u(b)},
    &w_x(\mathbf{k})&:=\frac{z_x(\mathbf{k})}{z_x(b)}.
    \label{eq:alg_transition_ratio}
\end{align}
Since $P(b)\ket{u(b)}=\ket{u(b)}$, the denominator is
$z_x(b)=\braket{u(a)}{u(b)}$. For any point $\mathbf{k}$ on the right
edge, subtracting $1$ from the ratio gives
\begin{align}
    |w_x(\mathbf{k})-1|
    &=\frac{|z_x(\mathbf{k})-z_x(b)|}{|z_x(b)|}
    \notag\\
    &=\frac{\abs{\bra{u(a)}\qty[P(\mathbf{k})-P(b)]\ket{u(b)}}}
            {|\braket{u(a)}{u(b)}|}
    \notag\\
    &\leq\frac{\norm{P(\mathbf{k})-P(b)}}
                   {|\braket{u(a)}{u(b)}|}
    \notag\\
    &\leq\frac{\rho}{\sqrt{1-\rho^2}} < 1.
    \label{eq:alg_transition_disk}
\end{align}
The first inequality bounds a matrix element between normalized states
by the operator norm. For the second, $\mathbf{k}$ and $b$ lie in the
same cell, so \cref{eq:alg_cell_lipschitz_proof} bounds the numerator
by $\rho$, while \cref{eq:alg_overlap_bound} bounds the denominator
below by $\sqrt{1-\rho^2}$. The last inequality uses $\rho=1/32$.

This disk lies in the right half-plane. Since $w_x(b)=1$, continuity
fixes $\chi_x(\mathbf{k})-\chi_x(b)=\operatorname{Arg}w_x(\mathbf{k})$
as a real equality along the edge. Hence
\begin{align}
    |\chi_x(c)-\chi_x(b)|<\frac{\pi}{2}.
    \label{eq:alg_transition_x_bound}
\end{align}
The same argument for
$z_y(\mathbf{k}):=\bra{u(a)}P(\mathbf{k})\ket{u(d)}$, compared with its
value at $d$, gives $|\chi_y(d)-\chi_y(c)|<\pi/2$. Thus
$|E_{\mathcal C}|<\pi$. Together with
\cref{eq:alg_bargmann_branch_bound}, the equality modulo $2\pi$ in
\cref{eq:alg_transition_modulo} implies the real equality
\begin{align}
    E_{\mathcal C}=-\phi_{\mathcal C}.
    \label{eq:alg_transition_exact}
\end{align}
Substituting into \cref{eq:alg_transition_sum} proves
\cref{eq:alg_phase_sum}.
This establishes the geometric input for the local circuits analyzed next.

\subsection{Implementation guarantees for local primitives}
\label{app:alg_local_primitives}

We now justify the circuits used to realize the local primitives.
We first establish the control and inverse conventions, then prove the
reflection and fractional-power guarantees and allocate precision
across their nested implementations.

\subsubsection{Controlled and inverse circuits}
\label{app:alg_controls}

We track oracle time, global phase, and state error under circuit
control and inversion.
Let $V$ be a specified unitary circuit on all its data and work
registers, built from the specified Hamiltonian evolutions and
input-independent gates. Let $T(V):=\sum_j\abs{t_j}$ denote the total
Hamiltonian-oracle evolution time of this circuit, where $t_j$ is the
signed evolution time of its $j$th oracle call. Its controlled version is
\begin{align}
    \Lambda_c(V)
    :=\ket0\bra0_c\otimes I+\ket1\bra1_c\otimes V.
    \label{eq:alg_controlled_circuit}
\end{align}
For $V=G_M\cdots G_1$, controlling each gate gives
$\Lambda_c(V)=\Lambda_c(G_M)\cdots\Lambda_c(G_1)$.
Under our oracle model, a controlled Hamiltonian evolution of duration
$t$ has cost $|t|$. When controlling a circuit that already contains
controlled oracle calls, we must impose both the existing and the
additional control conditions. We compute their conjunction into a work
qubit, use it to control the oracle call, and then uncompute it.
These additional gates are input-independent and therefore add no
Hamiltonian-oracle time.
Reversing the circuit and changing the signs of the
evolution times implements $V^\dagger$. Therefore
\begin{align}
    T\qty(\Lambda_c(V))=T(V^\dagger)=T(V).
    \label{eq:alg_control_cost}
\end{align}

When a circuit is controlled, its global phase becomes a relative phase
of the control qubit and must therefore be accounted for explicitly.
To construct a controlled primitive, we use the same control qubit for
every gate in its implementation. If the implementation introduces a
known global phase relative to the ideal operation, we cancel it with
a compensating phase gate on the control qubit.

The same constructions also preserve the local state-error guarantees
on the relevant subspaces.
Suppose $\widetilde V$ approximates $V$ uniformly on an allowed
input subspace. Then
\begin{align}
    \Lambda_c(\widetilde V)-\Lambda_c(V)
    &=\ket1\bra1_c\otimes\qty(\widetilde V-V),
    \notag\\
    \norm{\qty(\widetilde V^\dagger-V^\dagger)V\ket\psi}
    &=\norm{\qty(\widetilde V-V)\ket\psi}.
    \label{eq:alg_control_inverse_errors}
\end{align}
The orthogonality of the control branches preserves the uniform
state-error bound, including for states entangled with other registers
on which the circuit does not act. The inverse bound holds on the ideal
image of the allowed input subspace. These statements apply to every
nested local circuit.

\subsubsection{Phase-step GQSP reflections}
\label{app:alg_reflection}

The reflection construction begins by separating the ground and excited
relative energies on the exact-reference sector.
For any vertex $v$ in the same cell as $a$, the variational principle implies
\begin{align}
    \abs{E_0(v)-E_0(a)}
    \leq\norm{H(v)-H(a)}\leq\frac{\Delta_{\min}}{32}.
    \label{eq:alg_energy_lipschitz}
\end{align}
On the sector with reference $\ket{u(a)}$, the eigenvectors of
$K_{v|a}$ are
$\ket{\chi_m}:=\ket{u(a)}\ket{u_m(v)}$, where $\ket{u_m(v)}$ has
energy $E_m(v)$ under $H(v)$ and $\ket{u_0(v)}:=\ket{u(v)}$.
Their relative energies
$\lambda_m:=E_m(v)-E_0(a)$ satisfy
\begin{align}
    \abs{\lambda_0}&\leq\frac{\Delta_{\min}}{32},
    \notag\\
    \lambda_m&\geq\frac{31\Delta_{\min}}{32}\qquad(m\ne0).
    \label{eq:alg_relative_windows}
\end{align}

The shifted signal unitary converts this energy separation into a phase
gap around zero.
Write the eigenvalues of the unitary
$U_{v|a}$ in \cref{eq:alg_reflection_signal} on the exact-reference
sector as $e^{-i\varphi_m}$, and define
\begin{align}
    \varphi_m&:=\tau_R\qty(\lambda_m-\frac{\Delta_{\min}}{2}),
    &\gamma_R&:=\frac{15}{32}\tau_R\Delta_{\min}.
    \label{eq:alg_reflection_phase_gap}
\end{align}
Here, $-\Delta_{\min}/32\leq\lambda_m\leq2H_{\max}$ for every $m$,
and $0<\Delta_{\min}\leq2H_{\max}$. 
Then $\varphi_0\leq-\gamma_R$ and $\varphi_m\geq\gamma_R$ for
$m\ne0$. Moreover,
\begin{align}
    \varphi_m&\geq-\frac{17}{32}\tau_R\Delta_{\min}
        >-\frac{\pi}{2},
    \notag\\
    \varphi_m&\leq\tau_R\qty(2H_{\max}-\frac{\Delta_{\min}}{2})
        <\frac{\pi}{2}.
    \label{eq:alg_reflection_phase_interval}
\end{align}
Thus the desired signs are $s_0:=+1$ and $s_m:=-1$ for $m\ne0$,
equivalently $s_m=-\operatorname{sgn}(\sin\varphi_m)$.

We next approximate the required reflection signs by a bounded polynomial.
For $0<\delta<1/2$, the bounded sign approximation of
Ref.~\cite[Lemma~14]{gilyen_qsvt_2019} supplies a real polynomial
$S_{d_R}$ of degree $d_R$ such that
\begin{align}
    \abs{S_{d_R}(x)}&\leq1\qquad(x\in[-1,1]),
    \notag\\
    \abs{S_{d_R}(x)-\operatorname{sgn}x}
    &\leq\delta\qquad(\sin\gamma_R\leq\abs{x}\leq1),
    \label{eq:alg_sign_polynomial}\\
    d_R&=O\qty(\frac{H_{\max}}{\Delta_{\min}}\log\frac1\delta).
    \label{eq:alg_reflection_degree}
\end{align}

To use this polynomial in GQSP, we express it as a Laurent polynomial
of the signal eigenvalue.
For $z:=e^{-i\varphi}$, define the Laurent polynomial
\begin{align}
    p_{d_R}(z)
    :=-S_{d_R}\qty(\frac{z^{-1}-z}{2i}).
    \label{eq:alg_reflection_laurent}
\end{align}
Since $(z^{-1}-z)/(2i)=\sin\varphi$, this polynomial is bounded
in magnitude by one on the entire unit circle and approximates
$s_m$ on the promised spectrum. GQSP implements it with one signal
ancilla and $O(d_R)$ queries to $U_{v|a}$ and its
inverse~\cite[Corollary~5 and Theorem~6]{motlagh_generalized_qsp_2024}.
The scalar shift in \cref{eq:alg_reflection_signal} is supplied by an
input-independent phase gate.

The full circuit must approximate the reflection together with a clean
signal ancilla.
Let $\widetilde{\mathcal R}_{v|a}$ be this full GQSP circuit, with
any known overall phase corrected. For
$z_m:=e^{-i\varphi_m}$ and $p_m:=p_{d_R}(z_m)$, its action is
\begin{align}
    \widetilde{\mathcal R}_{v|a}\ket0\ket{\chi_m}
    =\qty(p_m\ket0+q_m\ket1)\ket{\chi_m},
    \label{eq:alg_reflection_ancilla_action}
\end{align}
where $\abs{p_m}^2+\abs{q_m}^2=1$ by unitarity.
Here $p_m$ is real and $\abs{p_m-s_m}\leq\delta$. The squared
distance from the ideal output $s_m\ket0\ket{\chi_m}$ is
\begin{align}
    \abs{p_m-s_m}^2+\abs{q_m}^2
    &=2\qty(1-s_mp_m)
    \notag\\
    &=2\abs{s_m-p_m}\leq2\delta.
    \label{eq:alg_reflection_eigenstate_error}
\end{align}

Errors associated with distinct eigenvectors are orthogonal.
Consequently, choosing $\delta:=\epsilon^2/2$ gives, uniformly for
normalized $\ket\psi$ in the other system register,
\begin{align}
    &\norm{
        \widetilde{\mathcal R}_{v|a}\ket0\ket{u(a)}\ket\psi
        -\ket0\ket{u(a)}\mathcal R_v\ket\psi
    }
    \leq\epsilon.
    \label{eq:alg_reflection_error}
\end{align}
The same estimate holds with arbitrary control registers and with other
registers on which the circuit does not act. It includes the complementary
signal branch; no
postselection is performed.

By \cref{eq:alg_relative_evolution},
each signal query costs $2\tau_R$. Combining this duration with
\cref{eq:alg_reflection_degree} proves
\cref{eq:alg_reflection_cost}. The single signal ancilla is reusable.

\subsubsection{Direct rotations and fractional phase modulation}
\label{app:alg_qsp}

The direct-rotation construction relies on the spectrum of a product
of two ground-state reflections at vertices in the same cell.
For vertices $v,w$ in the same cell, define the principal angle between
their ground rays by
\begin{align}
    \theta_{vw}&:=\arcsin\norm{P(v)-P(w)}\leq\theta_*,
    &\theta_*&:=\arcsin\rho.
    \label{eq:alg_principal_angle}
\end{align}
For $\theta_{vw}>0$, choose the orthonormal basis
\begin{align}
    \ket{e_1}&:=\ket{u(v)},
    \notag\\
    \ket{e_2}
    &:=\frac{\ket{u(w)}^{(v)}-\cos\theta_{vw}\ket{e_1}}
        {\sin\theta_{vw}}.
    \label{eq:alg_rotation_basis}
\end{align}
With its phase chosen to give a real and strictly positive overlap
with $\ket{u(v)}$, the target state $\ket{u(w)}^{(v)}$ has
coordinates $(\cos\theta_{vw},\sin\theta_{vw})$ in this basis.
The rank-one projectors onto these two states have the matrix
representations
\begin{align}
    P(v)&=\begin{pmatrix}1&0\\0&0\end{pmatrix},
    \notag\\
    P(w)&=\begin{pmatrix}
        \cos^2\theta_{vw}&\cos\theta_{vw}\sin\theta_{vw}\\
        \cos\theta_{vw}\sin\theta_{vw}&\sin^2\theta_{vw}
    \end{pmatrix}.
    \label{eq:alg_rotation_projectors}
\end{align}
Using $\mathcal R_x=2P(x)-I$ for $x=v,w$, we obtain
\begin{align}
    \mathcal R_v&=\begin{pmatrix}1&0\\0&-1\end{pmatrix},
    \notag\\
    \mathcal R_w&=\begin{pmatrix}
        \cos2\theta_{vw}&\sin2\theta_{vw}\\
        \sin2\theta_{vw}&-\cos2\theta_{vw}
    \end{pmatrix}.
    \label{eq:alg_rotation_reflections}
\end{align}
Multiplying these matrices gives
\begin{align}
    \mathcal R_w\mathcal R_v
    =\begin{pmatrix}
        \cos2\theta_{vw}&-\sin2\theta_{vw}\\
        \sin2\theta_{vw}&\cos2\theta_{vw}
    \end{pmatrix}
    \label{eq:alg_rotation_matrix}
\end{align}
on this plane. On its orthogonal complement, both reflections act
as $-I$, so their product is the identity.

Its principal square root rotates by $\theta_{vw}$, proving
\cref{eq:alg_rotation_action}. Coincident rays give $D_{wv}=I$.
In all cases,
\begin{align}
    \operatorname{spec}\qty(\mathcal R_w\mathcal R_v)
    &\subseteq\qty{1,e^{2i\theta_{vw}},e^{-2i\theta_{vw}}},
    \label{eq:alg_rotation_spectrum}\\
    \operatorname{dist}\qty(-1,\operatorname{spec}\qty(\mathcal R_w\mathcal R_v))
    &\geq2\sqrt{1-\rho^2}.
    \label{eq:alg_rotation_branch_distance}
\end{align}

To implement the principal square root, we must match the branch
convention of the fractional-query construction to that of the direct rotation.
The fractional-query construction of
Ref.~\cite[Theorem~10]{motlagh_generalized_qsp_2024} approximates
the principal power on this fixed spectral arc with a number of
signal queries logarithmic in the inverse error. That construction
uses eigenphases in $[0,2\pi)$, so its branch cut lies at $+1$ on
the unit circle. Our principal power uses eigenphases in $(-\pi,\pi)$,
with the branch cut at $-1$. To account for this difference, we apply
the construction to $-U$, whose eigenphases lie near $\pi$ and are
separated from the cut at $+1$. The shift
$\theta\mapsto\theta+\pi$ introduces a phase $e^{i\pi\alpha}$,
which we cancel as follows:
\begin{align}
    U^\alpha
    =e^{-i\pi\alpha}(-U)_{[0,2\pi)}^\alpha.
    \label{eq:alg_fractional_branch_convention}
\end{align}
These phase adjustments are implemented coherently.
With the control convention of \cref{eq:alg_controlled_circuit},
each controlled query to $-U$ is implemented by a controlled
query to $U$ followed by a Pauli-$Z$ gate on the query control.
For a controlled implementation of $U^\alpha$, the prefactor
$e^{-i\pi\alpha}$ is implemented by the phase gate
$\operatorname{diag}\qty(1,e^{-i\pi\alpha})$ on the control
of the fractional-power circuit.
Both gates are input-independent and add no Hamiltonian-oracle time.

To evaluate the theorem's dependence on the minimum singular value,
let $A_U$ be the Hermitian phase generator defined by $-U=e^{iA_U}$
with eigenvalues in $(0,2\pi)$.
If the principal eigenphases of $U$ lie in $[-\theta_{\max},\theta_{\max}]$ with
$0\leq\theta_{\max}<\pi$, the eigenvalues of $A_U$ are $\pi+\theta$, and hence
\begin{align}
    \operatorname{spec}(A_U)
    &\subseteq\qty[\pi-\theta_{\max},\pi+\theta_{\max}].
    \notag
\end{align}
Since $A_U$ is positive definite and Hermitian, its minimum singular
value equals its smallest eigenvalue. Therefore,
\begin{align}
    \sigma_{\min}(A_U)&\geq\pi-\theta_{\max}.
    \label{eq:alg_fractional_generator_gap}
\end{align}
For $U=\mathcal R_w\mathcal R_v$, we may take
$\theta_{\max}=2\theta_*=2\arcsin\rho$ by
\cref{eq:alg_principal_angle,eq:alg_rotation_spectrum}.
The lower bound in \cref{eq:alg_fractional_generator_gap} is therefore
a positive constant, independently of $\Delta_{\min}$, and the spectrum
is separated from $2\pi$ by the same constant.
The theorem's inverse dependence on this lower bound contributes only
a constant to the query count: for approximation error $\zeta$, it
requires $O\qty(\log(1/\zeta))$ queries and $O(1)$ ancilla qubits.
The query bound is uniform for $\alpha\in(0,1)$.
Taking $\alpha:=1/2$ and
$U:=\mathcal R_w\mathcal R_v$ constructs $D_{wv}$.

The state-error convention includes all ancilla qubits used in the construction. 
If the construction is stated as a block approximation with error
$\zeta$, choosing $\zeta=O(\epsilon^2)$ makes the full state error
$O(\epsilon)$.
By unitarity, a block-approximation error of $\zeta$ bounds the norm
of the output component in which the ancilla qubits do not all return
to zero by $O(\sqrt{\zeta})$.
This choice preserves the logarithmic query count.

We next apply the same construction to the closed product around a cell.
Let $\varphi_{vw}$ denote the principal argument of
$\braket{u(v)}{u(w)}$ along an oriented edge $v\to w$.
\Cref{eq:alg_rotation_action} implies
\begin{align}
    D_{wv}\ket{u(v)}
    =e^{-i\varphi_{vw}}\ket{u(w)}.
    \label{eq:alg_edge_phase_action}
\end{align}
Multiplying the four edge phases gives
$e^{-i\phi_{\mathcal C}}$, proving \cref{eq:alg_loop_action}.

Taking a fractional power requires a branch gap for the full loop
operator, including its excited-state sector.
Also the spectrum of $D_{wv}$ is contained in
$\qty{1,e^{i\theta_{vw}},e^{-i\theta_{vw}}}$, with
$0\leq\theta_{vw}\leq\theta_*$, so
\begin{align}
    \norm{D_{wv}-I}\leq2\sin\qty(\frac{\theta_*}{2}).
    \label{eq:alg_edge_identity_bound}
\end{align}
The inequality $\norm{AB-I}\leq\norm{A-I}+\norm{B-I}$ for unitaries
then gives
\begin{align}
    \norm{L_{\mathcal C}-I}
    &\leq8\sin\qty(\frac{\theta_*}{2})
    \leq8\rho=\frac14,
    \notag\\
    \operatorname{dist}\qty(-1,\operatorname{spec}(L_{\mathcal C}))
    &\geq2-\norm{L_{\mathcal C}-I}\geq\frac74.
    \label{eq:alg_loop_spectrum_distance}
\end{align}
This controls the entire system spectrum, including the excited-state
sector.
For $U=L_{\mathcal C}$, the norm bound also implies that every principal
eigenphase satisfies $\abs{\theta}\leq2\arcsin(1/8)$.
Taking $\theta_{\max}=2\arcsin(1/8)$ in \cref{eq:alg_fractional_generator_gap}
gives $\sigma_{\min}(A_U)\geq\pi-2\arcsin(1/8)>0$.
Thus the dependence on the minimum singular value contributes only
a constant for the loop signal as well.
The same fractional-query construction therefore applies to
$L_{\mathcal C}$ uniformly for $\alpha\in(0,1)$.
By \cref{eq:alg_loop_action,eq:alg_bargmann_branch_bound}, the ground
state $\ket{u(a)}$ has principal eigenphase
$-\phi_{\mathcal C}\in(-\pi/4,\pi/4)$ under $L_{\mathcal C}$.
Taking the principal power $L_{\mathcal C}^{\alpha}$ multiplies this
eigenphase by $\alpha$, so $\ket{u(a)}$ acquires the phase
$e^{-i\alpha\phi_{\mathcal C}}$, as in \cref{eq:alg_fractional_action}.

During a complete phase modulation, the reference is fixed at $a$,
and every reflection target is a vertex of that cell. Thus
\cref{eq:alg_reflection_error} applies uniformly at every ideal
internal query.

\subsubsection{Nested precision and local costs}
\label{app:alg_nested_costs}

We allocate the target error across the nested GQSP layers to obtain
the costs of completed local operations.
For direct rotations, use state error at most $\epsilon/2$ for square-root GQSP with ideal
reflection-product queries. Its degree is
$d_{\mathrm{rot}}=O\qty(\log(1/\epsilon))$.
Each query uses two reflections. Assign each reflection state error
\begin{align}
    \epsilon_R:=\frac{\epsilon}{4d_{\mathrm{rot}}}.
    \label{eq:alg_rotation_internal_precision}
\end{align}
The ideal-trajectory bound proved in \cref{app:alg_error} then gives
\begin{align}
    \epsilon_{\mathrm{GQSP},D}
        +2d_{\mathrm{rot}}\epsilon_R
    &\leq\epsilon,
    \notag\\
    T_D(\epsilon)
    &=O\qty[
        d_{\mathrm{rot}}T_{\mathcal R}\qty(\epsilon_R)
    ]
    \notag\\
    &=O\qty(\frac{1}{\Delta_{\min}}\log^2\frac1\epsilon).
    \label{eq:alg_direct_rotation_cost}
\end{align}
The uniform reflection bound applies even when the other system
register is entangled with the outer GQSP signal.

For fractional phase modulation, the outer GQSP approximation has
error at most $\epsilon/2$ and degree
$d_{\mathrm{frac}}=O\qty(\log(1/\epsilon))$.
Each loop query uses four direct rotations, so choose
\begin{align}
    \epsilon_D:=\frac{\epsilon}{8d_{\mathrm{frac}}}.
    \label{eq:alg_fractional_internal_precision}
\end{align}
The corresponding bounds are
\begin{align}
    \epsilon_{\mathrm{GQSP},W}
        +4d_{\mathrm{frac}}\epsilon_D
    &\leq\epsilon,
    \notag\\
    T_W(\epsilon)
    &=O\qty[
        4d_{\mathrm{frac}}T_D\qty(\epsilon_D)
    ]
    \notag\\
    &=O\qty(\frac{1}{\Delta_{\min}}\log^3\frac1\epsilon).
    \label{eq:alg_fractional_cost}
\end{align}
These allocations include all work registers. Each inner workspace is
clean on the ideal trajectory and can be reused; errors in the actual
workspace are carried by the same trajectory bound.
\Cref{eq:alg_control_inverse_errors} covers controlled and inverse
queries at every layer.

The constants in
\cref{eq:alg_fractional_cost} are independent of $\alpha$, so the
cost is uniform in $m/q$. Together with
\cref{eq:alg_reflection_cost,eq:alg_direct_rotation_cost}, this
establishes the costs in \cref{tab:alg_primitive_costs}.

\subsection{Global correctness and resource analysis}
\label{app:alg_global_analysis}

We first verify the ideal scan and readout, then bound how local
implementation errors affect the output. The local precision and primitive
counts determine the runtime; a separate count gives the total number of qubits required.

\subsubsection{Closed-scan action and Fourier reconstruction}
\label{app:alg_scan_readout}

A closed scan must restore both ground-state copies while retaining
the cell phases on the readout register.
For a move from $a:=a_\ell$ to its successor
$v:=a_{(\ell+1)\bmod N_{\square}}$, let $\ket{u(v)}^{(a)}$ denote the
normalized target in \cref{eq:alg_rotation_action}.
Apply $D_{va}$ to copy 2 using copy 1 at $a$ as reference, and then
to copy 1 using copy 2 at $v$ as reference. Both references are in
the same cell as the reflection targets. The ideal action is
\begin{align}
    \ket{u(a)}_1\ket{u(a)}_2
    \longmapsto\ket{u(v)}_1^{(a)}\ket{u(v)}_2^{(a)}.
    \label{eq:alg_two_copy_transport}
\end{align}
Denote the move from anchor $a_\ell$ to its successor by $T_\ell$.
The raster path in \cref{eq:alg_raster} closes on the periodic torus,
so for some real phase $\gamma_{\mathrm{tr}}$,
\begin{align}
    T_{N_{\square}-1}\cdots T_0\ket{A_0}
    =e^{i\gamma_{\mathrm{tr}}}\ket{A_0},
    \label{eq:alg_transport_closure}
\end{align}
where $\ket{A_0}:=\ket{u(\mathbf{k}_0)}_1\ket{u(\mathbf{k}_0)}_2$
is the initial pair of ground-state copies.

At each cell, \cref{eq:alg_fractional_action} shows that the
controlled operation of $W_{\mathcal C}$ returns the same anchor pair on both control
branches and places the scalar $e^{-izm\phi_{\mathcal C}/q}$ on
branch $z\in\{0,1\}$. These phase factors are preserved by the subsequent unconditional transports.
By \cref{eq:alg_phase_sum}, the complete
scan therefore has the action
\begin{align}
    \mathcal S_{q,m}\qty(\ket z\ket{A_0})
    =e^{i\gamma_{\mathrm{tr}}}e^{2\pi izmC/q}\ket z\ket{A_0}.
    \label{eq:alg_closed_scan_action}
\end{align}
The restored anchor pair is independent of the readout value.
Hence either copy can serve as the reference in the next scan
without retaining any record of that value.

Applying the scans to all readout bits produces the Fourier encoding
of the Chern-number residue.
For a readout basis label $n:=\sum_j2^jz_j$, the $r_C$ scans
contribute phase $e^{2\pi inC/q}$ and the common factor
$e^{ir_C\gamma_{\mathrm{tr}}}$. This gives
\cref{eq:alg_fourier_state}. 
After the inverse QFT, the amplitude of $\ket{R'}$ is
\begin{align}
    \frac1q\sum_{n=0}^{q-1}e^{2\pi in(C-R')/q}
    =\begin{cases}
        1,&R'=C\bmod q,\\
        0,&\text{otherwise},
    \end{cases}
    \label{eq:alg_fourier_orthogonality}
\end{align}
for $0\leq R'<q$. Because $\abs{C}\leq C_{\max}<q/2$,
\cref{eq:alg_decode} uniquely recovers $C$ from the ideal residue.
There is no intrinsic readout failure in the ideal circuit.

\subsubsection{Error accumulation along the ideal trajectory}
\label{app:alg_error}

The global error bound uses local guarantees only along the ideal trajectory.
Let $U_j$ be the $j$th ideal primitive and $\widetilde U_j$ its
implemented unitary on all data and work registers.
Define the ideal and actual trajectories by
$\ket{\phi_j}:=U_j\ket{\phi_{j-1}}$ and
$\ket{\psi_j}:=\widetilde U_j\ket{\psi_{j-1}}$.
Assume the local state-error guarantee
\begin{align}
    \norm{\qty(\widetilde U_j-U_j)\ket{\phi_{j-1}}}\leq\epsilon_j.
    \label{eq:alg_ideal_local_error}
\end{align}
Adding and subtracting $\widetilde U_j\ket{\phi_{j-1}}$ gives
\begin{align}
    \ket{\psi_j}-\ket{\phi_j}
    &=\widetilde U_j\qty(\ket{\psi_{j-1}}-\ket{\phi_{j-1}})
    \notag\\
    &\quad+\qty(\widetilde U_j-U_j)\ket{\phi_{j-1}}.
    \label{eq:alg_trajectory_difference}
\end{align}
Unitarity and the triangle inequality imply
\begin{align}
    \norm{\ket{\psi_N}-\ket{\phi_N}}
    \leq\norm{\ket{\psi_0}-\ket{\phi_0}}+\sum_{j=1}^N\epsilon_j.
    \label{eq:alg_trajectory_stability}
\end{align}

Only ideal inputs appear in the local error assumption.
In particular, imperfections of a moved reference or a reused
ancilla are propagated in the first term of
\cref{eq:alg_trajectory_difference}; the actual imperfect reference
is never assumed to satisfy the exact-reference spectral promise.
The argument also applies to the internal sequences used in
\cref{app:alg_nested_costs}.

Each scan has $N_{\square}$ controlled fractional phase modulations
and $2N_{\square}$ direct rotations. Thus the number of
complete high-level primitives satisfies
\begin{align}
    N_{\mathrm{prim}}\leq3N_{\square}r_C.
    \label{eq:alg_primitive_count}
\end{align}
With exact initial preparation and the local precision in
\cref{eq:alg_local_precision}, the final state-vector error is at
most $\eta$. The ideal output, including clean work registers, is
\begin{align}
    \ket{\Phi_{\mathrm{out}}}
    :=e^{ir_C\gamma_{\mathrm{tr}}}
        \ket R\ket{A_0}\ket0_{\mathrm{work}}.
    \label{eq:alg_ideal_output}
\end{align}

To obtain the success probability, we convert the state-vector error
into a bound on measurement statistics.
For normalized pure states, the trace distance is
\begin{align}
    d_{\mathrm{tr}}\qty(\ket\psi,\ket\phi)
    &:=\frac12\norm{\ket\psi\bra\psi-\ket\phi\bra\phi}_1
    \notag\\
    &=\sqrt{1-\abs{\braket{\psi}{\phi}}^2}
    \leq\norm{\ket\psi-\ket\phi}.
    \label{eq:alg_trace_distance}
\end{align}
Any measurement-event probability differs by at most this distance.
Since the ideal circuit yields $R$ with certainty, the measured
residue $\widehat R$ and output
$\widehat C:=\operatorname{cent}_q(\widehat R)$ obey
\begin{align}
    \Pr\qty[\widehat C\ne C]
    =\Pr\qty[\widehat R\ne R]\leq\eta.
    \label{eq:alg_failure_bound}
\end{align}
The output in the early-return case is exact by \cref{app:alg_geometry}.
This proves the correctness part of
\cref{thm:nonadiabatic_algorithm}.

\subsubsection{Total runtime and register count}
\label{app:alg_resources}

We now translate the correctness requirements into resource bounds.
Combining \cref{eq:alg_direct_rotation_cost,eq:alg_fractional_cost}
with $N_{\square}r_C$ fractional phase modulations,
$2N_{\square}r_C$ direct rotations, and the local precision in
\cref{eq:alg_local_precision} yields the runtime bound in
\cref{eq:alg_resource_composition}. The early-return case uses no
oracle calls, as stated in \cref{eq:alg_runtime_zero}.

The reflection degree in \cref{eq:alg_reflection_degree} scales with
$H_{\max}/\Delta_{\min}$, while each signal query lasts
$\Theta(H_{\max}^{-1})$. Their product gives
$O\qty(\Delta_{\min}^{-1}\log(1/\epsilon))$.
Thus the norm-to-gap ratio affects the number of signal queries and
input-independent gates, but cancels from the Hamiltonian-oracle
time. The additional GQSP layers contribute the logarithmic factors
displayed in \cref{eq:alg_resource_composition}.

The data registers use $2n$ qubits. Coherent mesh addressing requires
$O\qty(\log N_{\square})$ qubits and the readout requires
$r_C=O\qty(\log\qty(2C_{\max}+1))$ qubits by \cref{eq:alg_modulus}.
Each of the reflection, square-root, and fractional-power GQSP layers
has constant-size GQSP ancilla overhead. There are only three nested
layers, so their simultaneous signal registers and the work bits
for combining control conditions total $O(1)$ qubits and can be
reused between calls. This yields
\begin{align}
    N_{\mathrm{qubit}}
    &=2n+O\qty[\log N_{\square}+\log\qty(2C_{\max}+1)]+O(1)
    \notag\\
    &=2n+O\qty[\log\qty(1+\frac{L_xL_y}{\Delta_{\min}^2})],
    \label{eq:alg_register_count}
\end{align}
in the quantum case, proving \cref{eq:alg_qubits}.
Reducing the allowed state error $\epsilon$ increases the GQSP
sequence lengths without requiring additional ancilla qubits. 
Gate descriptions and classical
arithmetic are not quantum workspace in this oracle model.
The early-return case uses no quantum registers.

For completeness, finite state-preparation and gate-synthesis errors
can be budgeted separately. Replace the local precision by
$\eta/(9N_{\square}r_C)$, allow total two-copy preparation error at
most $\eta/3$, and allow accumulated state error at most $\eta/3$
from input-independent gate synthesis, including the inverse QFT.
\Cref{eq:alg_trajectory_stability,eq:alg_trace_distance} then still
give failure probability at most $\eta$.
The constant change of local precision preserves \cref{eq:alg_runtime}.
State-preparation costs are not included in the oracle time,
and any additional workspace needed for state preparation or gate
implementation is not included in \cref{eq:alg_register_count}.
Together with the correctness analysis, these bounds establish
\cref{thm:nonadiabatic_algorithm}.

\section{Proof Details for the Chern-Number Lower Bound}
\label{app:chern_lower_bound_proof}

We verify the three hard families of \cref{sec:lower_bound_hard_instance}
and prove the common adversary estimate used in
\cref{thm:chern_lower_bound,cor:binary_chern_lower_bound}.
The early-return branch is already established by the zero-Chern
certificate in \cref{app:alg_geometry}.
Throughout this appendix, $a_\mu=L_\mu/\Delta_{\min}$ and the
supplied bounds are held fixed across all inputs in each hard family.

\subsection{Smooth interpolation}
\label{app:lower_bound_interpolation}

The tile and slab constructions use an interpolation that is flat to
all orders at its endpoints while keeping its maximum slope close to
one.
Begin with the smooth step
\begin{align}
    \xi(u)&:=
    \begin{cases}
        0,&u\leq0,\\
        e^{-1/u},&u>0,
    \end{cases}
    \notag\\
    S(u)&:=\frac{\xi(u)}{\xi(u)+\xi(1-u)}.
    \label{eq:app_lower_bound_smooth_step}
\end{align}
For every integer $m\geq0$, the $m$th derivative of $\xi$ with
respect to $u$ tends to zero as $u\to0^+$.
The denominator of $S$ is strictly positive.
Thus $S$ is smooth, equals zero for $u\leq0$, equals one for
$u\geq1$, and satisfies $S(u)+S(1-u)=1$.

For $0<\varepsilon<1/2$, define
\begin{align}
    b_\varepsilon(u)
    &:=S\qty(\frac{u}{\varepsilon})
       S\qty(\frac{1-u}{\varepsilon}),
    \notag\\
    B_\varepsilon(u)
    &:=\frac1{1-\varepsilon}\int_0^u b_\varepsilon(v)\,dv,
    \qquad 0\leq u\leq1.
    \label{eq:app_lower_bound_interpolation}
\end{align}
Since $\varepsilon<1/2$ and $S(u)=1$ for $u\geq1$, the window
function $b_\varepsilon(u)$ reduces to
\begin{align}
    b_\varepsilon(u)=
    \begin{cases}
        S\qty(\frac{u}{\varepsilon}),&0\leq u\leq\varepsilon,\\
        1,&\varepsilon\leq u\leq1-\varepsilon,\\
        S\qty(\frac{1-u}{\varepsilon}),&1-\varepsilon\leq u\leq1.
    \end{cases}
    \label{eq:app_lower_bound_window_pieces}
\end{align}
Thus it consists of two smooth ramps and a central interval on which
it equals one.

To compute its integral, first note that the symmetry
$S(u)+S(1-u)=1$ implies
\begin{align}
    2\int_0^1S(v)\,dv
    &=\int_0^1\qty[S(v)+S(1-v)]\,dv=1.
    \label{eq:app_lower_bound_step_integral}
\end{align}
Splitting the integral of $b_\varepsilon$ over the three intervals
and rescaling each endpoint interval to $[0,1]$ therefore gives
\begin{align}
    \int_0^1b_\varepsilon(v)\,dv
    &=\int_0^\varepsilon S\qty(\frac{v}{\varepsilon})\,dv
      +1-2\varepsilon
    \notag\\
    &\quad+\int_{1-\varepsilon}^1
        S\qty(\frac{1-v}{\varepsilon})\,dv
    \notag\\
    &=2\varepsilon\int_0^1S(v)\,dv+1-2\varepsilon
    \notag\\
    &=1-\varepsilon.
    \label{eq:app_lower_bound_interpolation_integral}
\end{align}
Consequently, $B_\varepsilon(0)=0$, $B_\varepsilon(1)=1$, and
$0\leq B_\varepsilon'\leq1/(1-\varepsilon)$.
All derivatives of $b_\varepsilon$ vanish at both endpoints, so
every positive-order derivative of $B_\varepsilon$ does as well.
This proves \cref{eq:lower_bound_interpolation_properties}.

The derivative budget determines a positive interpolation parameter in
every direction whose ratio exceeds $1/2$.
Indeed, for $a_\mu>1/2$, the choice in
\cref{eq:lower_bound_tile_counts} obeys
\begin{align}
    \frac{a_\mu}{2}\leq n_\mu<2a_\mu.
    \label{eq:app_lower_bound_partition_counts}
\end{align}
For $1/2<a_\mu<1$, this follows from $n_\mu=1$; for
$a_\mu\geq1$, use $\lfloor a_\mu\rfloor\geq a_\mu/2$.
If $J$ is the set of interpolated directions, choose, for example,
\begin{align}
    \varepsilon
    :=\min\qty{\frac1{10},\,
        \min_{\mu\in J}\frac12\qty(1-\frac{n_\mu}{2a_\mu})}.
    \label{eq:app_lower_bound_smoothing_choice}
\end{align}
Then $0<\varepsilon<1/2$ and
$n_\mu/[2(1-\varepsilon)]\leq a_\mu$ for every $\mu\in J$.
Use $J:=\qty{x,y}$ for tiles and only the direction with
$a_\mu>1/2$ for slabs.

\subsection{Tile construction}
\label{app:lower_bound_tiles}

For $a_x,a_y>1/2$, partition the torus into the half-open rectangles
\begin{align}
    I_{\mu,i}&:=[iw_\mu,(i+1)w_\mu),
    \notag\\
    \mathcal R_{ij}&:=I_{x,i}\times I_{y,j},
    \label{eq:app_lower_bound_tile_regions}
\end{align}
with $0\leq i<n_x$ and $0\leq j<n_y$.
On each rectangle, use the texture in
\cref{eq:lower_bound_polar_excursion,eq:lower_bound_tile_azimuth,eq:lower_bound_spherical_texture,eq:lower_bound_tile_texture}.
The half-open convention assigns each oracle-grid point to exactly
one region; the smooth texture extends across all boundaries.
The construction does not require these boundaries to coincide with
the oracle grid.

Smoothness follows from the flat endpoint behavior of the angular
interpolation.
At $s=0,1$, the vector equals the north pole, while at $s=1/2$
it equals the south pole.
All partial derivatives of positive total order of $\mathbf n_z$
with respect to $s$ and $t$ vanish at these values of $s$, so the
azimuth can switch at the south pole without creating a singularity.
At $t=0,1$, the azimuth is an integer multiple of $2\pi$, and every
positive-order partial derivative of $\mathbf n_z$ with respect to $t$
vanishes.
The textures for both bit values therefore match,
together with all mixed derivatives,
the common meridian $\mathbf m(\vartheta(s),0)$ at $t=0,1$.
These properties also ensure periodic matching across the torus seams.
At $k_x=0$ the texture is the north pole for every $k_y$, in
particular at the reference point $(0,0)$.

The angular derivatives directly control the Hamiltonian derivatives.
For the spherical texture of \cref{eq:lower_bound_spherical_texture},
\begin{align}
    \abs{\partial_\theta\mathbf m}&=1,
    &\abs{\partial_\phi\mathbf m}&=\abs{\sin\theta}.
    \label{eq:app_lower_bound_spherical_derivatives}
\end{align}
Away from the flat seams, the polar angle depends only on $k_x$
and the azimuth depends only on $k_y$ within each half-tile.
The bounds in \cref{eq:lower_bound_interpolation_properties} give
\begin{align}
    \abs{\partial_{k_x}\vartheta}
    &\leq\frac{2\pi}{w_x(1-\varepsilon)}
    =\frac{n_x}{1-\varepsilon},
    \notag\\
    \abs{\partial_{k_y}\varphi_{z_{ij}}}
    &\leq\frac{2\pi}{w_y(1-\varepsilon)}
    =\frac{n_y}{1-\varepsilon}.
    \label{eq:app_lower_bound_angular_derivatives}
\end{align}
Since $\norm{\mathbf a\cdot\bm\sigma}=\abs{\mathbf a}$ for
real $\mathbf a$, it follows that
\begin{align}
    \norm{\partial_{k_\mu}H_z}
    \leq\frac{\Delta_{\min}n_\mu}{2(1-\varepsilon)}
    \leq L_\mu.
    \label{eq:app_lower_bound_tile_derivatives}
\end{align}
Smoothness extends the same bounds to the seams, proving
\cref{eq:lower_bound_tile_derivatives}.

The ground-state Chern number is the degree of the texture with the
Berry-connection convention used here~\cite{qi_topological_2006}.
For any of the two-band Hamiltonians in
\cref{eq:lower_bound_hard_hamiltonian}, Pauli algebra gives
\begin{align}
    P_-&=\frac{I-\mathbf n_z\cdot\bm\sigma}{2},
    \label{eq:app_lower_bound_ground_projector}\\
    \Omega
    &=i\operatorname{Tr}\qty(P_-
        \qty[\partial_{k_x}P_-,\partial_{k_y}P_-])
    \notag\\
    &=\frac12\mathbf n_z\cdot\qty(
        \partial_{k_x}\mathbf n_z\times\partial_{k_y}\mathbf n_z).
    \label{eq:app_lower_bound_berry_curvature}
\end{align}
An inactive tile and the return half of every tile have no $k_y$
dependence and hence contribute zero curvature.
On the first half of an active tile,
\begin{align}
    \mathbf n_z\cdot\qty(
        \partial_{k_x}\mathbf n_z\times\partial_{k_y}\mathbf n_z)
    =\sin\vartheta\,
      (\partial_{k_x}\vartheta)(\partial_{k_y}\varphi).
    \label{eq:app_lower_bound_tile_density}
\end{align}
The polar angle runs from $0$ to $\pi$ and the azimuth from $0$
to $2\pi$, so the tile contributes
\begin{align}
    C_{ij}
    &=\frac1{4\pi}\int_0^\pi\sin\vartheta\,d\vartheta
                    \int_0^{2\pi}d\varphi
    =1
    \qquad(z_{ij}=1).
    \label{eq:app_lower_bound_tile_chern}
\end{align}
Adding the curvature integrals over the tiles yields
\begin{align}
    C(H_z)&=\sum_{i,j}z_{ij},
    \label{eq:app_lower_bound_chern_sum}\\
    M=n_xn_y&\geq\frac{a_xa_y}{4},
    \label{eq:app_lower_bound_tile_count}
\end{align}
where the count follows from \cref{eq:app_lower_bound_partition_counts}.
Changing one bit changes the texture only inside its tile; on shared
boundaries the two textures agree.
This proves all the required properties of the interior family.

For a negative defect, \cref{eq:lower_bound_signed_tile_azimuth}
reflects the active texture as
$\mathbf n_-=(n_{+,x},-n_{+,y},n_{+,z})$.
This orthogonal reflection preserves every derivative norm and reverses
the scalar triple product in \cref{eq:app_lower_bound_berry_curvature}.
Thus the Berry curvature changes sign and the region contributes $-1$.
Endpoint flatness and the common boundary meridian hold for azimuths
ending at either $+2\pi$ or $-2\pi$, so arbitrary signed and inactive
tiles join smoothly.
The unit texture still has gap $\Delta_{\min}$, norm
$\Delta_{\min}/2$, and the north-pole reference texture, hence the
same reference ground state $\ket1$.

\subsection{Slab construction}
\label{app:lower_bound_slabs}

Suppose $a_y=1/2$ and $a_x>1/2$.
Take the regions $\mathcal R_i:=I_{x,i}\times S^1$, where $S^1$
denotes the $2\pi$-periodic $k_y$ circle.
Use \cref{eq:lower_bound_slab_azimuth} with the same polar excursions
as in the tile construction.
The $k_x$ seams are smooth by endpoint flatness,
and the $k_y$ dependence
is smooth and periodic because it enters through $\cos(z_i k_y)$
and $\sin(z_i k_y)$.
On each first half-slab, $\abs{\partial_{k_y}\varphi}=z_i\leq1$,
while the return half has no $k_y$ dependence.
Thus \cref{eq:app_lower_bound_spherical_derivatives} implies
\begin{align}
    \norm{\partial_{k_y}H_z}
    \leq\frac{\Delta_{\min}}2\abs{\sin\vartheta}
    \leq\frac{\Delta_{\min}}2=L_y.
    \label{eq:app_lower_bound_slab_equality_direction}
\end{align}
The $k_x$ derivative obeys \cref{eq:app_lower_bound_tile_derivatives}
with the interpolation parameter chosen for $k_x$ alone.
The reference texture at $(0,0)$ is again the north pole.

Each active slab contributes one to the Chern number.
The first half has exactly the angular ranges used in
\cref{eq:app_lower_bound_tile_chern}, now with $\varphi=k_y$,
and the return half contributes zero.
Consequently,
\begin{align}
    C(H_z)&=\sum_i z_i,
    &M=n_x&\geq\frac{a_x}{2}=a_xa_y.
    \label{eq:app_lower_bound_slab_count}
\end{align}
Different bits have disjoint slab supports, and their textures agree
on every shared boundary.

For $a_x=1/2$ and $a_y>1/2$, use polar excursions along $k_y$
and set the azimuth on each first half-slab to $-z_j k_x$.
The coordinate exchange and azimuth reversal compensate in the
curvature orientation:
\begin{align}
    \mathbf n_z\cdot\qty(
        \partial_{k_x}\mathbf n_z\times\partial_{k_y}\mathbf n_z)
    &=-\sin\vartheta\,
        (\partial_{k_y}\vartheta)(\partial_{k_x}\varphi)
    \notag\\
    &=z_j\sin\vartheta\,\partial_{k_y}\vartheta.
    \label{eq:app_lower_bound_swapped_slab_orientation}
\end{align}
The contribution of each active slab is therefore still $+1$.
Interchanging the derivative estimates gives
$\norm{\partial_{k_x}H_z}\leq\Delta_{\min}/2=L_x$ and
$\norm{\partial_{k_y}H_z}\leq L_y$, with
$M=n_y\geq a_xa_y$ and the same reference ground state.

For a negative defect, take $\varphi:=-k_y$ when $a_y=1/2<a_x$,
and $\varphi:=+k_x$ when $a_x=1/2<a_y$, on the first half-slab;
retain $\varphi:=0$ on the return half.
In the swapped case, \cref{eq:app_lower_bound_swapped_slab_orientation}
gives
\begin{align}
    \mathbf n_-\cdot\qty(
        \partial_{k_x}\mathbf n_-\times\partial_{k_y}\mathbf n_-)
    =-\sin\vartheta\,\partial_{k_y}\vartheta.
    \label{eq:app_lower_bound_signed_slab_orientation}
\end{align}
The negative defect therefore contributes $-1$ in either slab construction.
As in the tile construction, the sign change reflects one texture component and
preserves the derivative bounds, smooth periodic matching, gap, norm,
and common reference ground state.

\subsection{Flattened QWZ construction at the corner}
\label{app:lower_bound_corner}

When $a_x=a_y=1/2$, consider the pair in
\cref{eq:lower_bound_corner_hamiltonian_0,eq:lower_bound_corner_hamiltonian_1}.
The normalization of its Bloch vector is nonsingular because
\begin{align}
    \abs{\mathbf d(\mathbf k)}^2
    =1+2(1+\cos k_x)(1+\cos k_y)\geq1.
    \label{eq:app_lower_bound_corner_norm}
\end{align}
Writing $\widehat{\mathbf d}:=\mathbf d/\abs{\mathbf d}$ gives
\begin{align}
    \partial_{k_\mu}\widehat{\mathbf d}
    &=\frac{\qty(I-\widehat{\mathbf d}\widehat{\mathbf d}^{\mathsf T})
                 \partial_{k_\mu}\mathbf d}{\abs{\mathbf d}},
    \notag\\
    \abs{\partial_{k_\mu}\widehat{\mathbf d}}
    &\leq\frac{\abs{\partial_{k_\mu}\mathbf d}}{\abs{\mathbf d}}
    \leq1,
    \label{eq:app_lower_bound_corner_derivatives}
\end{align}
since $\abs{\partial_{k_x}\mathbf d}
=\abs{\partial_{k_y}\mathbf d}=1$.
Both inputs therefore satisfy the derivative bounds
$\norm{\partial_{k_\mu}H_z}\leq\Delta_{\min}/2=L_\mu$.
Their eigenvalues are $\pm\Delta_{\min}/2$ everywhere, and
$\widehat{\mathbf d}(0,0)=\mathbf e_z$, so their gaps, norms,
and reference ground states coincide.

The Hamiltonian $H_1$ is obtained from
$H_{\rm QWZ}(\mathbf k;1)$ by multiplication by a positive scalar function.
This spectral flattening changes the eigenvalues but leaves the lower-band
projector unchanged.  The QWZ calibration in
\cref{app:qwz_calibration} therefore gives $C(H_1)=1$.
The constant projector of $H_0$ gives $C(H_0)=0$.
With $M=1$ and $\mathcal R_1:=X$, this pair satisfies
$C(H_z)=z_1$ and $M\geq a_xa_y/4$.
The pointwise difference between the two Hamiltonians has norm at
most $\Delta_{\min}$, as required for the common adversary argument.

\subsection{Common adversary estimate}
\label{app:chern_lower_bound_adversary}

We adapt the continuous-time adversary argument of
Refs.~\cite{farhi_quantum_2008,yonge-mallo_adversary_2011}
to the present spatially supported oracle family.
For all three constructions, the reference ground state at $(0,0)$
is always $\ket1$.
The preparation circuit, its inverse, and its controlled version
are the same for every input.
Thus reference-state preparation provides no input-dependent information.
Input-independent operations can be included in the driver of
\cref{eq:hamiltonian_oracle_evolution}.
Ancillas may purify any randomness or deferred measurements, so it
suffices to analyze pure algorithm states $\ket{\psi_z(t)}$ and
the progress function in \cref{eq:chern_progress_function}, with
$W(0)=0$.

Successful estimation imposes a uniform separation between the final
states of neighboring inputs.
For each $z$, let
\begin{align}
    I_z:=\qty(C(H_z)-\frac12,C(H_z)+\frac12).
    \label{eq:app_lower_bound_success_interval}
\end{align}
The intervals $I_z$ and $I_{z^{(j)}}$ are disjoint because their
centers differ by one.
If $E_z$ is the measurement effect for outputs in $I_z$ and
$\rho_z(T):=\ket{\psi_z(T)}\!\bra{\psi_z(T)}$, then
\begin{align}
    \operatorname{Tr}\qty(E_z\rho_z(T))&\geq1-\eta,
    \notag\\
    \operatorname{Tr}\qty(E_z\rho_{z^{(j)}}(T))&\leq\eta.
    \label{eq:app_lower_bound_measurement_probabilities}
\end{align}
The trace distance of these states is at least $1-2\eta$.
The pure-state trace-distance formula consequently gives
\begin{align}
    \abs{\braket*{\psi_z(T)}{\psi_{z^{(j)}}(T)}}
    \leq\beta_\eta:=2\sqrt{\eta(1-\eta)}<1.
    \label{eq:app_lower_bound_final_overlap}
\end{align}
This bounds the squared distance used by the progress function:
\begin{align}
    \norm{\ket{\psi_z(T)}-\ket{\psi_{z^{(j)}}(T)}}^2
    &=2-2\operatorname{Re}
        \braket*{\psi_z(T)}{\psi_{z^{(j)}}(T)}
    \notag\\
    &\geq2(1-\beta_\eta).
    \label{eq:app_lower_bound_neighbor_distance}
\end{align}
Substituting into \cref{eq:chern_progress_function} yields
\begin{align}
    W(T)
    &\geq\frac1{2^M}\sum_z\sum_{j=1}^M2(1-\beta_\eta)
    \notag\\
    &=2\qty[1-2\sqrt{\eta(1-\eta)}]M,
    \label{eq:app_lower_bound_required_progress_derivation}
\end{align}
which is \cref{eq:lower_bound_required_progress}.

To bound the growth rate, assign an address projector to each bit region
using the mesh vertices in \cref{eq:alg_mesh_points}:
\begin{align}
    \Pi_j:=I_{\mathrm c}\otimes
        \sum_{\mathbf k_{j_x,j_y}\in\mathcal R_j}
        \ket{j_x,j_y}\!\bra{j_x,j_y}\otimes I.
    \label{eq:app_lower_bound_region_projector}
\end{align}
Here $I_{\mathrm c}$ acts on the control qubit and the last
identity includes the data and workspace registers.
Use the tile or slab regions, with their half-open boundary convention,
and use $\Pi_1=I$ for the corner pair.
In every case,
\begin{align}
    \Pi_j\Pi_\ell&=0\quad(j\ne\ell),
    &\sum_{j=1}^M\Pi_j&\leq I.
    \label{eq:app_lower_bound_orthogonal_projectors}
\end{align}
Flipping bit $j$ changes the Hamiltonian only in $\mathcal R_j$.
Because the textures agree on shared boundaries, this support property
holds irrespective of how the construction partition meets the oracle grid.

The norm bound needed for each neighboring pair follows from the unit
length of the two textures:
\begin{align}
    \norm{H_z(\mathbf k)-H_{z^{(j)}}(\mathbf k)}
    &=\frac{\Delta_{\min}}2
        \abs{\mathbf n_z(\mathbf k)-\mathbf n_{z^{(j)}}(\mathbf k)}
    \notag\\
    &\leq\Delta_{\min}.
    \label{eq:app_lower_bound_pointwise_difference}
\end{align}
The sampled oracle is block diagonal in its address, and the control
factor $\ket{1}\!\bra{1}_{\mathrm c}$ in
\cref{eq:controlled_hamiltonian_oracle} has norm one.
Thus, for $\delta\widetilde{\mathcal O}_{z,j}
:=\widetilde{\mathcal O}_{H_z}-\widetilde{\mathcal O}_{H_{z^{(j)}}}$,
\begin{align}
    \delta\widetilde{\mathcal O}_{z,j}
    &=\Pi_j\delta\widetilde{\mathcal O}_{z,j}\Pi_j,
    \label{eq:app_lower_bound_oracle_support}\\
    \norm{\delta\widetilde{\mathcal O}_{z,j}}&\leq\Delta_{\min}.
    \label{eq:app_lower_bound_oracle_difference}
\end{align}
The input-independent block added in
\cref{eq:lower_bound_multiband_embedding} does not affect these estimates.

For a neighboring pair, write
\begin{align}
    d_{z,j}(t)
    :=\norm{\ket{\psi_z(t)}-\ket{\psi_{z^{(j)}}(t)}}^2.
    \label{eq:app_lower_bound_neighbor_distance_definition}
\end{align}
Substituting \cref{eq:hamiltonian_oracle_evolution} cancels the
input-independent driver and gives
\begin{align}
    \frac{d}{dt}d_{z,j}(t)
    =2g(t)\operatorname{Im}
        \mel{\psi_z(t)}{\delta\widetilde{\mathcal O}_{z,j}}{\psi_{z^{(j)}}(t)}.
    \label{eq:app_lower_bound_distance_derivative}
\end{align}
Defining the probability weight in region $j$ by
\begin{align}
    p_{z,j}(t):=\norm{\Pi_j\ket{\psi_z(t)}}^2,
    \label{eq:app_lower_bound_region_probability}
\end{align}
the support and norm bounds imply
\begin{align}
    \abs{\frac{d}{dt}d_{z,j}(t)}
    \leq2\Delta_{\min}\abs{g(t)}
        \sqrt{p_{z,j}(t)p_{z^{(j)},j}(t)}.
    \label{eq:app_lower_bound_neighbor_rate}
\end{align}
This bound applies to controlled queries and to either sign of $g(t)$,
so forward and inverse evolution have the same cost in the estimate.

Averaging over inputs and summing over the orthogonal regions bounds
the total rate independently of the number of bits:
\begin{align}
    \abs{\frac{dW}{dt}}
    &\leq\frac{2\Delta_{\min}\abs{g(t)}}{2^M}
        \sum_z\sum_{j=1}^M\sqrt{p_{z,j}p_{z^{(j)},j}}
    \notag\\
    &\leq\frac{2\Delta_{\min}\abs{g(t)}}{2^M}
        \sum_z\sum_{j=1}^M p_{z,j}
    \notag\\
    &\leq2\Delta_{\min}\abs{g(t)}.
    \label{eq:app_lower_bound_progress_rate}
\end{align}
The second line uses the arithmetic--geometric mean inequality and
$\sum_zp_{z^{(j)},j}=\sum_zp_{z,j}$, since flipping one bit
permutes all inputs.
The last line uses $\sum_jp_{z,j}\leq1$, which follows from
\cref{eq:app_lower_bound_orthogonal_projectors}.
Input-independent gates preserve all pairwise state distances, so
inserting exact gates, including register swaps and reference-state
preparation, between oracle evolutions does not change the rate estimate.

Finally, integrate \cref{eq:app_lower_bound_progress_rate} using
$W(0)=0$ and the oracle-time definition in
\cref{eq:oracle_time_definition}:
\begin{align}
    W(T)\leq2\Delta_{\min}\int_0^Tdt\,\abs{g(t)}
    =2\Delta_{\min}T_{\mathrm{oracle}}.
    \label{eq:app_lower_bound_integrated_rate}
\end{align}
Combining this with
\cref{eq:app_lower_bound_required_progress_derivation} and
$M\geq a_xa_y/4$ gives
\begin{align}
    T_{\mathrm{oracle}}
    &\geq\frac{1-2\sqrt{\eta(1-\eta)}}{\Delta_{\min}}M
    \notag\\
    &\geq c_\eta\frac{L_xL_y}{\Delta_{\min}^3}.
    \label{eq:app_lower_bound_runtime}
\end{align}
This proves \cref{eq:chern_lower_bound_theorem_result} for the
tile, slab, and corner families with the same constant $c_\eta$.

\subsubsection{\texorpdfstring{Binary-Chern promise}{Binary-Chern promise}}
\label{app:lower_bound_binary_promise}

For the binary-Chern tile and slab
families of
\cref{sec:lower_bound_binary_family}, use the two central Hamming layers
\begin{align}
    X_m&:=\qty{z\in\qty{0,1}^{M_{\mathrm{var}}}:|z|=m},
    \notag\\
    Y_{m+1}&:=\qty{z\in\qty{0,1}^{M_{\mathrm{var}}}:|z|=m+1},
    \label{eq:app_lower_bound_binary_layers}\\
    R&:=\qty{(z,z+e_j):z\in X_m,\ z_j=0},
    \label{eq:app_lower_bound_binary_relation}
\end{align}
where $e_j$ is the standard basis vector with a one only in coordinate
$j$, and $M_{\mathrm{var}}=2m+1$.
Each $z\in X_m$ has $M_{\mathrm{var}}-m=m+1$ zero coordinates, each defining one
neighbor in $Y_{m+1}$.
Each $w\in Y_{m+1}$ has $m+1$ one coordinates, each defining one
neighbor $w-e_j\in X_m$.
Thus $R$ is a bipartite relation regular of degree $m+1$ on both sides,
and
\begin{align}
    K:=|X_m|=\binom{M_{\mathrm{var}}}{m}
    &=\binom{M_{\mathrm{var}}}{m+1}=|Y_{m+1}|,
    \notag\\
    |R|&=(m+1)K.
    \label{eq:app_lower_bound_binary_edge_count}
\end{align}

As in \cref{eq:chern_progress_function}, average over the $2K$ allowed
inputs and sum over their neighbors.
Each edge of $R$ is then counted twice, giving
\begin{align}
    W_{\mathrm{bin}}(t):=\frac1K\sum_{(z,w)\in R}
        \norm{\ket{\psi_z(t)}-\ket{\psi_w(t)}}^2.
    \label{eq:app_lower_bound_binary_progress}
\end{align}
The initial state is common to all allowed inputs, so
$W_{\mathrm{bin}}(0)=0$.
For every $(z,w)\in R$, \cref{eq:lower_bound_binary_chern_values}
gives $C(H_z)=0$ and $C(H_w)=1$.
Their success intervals are disjoint, so the measurement-effect and
pure-state trace-distance argument in
\cref{eq:app_lower_bound_measurement_probabilities,eq:app_lower_bound_final_overlap}
gives $|\braket*{\psi_z(T)}{\psi_w(T)}|\leq\beta_\eta$.
Consequently, since $|R|=(m+1)K$,
\begin{align}
    W_{\mathrm{bin}}(T)
    &\geq\frac1K\sum_{(z,w)\in R}2(1-\beta_\eta)
    \notag\\
    &=2\qty[1-2\sqrt{\eta(1-\eta)}](m+1).
    \label{eq:app_lower_bound_binary_required_progress}
\end{align}

Index only the variable regions by $j=1,\ldots,M_{\mathrm{var}}$ and use their
projectors $\Pi_j$ from \cref{eq:app_lower_bound_region_projector}.
They are mutually orthogonal and satisfy $\sum_{j=1}^{M_{\mathrm{var}}}\Pi_j\leq I$;
the fixed negative and inactive regions need not be included.
For a pair $(z,w)\in R$ with $w=z+e_j$,
the two Hamiltonians agree outside variable
region $j$, including all fixed negative defects.
Thus the same support and norm argument gives
\begin{align}
    \delta\widetilde{\mathcal O}_{z,w}
    &:=\widetilde{\mathcal O}_{H_z}-\widetilde{\mathcal O}_{H_w}
      =\Pi_j\delta\widetilde{\mathcal O}_{z,w}\Pi_j,
    \notag\\
    \norm{\delta\widetilde{\mathcal O}_{z,w}}&\leq\Delta_{\min}.
    \label{eq:app_lower_bound_binary_oracle_difference}
\end{align}
With $d_{z,w}:=\norm{\ket{\psi_z}-\ket{\psi_w}}^2$ and
$p_{v,j}:=\norm{\Pi_j\ket{\psi_v}}^2$ for either endpoint $v$,
driver cancellation and Cauchy--Schwarz yield
\begin{align}
    \abs{\frac{d}{dt}d_{z,w}}
    &=2\abs{g(t)}\abs{\operatorname{Im}
        \mel{\psi_z}{\delta\widetilde{\mathcal O}_{z,w}}{\psi_w}}
    \notag\\
    &\leq2\Delta_{\min}\abs{g(t)}\sqrt{p_{z,j}p_{w,j}}.
    \label{eq:app_lower_bound_binary_edge_rate}
\end{align}

Write $j(z,w)$ for the unique coordinate changed on edge $(z,w)$.
A sum over $(z,w)\in R$ can be written as a sum over
$z\in X_m$ and coordinates $j$ with
$z_j=0$,
setting $w=z+e_j$.
Equivalently, it can be written as a sum over
$w\in Y_{m+1}$ and coordinates $j$ with
$w_j=1$,
setting $z=w-e_j$.
Therefore the two endpoint sums obey
\begin{align}
    \sum_{(z,w)\in R}p_{z,j(z,w)}
    &=\sum_{z\in X_m}\sum_{j:z_j=0}p_{z,j}
    \notag\\
    &\leq\sum_{z\in X_m}\sum_{j=1}^{M_{\mathrm{var}}}p_{z,j}
    \leq K,
    \notag\\
    \sum_{(z,w)\in R}p_{w,j(z,w)}
    &=\sum_{w\in Y_{m+1}}\sum_{j:w_j=1}p_{w,j}
    \notag\\
    &\leq\sum_{w\in Y_{m+1}}\sum_{j=1}^{M_{\mathrm{var}}}p_{w,j}
    \leq K.
    \label{eq:app_lower_bound_binary_endpoint_sums}
\end{align}
The last inequalities use $\sum_jp_{v,j}\leq1$ separately for each
normalized input state.
Applying the arithmetic--geometric mean inequality edge by edge now gives
\begin{align}
    &\sum_{(z,w)\in R}\sqrt{p_{z,j(z,w)}p_{w,j(z,w)}}
    \notag\\
    &\quad\leq\frac12\sum_{(z,w)\in R}
        \qty[p_{z,j(z,w)}+p_{w,j(z,w)}]\leq K.
    \label{eq:app_lower_bound_binary_probability_sum}
\end{align}

Combining the edge rate with this bound proves
\begin{align}
    \abs{\frac{dW_{\mathrm{bin}}}{dt}}
    &\leq\frac{2\Delta_{\min}\abs{g(t)}}{K}
        \sum_{(z,w)\in R}\sqrt{p_{z,j(z,w)}p_{w,j(z,w)}}
    \notag\\
    &\leq2\Delta_{\min}\abs{g(t)}.
    \label{eq:app_lower_bound_binary_progress_rate}
\end{align}
Input-independent gates preserve these distances, as in the proof of
\cref{thm:chern_lower_bound}.
Integrating the rate and using
\cref{eq:app_lower_bound_binary_required_progress} yields
\begin{align}
    W_{\mathrm{bin}}(T)
    &\leq2\Delta_{\min}T_{\mathrm{oracle}},
    \notag\\
    T_{\mathrm{oracle}}
    &\geq\frac{1-2\sqrt{\eta(1-\eta)}}{\Delta_{\min}}(m+1).
    \label{eq:app_lower_bound_binary_runtime}
\end{align}
Let $r\in\qty{0,1,2}$ be the remainder when $M-1$ is divided by three.
Since $m=\lfloor(M-1)/3\rfloor$, we have $M=3m+r+1$ and hence
$m+1-M/3=(2-r)/3\geq0$.
Together with $M\geq a_xa_y/4$, this proves
\cref{eq:binary_chern_lower_bound} for both tiles and slabs,
including $M=1,2,3$, for which $m=0$.

At the corner, the flattened-QWZ pair used in the proof of
\cref{thm:chern_lower_bound} already has Chern numbers zero and one.
The lower bound established for this pair in
\cref{eq:chern_lower_bound_theorem_result} is stronger than
\cref{eq:binary_chern_lower_bound}, completing the proof of
\cref{cor:binary_chern_lower_bound}.

\section{QWZ Calibration for the Hardness Construction}
\label{app:qwz_calibration}

We verify the phase and gap estimates used in the \(\BQP\) reduction for the
orientation-reversed QWZ model introduced in \cref{sec:qwz_preliminary}.
The Dirac-point evaluation below is standard for two-band Chern
insulators~\cite{sticlet_geometrical_2012}.  We include the short calculation
to fix our orientation and sign conventions and to obtain the uniform gap
margin needed in the reduction.
The Hamiltonian is given in \cref{eq:qwz_model_hardness}.  We use the
Chern-number convention of
\cref{eq:chern_number_def,eq:berry_curvature_projector_form}, with orientation
\(dk_x\wedge dk_y\).  Its Bloch vector is
\[
  \mathbf d(\mathbf{k};u)
  =\bigl(\sin k_x,-\sin k_y,u+\cos k_x+\cos k_y\bigr).
\]
The eigenvalues are \(\pm\|\mathbf d\|\).  A band touching requires
\(\sin k_x=\sin k_y=0\), so it can occur only at the four points
\(\mathbf{k}_*\in\{0,\pi\}^2\), when the corresponding Dirac mass
\(m_*:=d_z(\mathbf{k}_*;u)\) vanishes.

With the orientation \(dk_x\wedge dk_y\) and the projector convention of
the main text, the lower-band projector is
\(P_-=(I-\widehat{\mathbf d}\cdot\sigma)/2\), where
\(\widehat{\mathbf d}:=\mathbf d/\|\mathbf d\|\).  Pauli algebra gives
\[
  C[P_-]
  =\frac1{4\pi}\int_{\mathbb T^2}
  \widehat{\mathbf d}\cdot
  \bigl(\partial_{k_x}\widehat{\mathbf d}\times
        \partial_{k_y}\widehat{\mathbf d}\bigr)\,d^2\mathbf{k}.
\]
Thus \(C[P_-]\) is the degree of \(\widehat{\mathbf d}\), with no additional minus
sign.  Write $\mathbf q:=\mathbf{k}-\mathbf{k}_*$ near a Dirac point and
expand to first order in \(\mathbf q\).  The Bloch vector then has the local
form
\[
  \mathbf d(\mathbf{k}_*+\mathbf q)
  =(v_xq_x,v_yq_y,m_*)
  +O(\|\mathbf q\|^2),
\]
where
$v_x:=\cos k_{x,*}$, $v_y:=-\cos k_{y,*}$, and
\(\chi_*:=\operatorname{sgn}(v_xv_y)\).  The masses and chiralities are
\[
  \begin{array}{c|c|c}
    \mathbf{k}_* & m_* & \chi_* \\ \hline
    (0,0)       & u+2 & -1 \\
    (\pi,0)     & u   & +1 \\
    (0,\pi)     & u   & +1 \\
    (\pi,\pi)   & u-2 & -1
  \end{array}
\]
Dropping the quadratic terms gives the local massive-Dirac model.  Its
curvature integrates to
\[
  \begin{aligned}
  &\Omega_*(\mathbf q)
  =\frac{m_*v_xv_y}
  {2\bigl(v_x^2q_x^2+v_y^2q_y^2+m_*^2\bigr)^{3/2}},\\
  \frac1{2\pi}&\int_{\mathbb R^2}\Omega_*(\mathbf q)\,d^2\mathbf{q}
  =\frac12\chi_*\operatorname{sgn}(m_*).
  \end{aligned}
\]
When \(m_*\) changes from negative to positive, the corresponding change in
the lattice Chern number is
\[
  \Delta C_*
  =\frac12\chi_*(+1)-\frac12\chi_*(-1)
  =\chi_*.
\]
We use only this integer change across the gap closing.  An isolated continuum
cone is not assigned a half-integer lattice Chern number.  For \(u>2\),
\(d_z>0\) everywhere and the degree is zero.  Applying these changes at
\(u=2,0,-2\) can be made explicit as follows.  The Chern number is constant
while the gap remains open.  Starting from the trivial region \(u>2\) and
decreasing \(u\), each mass changes from positive to negative, so its
contribution to the change is \(-\chi_*\).  Therefore
\begin{align}
  C_{\rm QWZ}(u>2)&=0,\notag\\
  C_{\rm QWZ}(0<u<2)
    &=0-\chi_{(\pi,\pi)}=1,\notag\\
  C_{\rm QWZ}(-2<u<0)
    &=1-\chi_{(\pi,0)}-\chi_{(0,\pi)}=-1,\notag\\
  C_{\rm QWZ}(u<-2)
    &=-1-\chi_{(0,0)}=0.
\end{align}
At \(u=-2,0,2\) the gap closes, so the band Chern number is not defined.
Equivalently, the expression
\[
  \frac12\sum_*\chi_*\operatorname{sgn}(m_*)
\]
vanishes for \(u>2\) and has exactly the same integer changes at every gap
closing.  It therefore equals the lattice Chern number in every gapped
region.  Substituting the masses and chiralities from the table gives
\[
  C_{\rm QWZ}(u)
  =-\frac12\left[
    \operatorname{sgn}(u+2)-2\operatorname{sgn}(u)
    +\operatorname{sgn}(u-2)
  \right]
\]
whenever the band is isolated.  In particular, \(0<u<2\) gives \(C=1\),
whereas \(u>2\) gives \(C=0\).

The same four points give the exact gap margin.  Setting
\(a:=\cos k_x\) and \(b:=\cos k_y\), we have
\[
  \|\mathbf d(\mathbf{k};u)\|^2
  =2+u^2+2u(a+b)+2ab.
\]
This is affine in each of \(a,b\in[-1,1]\), so its minimum is attained
at a corner.  Hence
\[
  \min_{\mathbf{k}}\|\mathbf d(\mathbf{k};u)\|
  =\min\bigl\{|u+2|,|u|,|u-2|\bigr\}.
\]
For \(\ueff(\pacc)=3-2\pacc\), the no sector has
\(\ueff\in[7/3,3]\), and the yes sector has \(\ueff\in[1,5/3]\).
Both are at least \(c_0=1/3\) from the nearest gap closing, so the
unscaled two-band separation is at least \(2c_0\), as used in
\cref{sec:gap_chern_transfer}.

\bibliography{ref}
\end{document}